\newtheorem{thm}{Theorem}[section]
\newtheorem{cor}[thm]{Corollary}
\newtheorem{lem}[thm]{Lemma}
\newtheorem{prop}[thm]{Proposition}
\newtheorem{prop-defn}[thm]{Proposition/Definition}
\newtheorem{defn}[thm]{Definition}
\newtheorem{rem}[thm]{Remark}
\numberwithin{equation}{section}
\newenvironment{proof}{\noindent \emph{Proof.}}{\hspace{\stretch{1}}$\Box$}
\newcommand{\parderv}[2] {\frac{\partial#1}{\partial#2}}
\newcommand{\mcK} {\mathcal{K}}
\newcommand{\mcL} {\mathcal{L}}
\newcommand{\mcM} {\mathcal{M}}
\newcommand{\mcN} {\mathcal{N}}
\newcommand{\mcS} {\mathcal{S}}
\newcommand{\mcU} {\mathcal{U}}
\newcommand{\dd} {\mathrm{d}}
\newcommand{\ii} {\mathrm{i}}
\newcommand{\ind} {\indices}
\newcommand{\lb} [1] {{\left[ #1 \right. }}
\newcommand{\rb} [1] {{\left. #1 \right] }}
\newcommand{\Rho} {\mathrm{P}}
\newcommand{\SL} {\mathrm{SL}}
\newcommand{\SO} {\mathrm{SO}}
\newcommand{\SU} {\mathrm{SU}}
\newcommand{\Tgt} {\mathrm{T}}
\newcommand{\CP} {\mathbb{CP}}
\newcommand{\R} {\mathbb{R}}
\newcommand{\C} {\mathbb{C}}
\newcounter{mnotecount}[section]
\renewcommand{\themnotecount}{\thesection.\arabic{mnotecount}}
\newcommand{\mnote}[1]
{\protect{\stepcounter{mnotecount}}$^{\mbox{\footnotesize
$
\bullet$\themnotecount}}$ \marginpar{
\raggedright\tiny\em
$\!\!\!\!\!\!\,\bullet$\themnotecount: #1} }
\begin{document}
\title{A Goldberg-Sachs theorem in dimension three}
\author{Pawe\l~ Nurowski\footnote{Centrum Fizyki Teoretycznej PAN 
Al. Lotnik\'{o}w 32/46 
02-668 Warszawa, Poland. Email: \texttt{nurowski@cft.edu.pl} } 
\, \& Arman Taghavi-Chabert\footnote{{Masaryk University, Faculty of Science, Department of Mathematics and Statistics, Kotl\'{a}\v{r}sk\'{a} 2, 611 37 Brno, Czech Republic. Email: \texttt{taghavia@math.muni.cz}}
}}
\date{}

\maketitle

\begin{abstract}
We prove a Goldberg-Sachs theorem in dimension three. To be precise, given a three-dimensional Lorentzian manifold satisfying the topological massive gravity equations, we provide necessary and sufficient conditions on the tracefree Ricci tensor for the existence of a null line distribution whose orthogonal complement is integrable and totally geodetic. This includes, in particular, Kundt spacetimes that are solutions of the topological massive gravity equations.
\end{abstract}

\section{Introduction}
The classical Goldberg-Sachs theorem \cite{Goldberg2009} states that \emph{a four-dimensional Ricci-flat Lorentzian manifold $(\mcM,\bm{g})$ admits a shear-free congruence of null geodesics if and only if its  Weyl tensor is algebraically special}. Here, the property of being algebraically special is based on the Petrov classification of the Weyl tensor \cite{Petrov2000}. Since its original publication in 1962 the theorem has been generalised along two main directions:
\begin{itemize}
\item First, it was shown to hold for \emph{Einstein} Lorentzian manifolds, i.e. $R_{ab}=\Lambda \, g_{ab}$ for some constant $\Lambda$, and more general energy momentum tensors, such as $R_{ab}=\Lambda \, g_{ab}+\Phi \,k_a k_b$, for some function $\Phi$ and $1$-form $k_a$ with $k_a k_b g^{ab}=0$. Even weaker conditions involving the Cotton tensor
have been formulated  \cites{Kundt1962,Robinson1963} for which the theorem holds too, thereby highlighting its \emph{conformal invariance}.

\item Second, the theorem admits versions in any metric signatures  \cites{Przanowski1983,Nurowski1993,Nurowski1996,Apostolov1997,Nurowski2002,Gover2011}, providing, among others, an interesting result in four-dimensional Riemannian geometry stating that \emph{a four-dimensional Einstein Riemannian manifold locally admits a Hermitian structure if and only if its Weyl tensor is algebraically special}. The key to the understanding of this generalisation is the fact that in four dimensions, a shear-free congruence of null geodesics and a Hermitian structure are both equivalent to an \emph{integrable totally null complex $2$-plane distribution}. The distinction between them is made by different reality structures. Thus, the Goldberg-Sachs theorem relates the existence of integrable null $2$-plane distributions to the algebraic speciality of the  Weyl tensor.
\end{itemize}

The recent interest in solutions of Einstein equations in higher dimensions has generated much research into the generalisation of the Petrov classification of the Weyl tensor and the Goldberg-Sachs theorem to  higher dimensions. One approach to the problem, advocated by \cites{Hughston1988,Nurowski2002}, is to consider an \emph{(almost) null structure}, i.e. a totally null complex $m$-plane distribution, on a $2m$-dimensional (pseudo-)Riemannian manifold. Their importance in higher-dimensional black holes was highlighted in \cite{Mason2010}. Motivated by the conformal invariance and the underlying complex geometry of the theorem in dimension four, one of the authors (AT-C) proved a Goldberg-Sachs theorem in dimension five in \cite{Taghavi-Chabert2011} and higher in \cite{Taghavi-Chabert2012}. Particularly relevant here is the version of the theorem in dimension $2m+1$, in which an almost null structure is also defined to be a totally null complex $m$-plane distribution $\mcN$, say. The difference now is that $\mcN$ has an orthogonal complement $\mcN^\perp$ of rank $m+1$, and the crucial point, here, is that the theorem of \cite{Taghavi-Chabert2012} gives sufficient, conformally invariant, conditions on the Weyl tensor and the Cotton tensor for the integrability of \emph{both $\mcN$ and $\mcN^\perp$}, i.e.
$[ \Gamma ( \mcN ) , \Gamma ( \mcN ) ] \subset \Gamma ( \mcN )$ and $[ \Gamma ( \mcN^\perp ) , \Gamma ( \mcN^\perp ) ] \subset \Gamma ( \mcN^\perp )$. More refined algebraic classifications of the curvature tensors depending on the concept of almost null structure can be found in \cites{Taghavi-Chabert2012a,Taghavi-Chabert2013,Taghavi-Chabert2014} inspired by \cite{Jeffryes1995}. An alternative approach to the classification of the Weyl tensor of higher-dimensional Lorentzian manifolds is given in \cite{Coley2004}, and a Goldberg-Sachs theorem in this setting has been given in \cites{Durkee2009,Ortaggio2012a,Ortaggio2013a}.

The aim of this paper is to consider yet another generalisation, by formulating the theorem in \emph{three dimensions}. A priori, one would expect such a putative Goldberg-Sachs theorem to follow the same lines as in four and higher dimensions. However, there are a number of features of (pseudo-)Riemannian geometry specific to dimension three that prevent such a straightforward generalisation.
\begin{itemize}
\item First, there is no Weyl tensor.
\item Then, Einstein metrics are necessarily of constant curvature.
\item Finally, an almost null structure (i.e. a totally null (complex) line distribution) $\mcN$ is always integrable, and the (conformally invariant) condition that its orthogonal complement $\mcN^\perp$ be integrable too does not impose any constraint on the curvature.  Related to this is the fact that congruences of null geodesics are necessarily shear-free.
\end{itemize}
To remedy these shortcomings, one is led to seek stronger conditions that must depend on the \emph{metric} rather than the conformal structure of our manifold. To this end, we shall exploit the following special features of three-dimensional (pseudo-)Riemannian geometry.
\begin{itemize}
\item From an algebraic point of view, the tracefree Ricci tensor $\Phi_{ab}$ in dimension three behaves in the same way as the (anti-)self-dual Weyl tensor in dimension four, since they both belong to a five-dimensional irreducible (complex) representation of $\SL(2,\C)$ or any of its real forms. This leads to a notion of \emph{algebraically special} tracefree Ricci tensors in dimension three analogous to the one on the (anti-)self-dual Weyl tensor in dimension four, and a notion of \emph{multiple principal null structure}, i.e. a preferred (complex) null line distribution.

\item The Cotton tensor $A_{abc}$ can be Hodge-dualised to yield a tracefree symmetric tensor $(*A)_{ab}$ of valence $2$, and must then belong to the same (complex) representation as $\Phi_{ab}$.

\item These properties allow us to weaken the Einstein equations to the equations governing \emph{topological massive gravity}, \cites{Deser1982}, which relate $\Phi_{ab}$ and $(*A)_{ab}$ as
\begin{align*}
\Phi_{ab} & = \frac{1}{m} (*A)_{ab} \, , &
R & = 6 \, \Lambda = \mbox{constant} \, ,
\end{align*}
where $\Lambda$ is the cosmological constant and $m$ is a `mass' parameter. In analogy to the Einstein equations, these equations can also be written as
\begin{align*}
R_{ab} - \frac{1}{2} g_{ab} R + \Lambda \, g_{ab} - \frac{1}{m} (*A)_{ab} & = 0 \, .
\end{align*}

\item There is a natural \emph{non-conformally invariant} condition that a (multiple principal) null structure $\mcN$ can satisfy, namely that not only $\mcN$ and $\mcN^\perp$ be integrable, but that $\mcN^\perp$ be \emph{totally geodetic}, i.e. $\bm{g} ( \nabla_{\bm{X}} \bm{Y} , \bm{Z} ) = 0$ for all $\bm{X}, \bm{Y} \in \Gamma ( \mcN^\perp )$, $\bm{Z} \in \Gamma(\mcN)$.
\end{itemize}

Reality conditions imposed on the top of these features will also yield various geometric interpretations of a null structure in terms of congruences of \emph{real} curves. With these considerations in mind, we shall prove ultimately the following theorem.
\begin{thm}\label{thm-GS-intro}
Let $(\mcM, \bm{g})$ be a three-dimensional Lorentzian manifold satisfying the topological massive gravity equations. Then the tracefree Ricci tensor  is algebraically special if and only if $(\mcM,\bm{g})$ admits a divergence-free congruence of null geodesics (i.e. it is a Kundt spacetime) or a shear-free congruence of timelike geodesics.
\end{thm}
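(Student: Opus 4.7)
The plan is to work in a two-component spinor formalism adapted to three-dimensional Lorentzian geometry, in which a vector $V^a$ is encoded as a symmetric bispinor $V^{AB}=V^{(AB)}$ and a complex null line factorises as $\alpha^A\alpha^B$ for some spinor $\alpha^A$. In this formalism the tracefree Ricci tensor and the Hodge-dualised Cotton tensor both become totally symmetric $4$-spinors $\Phi_{ABCD}$, $(*A)_{ABCD}$, which are sections of the same rank-$5$ irreducible bundle, so that the topological massive gravity equations reduce to
\begin{equation*}
(*A)_{ABCD} = m\,\Phi_{ABCD}\,,
\end{equation*}
together with $\RicSc$ constant. The algebraic speciality of $\Phi_{ab}$ then amounts to the quartic $\Phi_{ABCD}\alpha^A\alpha^B\alpha^C\alpha^D$ in $\alpha^A$ having a repeated root $\alpha^A$, equivalently $\Phi_{ABCD}\alpha^B\alpha^C\alpha^D=0$, or $\Phi_{ABCD}=\alpha_{(A}\alpha_B\mu_{CD)}$ for some symmetric $\mu_{CD}$.

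Next I would translate the geometric hypothesis that $\mcN^\perp$ be integrable and totally geodetic into a differential equation on the principal spinor $\alpha^A$. A short computation identifies this condition with the requirement that $\alpha^B\nabla_{AB}\alpha_C$ be a multiple of $\alpha_A\alpha_C$. Once reality is imposed, this splits into the two advertised cases: if the repeated root $\alpha^A$ is real, then $\alpha^A\alpha^B$ is a real null vector and the above equation recovers the Kundt conditions (null geodesic and divergence-free, shear-freeness being automatic in dimension three); if the repeated root is genuinely complex, then $\alpha^A\bar\alpha^B+\bar\alpha^A\alpha^B$ is a real timelike vector and the same equation reproduces the shear-free timelike geodesic condition.

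The bridge between the algebraic condition on $\Phi$ and this differential condition on $\alpha^A$ is the spinor form of the contracted Bianchi identity under constant $\RicSc$, which yields $\nabla^{AB}\Phi_{ABCD}=0$, together with the standard identification of the dualised Cotton tensor as the curl $(*A)_{ABCD}\propto\nabla_{(A}{}^E\Phi_{BCD)E}$. The TMG equation therefore takes the first-order form
\begin{equation*}
\nabla_{(A}{}^E\Phi_{BCD)E} = m\,\Phi_{ABCD}\,.
\end{equation*}
For the forward direction I would substitute the factorised form $\Phi_{ABCD}=\alpha_{(A}\alpha_B\mu_{CD)}$ into this system and contract with $\alpha^B\alpha^C\alpha^D$. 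Differentiating the algebraic identity $\Phi_{BCDE}\alpha^B\alpha^C\alpha^D=0$ by the Leibniz rule trades each term of the form $(\nabla_{AE}\Phi_{BCDE})\alpha^B\alpha^C\alpha^D$ for a corresponding term $\Phi_{BCDE}\alpha^B\alpha^C\nabla_{AE}\alpha^D$, after which the TMG equation forces $\alpha^B\nabla_{AB}\alpha_C$ into the required collinear form. For the converse, one inserts the differential condition on $\alpha^A$ into the same TMG system and reads off $\Phi_{ABCD}\alpha^B\alpha^C\alpha^D=0$.

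The main technical obstacle is the careful bookkeeping of the total symmetrisation on the left-hand side of the displayed TMG equation and the clean extraction of the scalar coefficient of $\alpha_A\alpha_C$ from the undetermined rescaling freedom $\alpha^A\mapsto f\alpha^A$. A secondary subtlety is the correct treatment of the reality splitting of the conclusion along the real structure of $\Spin(2,1)$, in order to recover precisely the dichotomy between Kundt spacetimes and shear-free timelike geodesic congruences, as well as ensuring that no spurious trace contributions from the Schouten tensor survive when $\RicSc$ is merely constant rather than zero.
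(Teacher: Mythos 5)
Your setup coincides with the paper's own spinorial treatment in Appendix \ref{sec-GS-spinor}: the identification of $\Phi_{ab}$ and $(*A)_{ab}$ with totally symmetric $4$-spinors, the first-order form $\nabla\ind{_{(A}^E}\Phi\ind{_{BCD)E}}\propto\Phi_{ABCD}$ of the TMG equations once $R$ is constant, and the translation of co-geodesy into a differential condition on the principal spinor. Two remarks on the forward direction. First, the correct projectively invariant condition is $\xi^B\xi^C\nabla_{AB}\xi_C=0$, i.e.\ $\xi^B\nabla_{AB}\xi_C=\eta_A\xi_C$ for \emph{some} $\eta_A$; your stronger requirement that $\alpha^B\nabla_{AB}\alpha_C$ be a multiple of $\alpha_A\alpha_C$ additionally forces the gauge-dependent coefficient $\epsilon$ to vanish, which is harmless only because of the rescaling freedom you mention. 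Second, your Leibniz argument leaves you with $3\phi\,\xi^B\xi^C\nabla_{AB}\xi_C=0$ where $\Phi_{ABCD}\xi^C\xi^D=\phi\,\xi_A\xi_B$, so you must assume $\phi\neq0$, i.e.\ that the Petrov type does not degenerate further; the degenerate types III, N and D each require their own (parallel) argument, which is why the paper splits this half into Theorems \ref{thm-GS-II-gen}--\ref{thm-GS-D-gen} and assumes constancy of type.

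The genuine gap is the converse, which you dispose of in one sentence: ``one inserts the differential condition on $\alpha^A$ into the same TMG system and reads off $\Phi_{ABCD}\alpha^B\alpha^C\alpha^D=0$.'' This cannot work at first order. To begin with, the fact that a co-geodetic $\mcN$ is even principal, $\Phi_{ABCD}\xi^A\xi^B\xi^C\xi^D=0$, does not come from the TMG system at all but from the spinor Ricci identity (Proposition \ref{prop-integrability-cond}). Granting that, one only knows $\Phi_{ABCD}\xi^B\xi^C\xi^D=\phi\,\xi_A$ for some function $\phi$, and contracting the first-order TMG system with $\xi^B\xi^C\xi^D$ while using co-geodesy merely produces a transport equation for $\phi$ along $\mcN$ --- it does not force $\phi=0$. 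Worse, under TMG with constant $R$ the condition $\xi^B\xi^C\xi^D\nabla\ind{_A^E}\Phi_{BCDE}=0$ is literally equivalent to the desired conclusion $\Phi_{ABCD}\xi^B\xi^C\xi^D=0$, so ``reading it off'' is circular. The paper's proof of this half (Theorem \ref{thm-GS-hard} and its spinorial counterpart in Appendix \ref{sec-GS-spinor}) is genuinely second order: one differentiates the Bianchi/Cotton relations once more, commutes derivatives via the Ricci identity, and arrives at an identity of the schematic form $40(\Phi_1)^2=\delta A_0-D(A_1+3DS)+\cdots$, whose right-hand side vanishes under the hypotheses, whence $\Phi_1=0$; equivalently, one writes $\xi^B\nabla_{AB}\ln\phi=\alpha_A$ and shows that its integrability condition forces $\Phi_{ABCD}\xi^B\xi^C\xi^D=0$, contradicting $\phi\neq0$. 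This second-order step is the heart of any Goldberg--Sachs theorem and is absent from your proposal.
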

In fact, it is shown in reference \cite{Chow2010} that a Kundt spacetime that is also a solution of the topological massive gravity equations must be algebraically special, while the converse is left as an open problem. Theorem \ref{thm-GS-intro} thus gives an answer to this question.

The strategy adapted in this paper is to use a Newman-Penrose formalism and the Petrov classification of the tracefree Ricci tensor in dimension three. While these have already been used in \cite{Milson2013}, we develop these tools from scratches, and our conventions will certainly differ. Theorem \ref{thm-GS-intro} will in fact follow from more general theorems that we shall prove in the course of the article. A number of solutions of the topological massive gravity equations have been discovered in recent years, see \cites{Chow2010a,Chow2010} and references therein. In a subsequent paper, we shall give further explicit algebraically special solutions of the topological massive gravity equations.

The structure of the paper is as follows. In section \ref{sec-geom-cons}, we set up the scene with a short introduction of the Newman-Penrose formalism, and we review the background on the general geometric properties of null structures on (pseudo-)Riemannian manifolds in dimension three. Particularly relevant here are the notions of co-integrable and co-geodetic null structures of Definition \ref{defn-geom-prop}, the latter being central to the Goldberg-Sachs theorem.

This is taken further in section \ref{sec-real-metrics} where we examine the consequences of the reality conditions on a null structure, which may be understood as a congruence of curves that are either null or timelike. Propositions \ref{prop-co2timegeod} and \ref{prop-co2nullgeod} in particular give real interpretations of co-integrable and co-geodetic null structures.

Section \ref{sec-alg-class} focuses on the algebraic classification of the tracefree Ricci tensor. We introduce the definition of algebraically special Ricci tensors in Definition \ref{defn-alg-spec} based on the notion of principal null structure of Definitions \ref{defn-principal} and \ref{defn-multiple}. This leads to definitions of the complex Petrov types in section \ref{sec-Petrov-types-complex}, and their real signature-dependent analogues in sections \ref{sec-Petrov-types-Euc} and \ref{sec-Petrov-types-Lor}.

Curvature conditions for the existence of co-geodetic and parallel null structures are given in Proposition \ref{prop-integrability-cond} of section \ref{sec-curvature-cond}.

The main results of this paper are contained in section \ref{sec-GS}. We initially give general results for a metric of any signature. We first give in Proposition \ref{prop-obstruction-GS} obstructions for a multiple principal null structure to be co-geodetic in terms of the Cotton tensor and the derivatives of the Ricci scalar. We then show in Theorems \ref{thm-GS-II-gen}, \ref{thm-GS-III-gen}, \ref{thm-GS-N-gen} and \ref{thm-GS-D-gen} how the various algebraically special Petrov types guarantee the existence of a co-geodetic null structure. The converse, that a co-geodetic null structure implies algebraic speciality, is given in Theorem \ref{thm-GS-hard}. The application to topological massive gravity in Theorem \ref{thm-GS-TMG} then follows naturally. The section is wrapped up by giving real versions \ref{thm-Lor-GS-TMG} and \ref{thm-Riem-GS-TMG} of the Goldberg-Sachs theorem.

We end the paper with three appendices. Appendix \ref{app-spinor-calculus} contains a spinor calculus in three dimensions, which we then apply in Appendix \ref{sec-NP} to derive a Newman-Penrose formalism adapted to a null structure. Finally, in Appendix \ref{sec-GS-spinor} we have given alternative, manifestly invariant, proofs of the main theorems of section \ref{sec-GS} in the language of spinors.

\paragraph{Acknowledgments} This work was supported by the Polish National Science Center (NCN) via DEC-2013/09/B/ST1/01799. One of the authors (AT-C) has benefited from an Eduard \v{C}ech Institute postdoctoral fellowship GPB201/12/G028, and a GA\v{C}R (Czech Science Foundation) postdoctoral grant GP14-27885P. He would also like to thank the Centrum Fizyki Teoretycznej PAN for hospitality and financial support during his stay in Warsaw in the period 17-24 January 2015.

\section{Geometric considerations}\label{sec-geom-cons}
Throughout this section, we consider an oriented three-dimensional (pseudo-)Riemannian smooth manifold $(\mcM,\bm{g})$. We shall make use of the abstract index notation of \cite{Penrose1984}. Upstairs and downstairs lower case Roman indices will refer to vector fields and $1$-forms on $\mcM$ respectively, e.g. $V^a$ and $\alpha_a$, and similarly for more general tensor fields, e.g. $T \ind{_{ab}^c_d}$. Symmetrisation will be denoted by round brackets around a set of indices, and skew-symmetrisation by squared brackets, e.g. $A_{(ab)} = \frac{1}{2} \left( A_{ab} + A_{ba} \right)$ and $B_{[ab]} = \frac{1}{2} \left( B_{ab} - B_{ba} \right)$. Indices will be lowered and raised by the metric $g_{ab}$ and its inverse $g^{ab}$ whenever needs arise, e.g. $V_a = V^b g_{ba}$ and $\alpha^a = g^{ab} \alpha_b$, etc... Bold font will be used for vectors, forms and tensors whenever the index notation is suspended.

The space of sections of a given vector bundle $E$, say, over $\mcM$, will be denoted $\Gamma(E)$. The Lie bracket of two vector fields $\bm{V}$ and $\bm{W}$ will be denoted by $[ \bm{V} ,\bm{W}]$.

The orientation on $\mcM$ will be given by a volume form $e_{abc}$ on $\mcM$ satisfying the normalisation conditions
\begin{align}\label{eq-vol-form^2}
e_{abc} e^{def} & = (-1)^q 6 \, g \ind*{_{[a}^d} g \ind*{_b^e} g \ind*{_{c]}^f} \, , &
e_{abe} e^{cde} & = (-1)^q 2 \, g \ind*{_{[a}^c} g \ind*{_{b]}^d} \, , & e_{acd} e^{bcd} & = (-1)^q 2 \, g \ind*{_a^b} \, , &
e_{abc} e^{abc} & = (-1)^q 6 \, ,
\end{align}
where $q$ is the number of negative eigenvalues of the metric $g_{ab}$. We can then eliminate $2$-forms in favour of $1$-forms by means of the Hodge duality operation, i.e.
\begin{align*}
(*\alpha)_a & := \frac{1}{2} e \ind{_a^{bc}} \alpha_{bc} \, ,
\end{align*}
for any $2$-form $\alpha_{ab}$.

The Levi-Civita connection of $g_{ab}$, i.e. the unique torsion-free connection preserving $g_{ab}$, will be denoted $\nabla_a$. The Riemann curvature tensor associated with $\nabla$ is defined by
\begin{align*}
 R \ind{_{a b d} ^c} V^d := 2 \, \nabla_{\lb{a}} \nabla_{\rb{b}} V^c \, .
\end{align*}
In three dimensions, the Riemann tensor decomposes as
\begin{align}\label{eq-Riemann}
 R _{ a b c d } & = 4 \, g _{\lb{a} | \lb{c}} \Phi _{\rb{d}|\rb{b}} + \frac{1}{3} \, R \, g _{\lb{a} | \lb{c}} g _{\rb{d}|\rb{b}} \, ,
\end{align}
where $\Phi_{a b} := R_{ab} - \frac{1}{3} R g_{ab}$ and $R$ are tracefree part of the Ricci tensor $R \ind{_{ab}} := R \ind{_{acb}^c}$ and the Ricci scalar respectively. In three dimensions, the Bianchi identity $\nabla_{[a} R _{bc]de} = 0$ is equivalent to the contracted Bianchi identity
\begin{align}\label{eq-Bianchi-contracted}
\nabla^b \Phi \ind{_{ba}} - \frac{1}{6} \nabla _a R & = 0 \, .
\end{align}
For future use, we define the \emph{Schouten} or \emph{Rho} tensor
\begin{align*}
\Rho_{ab} & := - \Phi_{ab} - \frac{1}{12} R g_{ab} \, .
\end{align*}
To eliminate the use of fractions, we also set
\begin{align*}
S & := \frac{1}{12} R \, .
\end{align*}
With this convention, the \emph{Cotton tensor} takes the form
\begin{align}\label{eq-Cotton}
 A_{a b c} & := 2 \, \nabla_{[b} \Rho _{c] a} = - 2 \, \nabla_{[b} \Phi_{c]a} + 2 \, g_{a[b} \, \nabla_{c]} S \, .
\end{align}
By construction, the Cotton tensor $A_{abc}$ satisfies the symmetry $A_{[a b c]} = 0$ and $A \ind{^a _{a b}} = 0$. It also satisfies the condition
\begin{align}\label{eq-divCotton}
\nabla^a A_{abc} & = 0 \, ,
\end{align}
since commuting the covariant derivatives and using \eqref{eq-Bianchi-contracted} and \eqref{eq-Riemann} give
\begin{align*}
\nabla^a A_{abc} & = - 2 \, \nabla^a \nabla_{[b} \Phi_{c]a} + \cancel{ 2 \, \nabla_{[b} \, \nabla_{c]} S} = \cancel{- 4 \, \nabla_{[b} \nabla_{c]} S } + \cancel{2 \, R \ind{^a_{[bc]}^d} \Phi_{da}} + \cancel{2 \, \Phi \ind{_{[b}^d} \Phi_{|c]d} } = 0 \, .
\end{align*}
It is more convenient to Hodge-dualise $A_{abc}$ to obtain the tensor of valence $2$
\begin{align}\label{eq-Cotton-dual}
(*A)_{ab} & := \frac{1}{2} e \ind{_b^{cd}} A \ind{_{acd}} \, .
\end{align}
Dualising a second time over $ab$ yields $e \ind{_a^{bc}} (*A)_{bc} = A \ind{^b_{ba}} = 0$, which means that $(*A)_{ab}=(*A)_{(ab)}$. The Cotton tensor can then be expressed by
\begin{align*}
(*A)_{ab} & = - e \ind{_{(a|}^{cd}} \nabla_{c} \Phi _{d |b)} \, .
\end{align*}

\subsection{Null structures in dimension three}
For the time being, we shall keep the discussion general, by considering an oriented three-dimensional (pseudo-)Riemannian manifold $(\mcM,\bm{g})$ of any signature. We shall denote the complexification of the tangent bundle by $\Tgt^\C \mcM$. It will be often convenient to consider a complex-valued metric $\bm{g}^\C$ by extending $\bm{g}$ via
\begin{align*}
\bm{g}^\C ( \bm{X}+ \ii \bm{Y} , \bm{Z}+ \ii \bm{W} ) & = \bm{g} ( \bm{X} , \bm{Z} ) - \bm{g} ( \bm{Y} , \bm{W} ) + \ii \left( \bm{g} ( \bm{X} , \bm{W} ) + \bm{g} ( \bm{Y} , \bm{Z} ) \right) \, ,
\end{align*}
for all $\bm{X}, \bm{Y}, \bm{Z}, \bm{W} \in \Gamma( \Tgt \mcM)$. For the remaining part of the paper, we shall omit the ${}^\C$ on $\bm{g}^\C$. It should be clear from the context whether we are using $\bm{g}$ or $\bm{g}^\C$.

Let $\mcN$ be a line subbundle of the complexified tangent bundle $\Tgt^\C \mcM := \C \otimes \Tgt \mcM$, null with respect to the complexified metric, i.e. for any section $k^a$ of $\mcN$, $g_{ab} k^a k^b = 0$, and  $\mcN^\perp$ the orthogonal complement of $\mcN$ with respect to $g_{ab}$, i.e. at any point $p$ in $\mcM$,
\begin{align*}
\mcN^\perp_p & := \left\{ V^a \in \Tgt^\C_p \mcM : V^a k^b g_{ab} = 0 \, , \mbox{for all $k^a \in \mcN_p$} \right\} \, .
\end{align*}
We thus have a filtration $\mcN \subset \mcN^\perp$ on $\Tgt^\C \mcM$.

\begin{defn}\label{defn-almost-null}
We shall refer to a real or complex null line distribution on $(\mcM,\bm{g})$ as a \emph{null structure}.
\end{defn}

This is a three-dimensional specialisation of the notion of \emph{almost null structures} presented in \cites{Taghavi-Chabert2012} (also referred to as \emph{$\gamma$-plane distributions} in \cite{Taghavi-Chabert2013}). Since $\mcN$ is one-dimensional, $\mcN$ is clearly automatically formally integrable, i.e. $[ \Gamma ( \mcN ) , \Gamma ( \mcN ) ] \subset \Gamma ( \mcN )$, and we can unambiguously dispense with the word `almost' in Definition \ref{defn-almost-null}.

\subsection{Rudiments of Newman-Penrose formalism}
For most of the paper, it will be convenient to use the Newman-Penrose formalism as described in details in Appendix \ref{sec-NP}. To do this we introduce a frame of $\Tgt^\C \mcM$ as follows. We fix a section $k^a$ of a given null structure $\mcN$, and choose a vector field $\ell^a$ such that $k^a \ell^b g_{ab} = 1$. Clearly, $\ell^a$ is transversal to $\mcN^\perp$. We also choose a section $n^a$ of $\mcN^\perp$, not in $\mcN$, which we may normalise as $n^a n_a = -\frac{1}{2}$.

\begin{defn}
We say that a frame $( k^a , \ell^a , n^a )$ is \emph{adapted to a null structure $\mcN$ on $(\mcM,\bm{g})$} if and only if $k^a$ generates $\mcN$ and the metric can be expressed as
\begin{align}\label{eq-complex-metric}
g_{ab} & = 2 \, k_{(a} \ell_{b)} - 2 \, n_a n_b \, .
\end{align}
\end{defn}

In fact, we have a class of frames adapted to the null structure, and any two frames in that class are related via the transformation
\begin{align}\label{eq-adapted-modulo}
k^a & \mapsto a k^a \, , &
n^a & \mapsto b \left( n^a + z k^a \right) \, , &
\ell^a & = a^{-1} \left( \ell^a + 2 z n^a + z^2 k^a \right) \, .
\end{align}
for some functions $a$, $z$ and $b$ where $a$ is non-vanishing and $b^2=1$.

We can expand the covariant derivatives of the adapted frame vectors as follows
\begin{align}
\nabla_a k^b & = 2 \, \gamma \, k_a k^b + 2 \, \epsilon \, \ell_a k^b - 4 \, \alpha \, n_a k^b - 2 \, \tau \, k_a n^b - 2 \, \kappa \, \ell_a n^b + 4 \, \rho \, n_a n^b \, , \label{eq-nablak} \\
\nabla_a \ell^b & = - 2 \, \epsilon \, \ell_a \ell^b - 2 \, \gamma \, k_a \ell^b + 4 \, \alpha \, n_a \ell^b + 2 \, \pi \, \ell_a n^b + 2 \, \nu \, k_a n^b - 4 \, \mu \, n_a n^b \, , \label{eq-nablal}\\
\nabla_a n^b & = - \kappa \, \ell_a \ell^b + \nu k_a k^b + \pi \, \ell_a k^b - \tau \, k_a \ell^b + 2 \, \rho \, n_a \ell^b - 2 \, \mu \, n_a k^b \, , \label{eq-nablau}
\end{align}
where $\alpha$, $\gamma$, $\epsilon$, $\kappa$, $\mu$, $\nu$, $\pi$, $\rho$, and $\tau$ are the connection coefficients, also known as \emph{Newman-Penrose coefficients}. We shall also introduce the following notation for the frame derivatives:
\begin{align*}
D & := k^a \nabla_a \, , & \Delta & := \ell^a \nabla_a \, , & \delta & := n^a \nabla_a \, .
\end{align*}
The curvature components of $\nabla_a$ and the Bianchi identies can then be expressed in terms of these coefficients and derivatives thereof, and their full description, also given in Appendix \ref{sec-NP}, is known as the \emph{Newman-Penrose equations}.

\subsection{Geometric properties}
While a null structure $\mcN$ is always integrable, the following definition gives additional geometric conditions that $\mcN$ can satisfy -- these are related to the notion of \emph{intrinsic torsion} examined in \cite{Taghavi-Chabert2013}.

\begin{defn}\label{defn-geom-prop}
Let $\mcN$ be a null structure on $(\mcM,\bm{g})$. We say that
\begin{itemize}
\item $\mcN$ is \emph{co-integrable} if its orthogonal complement $\mcN^\perp$ is formally integrable, i.e.
\begin{align}\label{integrableN}
[ \Gamma ( \mcN^\perp ) , \Gamma ( \mcN^\perp ) ] & \subset \Gamma ( \mcN^\perp ) \, ;
\end{align}
\item  $\mcN$ is \emph{co-geodetic} if its orthogonal complement $\mcN^\perp$ is formally totally geodetic, i.e.
\begin{align}\label{geodetic-Nperp}
\bm{g} ( \nabla_{\bm{X}} \bm{Y} , \bm{Z} ) & = 0 \, , & \mbox{for all $\bm{X}, \bm{Y} \in \Gamma ( \mcN^\perp )$, $\bm{Z} \in \Gamma(\mcN)$;}
\end{align}
\item  $\mcN$ is \emph{parallel} if $\nabla_{\bm{Y}} \bm{X} \in \Gamma(\mcN)$ for all $\bm{X} \in \Gamma(\mcN)$, $\bm{Y} \in \Gamma (\Tgt \mcM)$.
\end{itemize}
\end{defn}

Using the standard formula $[ \bm{X},\bm{Y}] = \nabla_{\bm{X}} \bm{Y} - \nabla_{\bm{Y}} \bm{X}$ for any vector fields $\bm{X}, \bm{Y} \in \Gamma (\Tgt^\C \mcM)$, one can prove the following lemma \cite{Taghavi-Chabert2013}.
\begin{lem}
Let $\mcN$ be a null structure on $(\mcM,\bm{g})$. Then 
\begin{align*}
\mbox{$\mcN$ is parallel} && \Rightarrow && \mbox{$\mcN$ is co-geodetic} && \Rightarrow && \mbox{ $\mcN$ is co-integrable.}
\end{align*}
\end{lem}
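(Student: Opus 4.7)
The plan is to prove the two implications separately, both by short direct computations using metric compatibility of $\nabla$ and the torsion-free formula $[\bm{X},\bm{Y}] = \nabla_{\bm{X}} \bm{Y} - \nabla_{\bm{Y}} \bm{X}$. The single conceptual ingredient needed beyond this is the observation that, since $\mcN$ is totally null, it is contained in its own orthogonal complement, i.e. $\Gamma(\mcN) \subset \Gamma(\mcN^\perp)$. I expect no genuine obstacle: the whole point of the lemma is to record that the three geometric conditions of Definition \ref{defn-geom-prop} sit in a linear hierarchy, and the argument is essentially bookkeeping.

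For the first implication, suppose $\mcN$ is parallel and pick $\bm{X}, \bm{Y} \in \Gamma(\mcN^\perp)$ and $\bm{Z} \in \Gamma(\mcN)$. Because $\bm{Z}$ is null and $\bm{Y} \in \Gamma(\mcN^\perp)$, the function $\bm{g}(\bm{Y}, \bm{Z})$ vanishes identically, so metric compatibility of $\nabla$ gives
\begin{equation*}
0 = \bm{X}\bigl(\bm{g}(\bm{Y}, \bm{Z})\bigr) = \bm{g}(\nabla_{\bm{X}} \bm{Y}, \bm{Z}) + \bm{g}(\bm{Y}, \nabla_{\bm{X}} \bm{Z}) \, .
\end{equation*}
By the parallel hypothesis $\nabla_{\bm{X}} \bm{Z} \in \Gamma(\mcN)$, hence the second term vanishes since $\bm{Y} \in \Gamma(\mcN^\perp)$, and we conclude $\bm{g}(\nabla_{\bm{X}} \bm{Y}, \bm{Z}) = 0$, which is exactly condition \eqref{geodetic-Nperp}.

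For the second implication, suppose $\mcN$ is co-geodetic and pick $\bm{X}, \bm{Y} \in \Gamma(\mcN^\perp)$ and $\bm{Z} \in \Gamma(\mcN)$. The torsion-free identity yields
\begin{equation*}
\bm{g}\bigl([\bm{X},\bm{Y}], \bm{Z}\bigr) = \bm{g}(\nabla_{\bm{X}} \bm{Y}, \bm{Z}) - \bm{g}(\nabla_{\bm{Y}} \bm{X}, \bm{Z}) \, ,
\end{equation*}
and both terms on the right-hand side vanish by the co-geodetic hypothesis applied to the pairs $(\bm{X}, \bm{Y})$ and $(\bm{Y}, \bm{X})$ respectively. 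Hence $[\bm{X},\bm{Y}]$ is orthogonal to every section of $\mcN$, so $[\bm{X}, \bm{Y}] \in \Gamma(\mcN^\perp)$, which is \eqref{integrableN}. This completes the proof; the argument is of course standard and valid in any signature and in arbitrary dimension, and no feature specific to dimension three has been used.
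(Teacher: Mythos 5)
Your proof is correct and follows exactly the route the paper indicates (the paper only sketches it, invoking the torsion-free identity $[\bm{X},\bm{Y}] = \nabla_{\bm{X}}\bm{Y} - \nabla_{\bm{Y}}\bm{X}$ and citing \cite{Taghavi-Chabert2013}): metric compatibility plus the parallel hypothesis gives co-geodetic, and the torsion-free formula plus the co-geodetic hypothesis gives co-integrability. The only cosmetic remark is that the inclusion $\Gamma(\mcN)\subset\Gamma(\mcN^\perp)$ you flag as the key conceptual ingredient is not actually used anywhere in your argument; everything follows from the definition of $\mcN^\perp$ alone.
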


\begin{rem}
Note that in three dimensions, unlike in higher odd dimensions, a null structure $\mcN$ automatically satisfies  $\nabla_{\bm{Y}} \bm{X} \in \Gamma(\mcN^\perp)$ for all $\bm{X} \in \Gamma(\mcN)$, $\bm{Y} \in \Gamma(\Tgt \mcM)$, as can be read off from \eqref{eq-nablak}.
\end{rem}

\begin{rem}
Of the three geometric properties listed in Definition \ref{defn-geom-prop}, only the property that $\mcN$ be co-integrable is conformally invariant since it depends only on the Lie bracket. The remaining properties break conformal invariance -- see \cite{Taghavi-Chabert2013} for details.
\end{rem}

It is convenient to re-express condition \eqref{integrableN} and \eqref{geodetic-Nperp} in terms of the Levi-Civita connection as given in the following proposition.
\begin{prop}\label{prop-intrinsic-torsion}
Let $\mcN$ be a null structure on $(\mcM,\bm{g})$, and let $k^a$ be a generator of $\mcN$. Then
\begin{align}
\mbox{$\mcN$ is co-integrable}
& & \Longleftrightarrow & & k _{[a} \nabla_b k_{c]}  & = 0 
& & \Longleftrightarrow & \left(  k^b \nabla_b k^{[a} \right) k^{b]} & = 0 \, , \label{eq-equiv-integrable} \\
\mbox{$\mcN$ is co-geodetic}
& & \Longleftrightarrow & & 
k^a \nabla_b k^b - k^b \nabla_b k^a & = 0
& & \Longleftrightarrow &
k_{[a} \left( \nabla_{b]} k_{[c} \right) k_{d]} & = 0 \, . \label{eq-equiv-strongly}
\end{align}
\end{prop}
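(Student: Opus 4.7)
The plan is to fix an adapted Newman--Penrose frame $(k^a, \ell^a, n^a)$ with $k^a$ generating $\mcN$, and to translate every geometric and tensorial condition in the proposition into conditions on the NP coefficients appearing in \eqref{eq-nablak}. All four equivalences will then reduce to the vanishing of the same subset of coefficients, and follow by direct algebraic comparison. Throughout I will exploit the identity $k_a \nabla_b k^a = \tfrac{1}{2}\nabla_b(k^a k_a) = 0$, so that $\nabla_{\bm{Y}} \bm{k} \in \Gamma(\mcN^\perp)$ for every $\bm{Y}$.

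For the first equivalence, $\mcN^\perp$ is the kernel of the $1$-form $k_a$, so Frobenius integrability reads $k \wedge \dd k = 0$, i.e.\ $k_{[a} \nabla_b k_{c]} = 0$. In dimension three every $3$-form is a multiple of $k_{[a} \ell_b n_{c]}$; after substituting \eqref{eq-nablak} and antisymmetrizing, only the term $-2\kappa\, \ell_b n_c$ survives, yielding $k_{[a} \nabla_b k_{c]} = -2\kappa\, k_{[a} \ell_b n_{c]}$. Contracting \eqref{eq-nablak} once with $k^a$ gives $k^c \nabla_c k^a = 2\epsilon\, k^a - 2\kappa\, n^a$, so $\left(k^c \nabla_c k^{[a}\right)k^{b]}$ is proportional to $-\kappa\, n^{[a} k^{b]}$. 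Both conditions are thus equivalent to $\kappa = 0$.

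For the co-geodetic equivalence, the condition $\bm{g}(\nabla_{\bm{X}} \bm{Y}, \bm{Z}) = 0$ for $\bm{X}, \bm{Y} \in \Gamma(\mcN^\perp)$, $\bm{Z} \in \Gamma(\mcN)$ rewrites, using $k_a Y^a = 0$, as the vanishing of $(\nabla_b k_a)X^b Y^a$ for all $\bm{X}, \bm{Y}$ in the span of $\{\bm{k}, \bm{n}\}$. Since $k^a k_a = 0$ and $k^a \nabla_b k_a = 0$, this reduces to $k^b n^a \nabla_b k_a = 0$ and $n^b n^a \nabla_b k_a = 0$, which by \eqref{eq-nablak} are $\kappa$ and $\rho$ respectively; hence co-geodesy amounts to $\kappa = \rho = 0$. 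Taking traces in \eqref{eq-nablak} gives $\nabla_a k^a = 2\epsilon - 2\rho$, so $k^a \nabla_b k^b - k^b \nabla_b k^a = -2\rho\, k^a + 2\kappa\, n^a$, which vanishes iff $\kappa = \rho = 0$. For the last expression, antisymmetrizing \eqref{eq-nablak} in $c, d$ kills every term containing $k_c k_d$ and yields $(\nabla_b k_{[c})k_{d]} = 2\, n_{[c} k_{d]}(-\tau k_b - \kappa \ell_b + 2\rho n_b)$; a further antisymmetrization against $k_{[a}$ in the slot $b$ eliminates the $\tau$ contribution and leaves a linear combination of $\kappa\, k_{[a} \ell_{b]} n_{[c} k_{d]}$ and $\rho\, k_{[a} n_{b]} n_{[c} k_{d]}$, which are manifestly linearly independent. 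The whole argument is purely algebraic in the NP coefficients; the only point requiring care is the bookkeeping of the two independent antisymmetrizations in this final identity.
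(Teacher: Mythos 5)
Your proof is correct and follows essentially the same route as the paper: work in an adapted Newman--Penrose frame and reduce every condition to the vanishing of $\kappa$ (co-integrable) and of $\kappa,\rho$ (co-geodetic) by substituting \eqref{eq-nablak}; the only cosmetic differences are that you invoke the dual Frobenius condition $k\wedge\dd k=0$ where the paper reads off $\kappa=0$ from the commutator \eqref{commutator0001}, and that you explicitly verify the final characterisation $k_{[a}(\nabla_{b]}k_{[c})k_{d]}=0$, which the paper leaves implicit. (Your coefficients differ from the paper's by factors of $2$ only because the main-text and appendix versions of \eqref{eq-nablak} use different normalisations; this does not affect the argument.)
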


\begin{rem}
 Conditions \eqref{eq-equiv-integrable} tell us that any generator of $\mcN$ is formally geodetic, or equivalently in three dimensions, formally twist-free, or equivalently, formally shear-free, i.e. $\mcL_k g_{ab} \propto g_{ab} \pmod{ k_{(a} \alpha_{b)}}$.
 Conditions \eqref{eq-equiv-strongly} tell us that any generator of $\mcN$ is formally geodetic and divergence-free.
\end{rem}

\begin{proof}
Let $k^a$ be a generator of $\mcN$. In terms of the Newman-Penrose coefficients, we have
\begin{align*}
\mbox{$\mcN$ is co-integrable}
& & \Longleftrightarrow & & \kappa = 0 \, ,
\end{align*}
which follows from
\begin{align*}
[ D , \delta ] & = 2 \, \rho \delta 
 + \left( \pi - 2 \, \alpha \right) D - \kappa \Delta \, . \tag{\ref{commutator0001}}
 \end{align*}
On the other hand, comparison with
\begin{align}
\left( D k^{[a} \right) k^{b]} & = - 4 \, \kappa \, n^{[a} k^{b]} \, , & k _{[a} \nabla_b k_{c]} & = - 4 \,  \kappa \, k_{[a} \ell_b n_{c]} \, , \nonumber
\end{align}
establishes the equivalence \eqref{eq-equiv-integrable}.

Further, since \eqref{geodetic-Nperp} can be expressed as $n^a D k_a  = 2 \, \kappa$ and $n^a \delta k_a = 2 \, \rho$ in our null basis, we have
\begin{align*}
\mbox{$\mcN$ is co-geodetic}
& & \Longleftrightarrow & & \kappa = \rho = 0 \, .
\end{align*}
Comparison with 
\begin{align*}
k^a \nabla_b k^b - D k^a & = 4 \left( \kappa \, n^a - \rho \, k^a \right) \, , 
\end{align*}
establishes the equivalence \eqref{eq-equiv-strongly}.
\end{proof}

Another way to express the condition for a null structure to be co-geodetic is given by the following proposition.
\begin{prop}\label{prop-closed-coclosed}
Locally, there is a one-to-one correspondence between closed and co-closed complex-valued $1$-forms and co-geodetic null structures.
\end{prop}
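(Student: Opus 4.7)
The plan is to translate the closed and co-closed conditions on a null complex 1-form $\omega_a$ directly into the Newman-Penrose coefficients via \eqref{eq-nablak}, and then invoke the characterisation of co-geodetic null structures in Proposition \ref{prop-intrinsic-torsion}. The underlying bijection to set up is between a (non-vanishing) null complex 1-form $\omega_a$ and the null structure $\mcN := \C \, \omega^a$ it defines, where $\omega^a := g^{ab} \omega_b$; conversely, a generator $k^a$ of any null structure $\mcN$ gives the null 1-form $k_a = g_{ab} k^b$, with two generators differing by a scalar factor.

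For the forward direction, I would take such an $\omega_a$, choose an adapted frame $(k^a, \ell^a, n^a)$ with $k^a := \omega^a$, so that $\omega_a = k_a$, and compute the skew and trace parts of $\nabla_a \omega_b$ by lowering the free index in \eqref{eq-nablak}. This gives
\begin{align*}
2 \, \nabla_{[a} \omega_{b]} & = 2 \, \epsilon \, (\ell_a k_b - \ell_b k_a) + 2 \, (2\alpha - \tau) (k_a n_b - k_b n_a) - 2 \, \kappa \, (\ell_a n_b - \ell_b n_a) \, , \\
\nabla^a \omega_a & = 2 \, \epsilon - 2 \, \rho \, .
\end{align*}
Linear independence of the three 2-forms appearing above shows that closedness amounts to $\epsilon = 0$, $2 \alpha = \tau$, $\kappa = 0$, whereas co-closedness adds $\epsilon = \rho$; together these force $\kappa = \rho = 0$, which by Proposition \ref{prop-intrinsic-torsion} is exactly the co-geodetic condition on $\mcN$.

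For the converse, I would start with a co-geodetic $\mcN$ and any generator $k^a$, so that $\kappa = \rho = 0$ in the adapted frame. The condition $\kappa = 0$ is precisely the Frobenius condition $k \wedge \dd k = 0$ on the 1-form $k_a$; the Frobenius theorem then produces local complex functions $h$, $g$ with $k_a = h \, \nabla_a g$, so that $\omega_a := \nabla_a g = h^{-1} k_a$ is closed by construction. For the co-closed condition, a parallel computation with $\omega_a = a k_a$ gives the identity $\nabla^b(a k_b) = D a + 2 a (\epsilon - \rho)$; combining this with the (automatically imposed) closed condition $D a + 2 a \epsilon = 0$ and $\rho = 0$ yields $\nabla^a \omega_a = - 2 a \rho = 0$. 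The residual freedom of rescaling $\omega_a$ by any function constant along the leaves of $\mcN^\perp$ (equivalently, any function of $g$) is the precise local sense in which the correspondence is one-to-one.

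The work is algebraic once the adapted frame is fixed, so I expect the only delicate points to be the invocation of Frobenius and the book-keeping of the scaling freedom. The pivotal observation is that, after closedness has fixed the $D$-derivative of the scaling factor, the co-closed condition collapses precisely to $\rho = 0$, which together with the co-integrability condition $\kappa = 0$ is exactly the co-geodetic content of Proposition \ref{prop-intrinsic-torsion}.
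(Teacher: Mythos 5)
Your proof is correct and follows the same underlying route as the paper: both arguments reduce everything to the Newman--Penrose characterisation $\kappa=\rho=0$ of the co-geodetic condition (Proposition \ref{prop-intrinsic-torsion}) and both hinge on an integrating factor that makes a generator of $\mcN$ closed. The differences are matters of presentation and completeness. The paper argues invariantly: once $k_a$ is closed, nullity gives $k^a\nabla_a k_b=\tfrac12\nabla_b(k^ak_a)=0$, so the co-geodetic condition \eqref{eq-equiv-strongly} collapses to $\nabla^ak_a=0$; you instead expand $\nabla_{[a}\omega_{b]}$ and $\nabla^a\omega_a$ in an adapted frame and read off $\epsilon=\kappa=0$, $2\alpha=\tau$, $\epsilon=\rho$, which is a correct (and somewhat more informative) computation. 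You also spell out the converse direction, which the paper dispatches with ``the converse is also true,'' and you correctly identify the residual freedom $\omega_a\mapsto F'(g)\,\omega_a$ as the precise sense in which the correspondence is one-to-one --- a point the paper leaves implicit.

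The one step you should be more careful about is the appeal to ``the Frobenius theorem'' to produce $h$, $g$ with $k_a=h\,\nabla_a g$. When $\mcN$ has real index $0$, $k_a$ is a genuinely complex-valued $1$-form on a real manifold, and the classical Frobenius theorem does not supply a complex integrating factor; this is a CR-embeddability-type statement, which is why the paper cites Lemma~5.1 of \cite{Hill2009} (and a lemma of \cite{Hill2008} in the proof of Proposition \ref{prop-normal-metric}) rather than Frobenius. Your argument is otherwise complete, but that existence step needs the cited lemma, not the classical theorem.
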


\begin{proof}
We note that a co-integrable null structure $\mcN$ is equivalent to its generator $k^a$, say, satisfying $k_{[a} \nabla_b k_{c]} =0$ since $k_a$ is also the annihilator of $\mcN^\perp$. We can always rescale $k^a$ to make it closed, i.e. $\nabla_{[b} k_{c]} =0$ -- for details, see Lemma 5.1 in \cite{Hill2009}. In this case, $k^a$ also satisfies $k^a \nabla_a k^b =0$. If in addition $\mcN$ is co-geodetic, then using the equivalence \eqref{eq-equiv-strongly}, we have $\nabla^a k_a =0$, i.e. $k^a$ is coclosed. The converse is also true.
\end{proof}

A generalisation to higher dimensions applicable to higher valence spinor fields in the analytic case is given in \cite{Taghavi-Chabert2013}.

As we shall see in the next section, the differential conditions on a null structure $\mcN$ will yield quite different geometric interpretations depending on the signature of $\bm{g}$.

\subsection{Relation to harmonic morphisms}
\begin{defn}[\cites{Baird1988,Baird1995}]
Let $\varphi : \mcM \rightarrow \C$ be a complex-valued smooth map on $\mcM$. We say that $\varphi$ is \emph{horizontal conformal} if it satisfies $(\nabla^a \varphi ) ( \nabla_a \varphi  ) = 0$, and a \emph{harmonic morphism} if it satisfies $( \nabla^a \varphi ) ( \nabla_a \varphi ) = \nabla^a \nabla_a \varphi = 0$.
\end{defn}
With reference to the proof of Proposition \ref{prop-closed-coclosed}, we can rescale a generator $k^a$ of a co-integrable null structure $\mcN$ such that $k_a = \nabla_a \varphi$. Since such a $k^a$ is null, $\varphi$ is a horizontal conformal map. If $\mcN$ is also co-geodetic, $k_a$ is also co-closed and $\varphi$ must be a harmonic morphism. Summarising,
\begin{cor}
On a three-dimensional (pseudo-)Riemannian manifold, locally, there is a one-to-one correspondence between
\begin{itemize}
\item horizontal conformal maps and co-integrable null structures;
\item harmonic morphisms and co-geodetic null structures.
\end{itemize}
\end{cor}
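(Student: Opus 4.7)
The plan is to verify both directions of the claimed correspondence separately, building directly on the discussion immediately preceding the corollary and on Proposition \ref{prop-intrinsic-torsion} and Proposition \ref{prop-closed-coclosed}. The forward direction (null structure $\Rightarrow$ map) is essentially already assembled in the paragraph before the statement; the converse (map $\Rightarrow$ null structure) is the content that genuinely needs to be checked.

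For the forward direction, given a co-integrable null structure $\mcN$, Proposition \ref{prop-closed-coclosed} produces a generator $k^a$ that is closed, i.e.\ $\nabla_{[a} k_{b]} = 0$. Applying the complex-valued Poincar\'e lemma locally, I would write $k_a = \nabla_a \varphi$ for a complex-valued function $\varphi$. The nullity $g^{ab} k_a k_b = 0$ becomes $(\nabla^a \varphi)(\nabla_a \varphi) = 0$, so $\varphi$ is horizontal conformal. Strengthening the hypothesis to $\mcN$ being co-geodetic, Proposition \ref{prop-closed-coclosed} gives additionally $\nabla^a k_a = 0$, which is $\nabla^a \nabla_a \varphi = 0$, so $\varphi$ is a harmonic morphism.

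For the converse, starting from a horizontal conformal $\varphi$, I would set $k_a := \nabla_a \varphi$ on the open set where $\nabla \varphi \neq 0$. By hypothesis $k^a$ is null, and it is automatically closed by the symmetry of second covariant derivatives on a function. Hence $k_{[a} \nabla_b k_{c]} = 0$ trivially, so by the equivalence \eqref{eq-equiv-integrable} the line distribution $\mcN := \sspan(k^a)$ is a co-integrable null structure. If in addition $\varphi$ is a harmonic morphism, then $\nabla^a k_a = 0$; moreover the closedness together with nullity of $k^a$ gives $k^b \nabla_b k^a = k^b \nabla^a k_b = \tfrac{1}{2} \nabla^a (k^b k_b) = 0$, so $k^a \nabla_b k^b - k^b \nabla_b k^a = 0$, and by the equivalence \eqref{eq-equiv-strongly} $\mcN$ is co-geodetic.

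The main subtlety, which I would address briefly, is the precise meaning of ``one-to-one correspondence''. A null structure $\mcN$ is a line bundle, so its generator is defined only up to rescaling; conversely, a potential $\varphi$ for a given closed $k_a$ is determined only up to an additive constant. So the correspondence should be read at the level of $\mcN$ on one side and of $\varphi$ modulo additive constants on the other, with a choice of closed representative $k_a$ in each gauge orbit (whose existence is the content of Proposition \ref{prop-closed-coclosed}) mediating the bijection locally. No deeper obstruction is expected: everything reduces to the two equivalences of Proposition \ref{prop-intrinsic-torsion} together with the Poincar\'e lemma.
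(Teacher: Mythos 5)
Your proposal is correct and follows essentially the same route as the paper, which establishes the forward direction in the paragraph preceding the corollary via Proposition \ref{prop-closed-coclosed} and leaves the converse implicit. Your explicit verification of the converse (closedness of $\nabla_a\varphi$ giving co-integrability via \eqref{eq-equiv-integrable}, and $k^b\nabla_b k^a=\tfrac{1}{2}\nabla^a(k^bk_b)=0$ combined with co-closedness giving \eqref{eq-equiv-strongly}) is exactly the argument the paper intends, and your remark on the gauge ambiguities correctly pins down the sense of the bijection.
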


\section{Real metrics}\label{sec-real-metrics}
As before, $(\mcM,\bm{g})$ will denote an oriented three-dimensional (pseudo-)Riemannian manifold. So far the discussion has been independent of the signature of the metric $\bm{g}$. Different metric signatures will induce different reality conditions on $\Tgt^\C \mcM$, and consequently, different geometric interpretations of a null stucture $\mcN$. As is standard, the complex conjugate of $\mcN$ will be denoted $\overline{\mcN}$. We start with the following definition.

\begin{defn}[\cites{Kopczy'nski1992,Kopczy'nski1997}]
The \emph{real index} of a null structure $\mcN$ on $(\mcM,\bm{g})$ at a point $p$ is the dimension of the intersection $\mcN_p \cap \overline{\mcN}_p$.
\end{defn}

\begin{lem}[\cites{Kopczy'nski1992,Kopczy'nski1997}]
At any point, the real index of a null structure $\mcN$ on $(\mcM,\bm{g})$ must be
\begin{itemize}
\item $0$ when $\bm{g}$ has signature $(3,0)$;
\item either $0$ or $1$ when $\bm{g}$ has signature $(2,1)$.
\end{itemize}
\end{lem}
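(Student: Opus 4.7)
The plan is to exploit the fact that $\mcN$ is a complex line bundle, so $\mcN_p$ has complex dimension one, and therefore the intersection $\mcN_p \cap \overline{\mcN}_p$ is either $\{0\}$ or all of $\mcN_p$. This already pins the real index to be $0$ or $1$ in every signature; the question is which values are actually attained.

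The key observation is that the real index equals $1$ if and only if $\mcN_p = \overline{\mcN}_p$, i.e.\ $\mcN_p$ is stable under complex conjugation. Writing a generator as $\bm{v} + \ii \bm{w}$ with $\bm{v}, \bm{w}$ real, closure under conjugation forces $\bm{v}, \bm{w} \in \mcN_p$, so $\mcN_p$ is the complexification of a real null line in $\Tgt_p \mcM$. Conversely, any real null line complexifies to a null complex line with $\mcN_p = \overline{\mcN}_p$. Hence real index $1$ is equivalent to $\mcN_p$ being spanned (over $\C$) by a real null vector.

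I would then dispatch the two signature cases separately. In signature $(3,0)$, the metric $\bm{g}$ is positive-definite, so its restriction to $\Tgt_p \mcM$ admits no non-zero null vectors; therefore real index $1$ is impossible and the only option is $0$. In signature $(2,1)$, real null vectors exist in abundance (the usual null cone), and one can exhibit both situations: taking $\mcN_p$ to be the complexification of a real null line gives real index $1$, while picking a generator of the form $\bm{e}_1 + \ii \bm{e}_2$ with $\bm{e}_1, \bm{e}_2$ real, $\bm{g}$-orthogonal and of equal non-zero norms produces a null complex line with $\mcN_p \cap \overline{\mcN}_p = \{0\}$, giving real index $0$.

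There is no real obstacle here: the statement follows from elementary linear algebra applied pointwise to the complexified tangent space. The only mildly subtle point to state cleanly is the equivalence between $\mcN_p$ being conjugation-invariant and being the complexification of a real null line, which reduces to splitting a complex generator into real and imaginary parts.
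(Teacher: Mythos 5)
Your argument is correct. Note, however, that the paper does not actually prove this lemma at all: it is stated with citations to Kopczy\'{n}ski--Trautman and used as imported background, so there is nothing in the text to compare against step by step. Your proof supplies exactly the missing elementary content: since $\mcN_p$ is a complex line, $\mcN_p \cap \overline{\mcN}_p$ is either trivial or all of $\mcN_p$, which already forces the real index into $\{0,1\}$ in any signature; and the identification of the real-index-$1$ case with conjugation-invariance, hence with $\mcN_p$ being the complexification of a \emph{real} null line (obtained by splitting a generator into real and imaginary parts, both of which must then lie in $\mcN_p$), correctly rules out index $1$ in the positive-definite case, where the real null cone is trivial. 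The one small point worth making explicit is that the real generator so obtained is null for the real metric $\bm{g}$ because $\bm{g}^{\C}$ restricts to $\bm{g}$ on real vectors. Your Lorentzian examples (a real null generator for index $1$; $\bm{e}_1 + \ii\,\bm{e}_2$ with $\bm{e}_1,\bm{e}_2$ orthogonal spacelike unit vectors for index $0$) go slightly beyond what the lemma asserts, since it only claims the constraint on the possible values, but they are correct and show both values are attained, which is consistent with how the paper later uses the two cases in section \ref{sec-real-metrics}.
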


\subsection{Lorentzian signature}
Assume that $(\mcM , \bm{g})$ is an oriented Lorentzian manifold with signature $(2,1)$. The geometrical features of a null structure of real index $0$ or real index $1$ are discussed separately.

\subsubsection{Real index $0$}\label{sec-Lor0}
We shall exhibit the relation between null structures of real index $0$ and congruences of timelike curves. In particular, we shall also show that a \emph{co-geodetic} null structure is equivalent to the existence of a \emph{shear-free} congruence of timelike \emph{geodesics}.

Assume that $\mcN$ is of real index $0$. Then $\mcN$ is a complex distribution, whose real and imaginary parts span a spacelike distribution $\Re( \mcN \oplus \overline{\mcN})$. The orthogonal complement of $\Re( \mcN \oplus \overline{\mcN})$ is necessarily timelike and orthogonal to both $\mcN$ and $\overline{\mcN}$. Let $u^a$ be a unit timelike vector field generating this timelike distribution, i.e. $u^a$ satisfies $u^a u_a = -1$. There is a sign ambiguity in the definition of $u^a$, which can be fixed by a choice of time-orientation. In this case, the metric takes the form
\begin{align}\label{eq-al-contact-Lorentz-metric}
g_{ab} & = h_{ab} - u_a u_b \, .
\end{align}
where $h_{ab}$ is annhilated by $u^a$, i.e. $h_{ab} u^a =0$. The orientation of $(\mcM,\bm{g})$ given by the volume form $e_{abc}$ normalised as in \eqref{eq-vol-form^2} with $q=1$ allows us to Hodge-dualise
the timelike vector field $u^a$ to produce a $2$-form $\omega_{ab}$, i.e.
\begin{align*}
\omega_{ab} := - e _{abc} u^c \, .
 \end{align*}
This in turns yields an endomorphism
\begin{align*}
 J \ind{_a^b} & = \omega_{ac} g^{cb} \, ,
\end{align*}
on $\Tgt \mcM$, which can be seen to satisfy
\begin{align}\label{eq-contact-Lor-properties}
J \ind{_a^c} J \ind{_c^b} & = - h \ind{_a^b} \, , & u^b J \ind{_b^a} & = 0 \, .
\end{align}
This yields a splitting of the complexified tangent bundle
\begin{align*}
\Tgt^\C  \mcM & = \Tgt^{(1,0)} \oplus \Tgt^{(0,1)} \oplus \Tgt^{(0,0)} \, ,
\end{align*}
where $\Tgt^{(1,0)}$, $\Tgt^{(0,1)}$ and $\Tgt^{(0,0)}$ are the $-\ii$-, $+\ii$- and $0$-eigenbundles of $J \ind{_a^b}$ respectively. In particular, a null structure of index $0$ yields a CR structure with a preferred splitting as described in \cite{Hill2009}.

In an adapted frame $( k^a , \ell^a , n^a )$, we have $u^a = \sqrt{2} n^a$ with the following reality conditions
 \begin{align*}
 ( k^a , \ell^a , n^a ) \mapsto  ( \overline{k^a} , \overline{\ell^a} , \overline{n^a} ) = ( \ell^a , k^a , n^a  ) \, .
 \end{align*}
Thus $\ell^a$ is the complex conjugate of $k^a$, and $n^a$ is real. If our orientation is chosen such that
\begin{align*}
e_{abc} & = 6 \ii k_{[a} \ell_b u_{c]} \, ,
\end{align*}
then $k^a$ and $\bar{k}^a := \ell^a$ satisfy
\begin{align*}
k^b J \ind{_b^a} & = \ii k^a \, , & \bar{k}^b J \ind{_b^a} & = - \ii \bar{k}^a \, .
\end{align*}

At this stage, we remark that a unit timelike vector field determines a congruence of oriented timelike geodesics, and conversely, given such a congruence, we can \emph{always} find a unit timelike vector field $u^a$ tangent to the curves of the congruence. Further, the effect of changing the orientation of $u^a$ will have the effect of interchanging the null structure $\mcN$ and its complex conjugate $\overline{\mcN}$. We can therefore summarise our results in the following way.
\begin{prop}
On an oriented and time-oriented three-dimensional Lorentzian manifold, there is a one-to-one correspondence between  null structures and congruences of oriented timelike curves.
\end{prop}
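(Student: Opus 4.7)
I would construct maps in both directions and show they are mutually inverse. The forward direction is essentially already laid out in the preceding discussion: from a null structure $\mcN$ (automatically of real index zero in this setting), the real spacelike $2$-plane $\Re(\mcN \oplus \overline{\mcN})$ has a timelike line as orthogonal complement, and the time-orientation picks out a unique unit timelike vector field $u^a$ with $u^a u_a = -1$. Its integral curves form the desired oriented timelike congruence.

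For the converse, I would start from such a congruence with unit tangent $u^a$ (sign fixed by the time-orientation) and reverse the recipe. Set $\omega_{ab} := -e_{abc} u^c$ and $J\ind{_a^b} := \omega_{ac} g^{cb}$; a short index calculation using \eqref{eq-vol-form^2} with $q=1$ together with $u^a u_a = -1$ yields the identities \eqref{eq-contact-Lor-properties}. Complexifying, $J$ acts on the two-dimensional spacelike plane $u^\perp$ with $J^2 = -\Id$, and so has eigenvalues $\pm\ii$ with one-dimensional, complex-conjugate eigenspaces. Each eigenvector $k^a$ is automatically null: the relation $k^b J\ind{_b^a} = \pm\ii\, k^a$ combined with the skew-symmetry of $J_{ab} = \omega_{ab}$ (a $2$-form) forces $k^a k_a = 0$. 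Declaring $\mcN$ to be, say, the $-\ii$-eigenspace of $J$ then produces a complex null line distribution whose complex conjugate is the $+\ii$-eigenspace, so $\mcN \cap \overline{\mcN} = 0$ and the real index is zero.

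To finish, I would verify that the two constructions are mutual inverses. Starting from $\mcN$ with an adapted frame $(k^a, \ell^a, n^a)$ and orientation $e_{abc} = 6\ii\, k_{[a} \ell_b u_{c]}$, the forward map gives $u^a = \sqrt{2}\, n^a$; feeding this back into the reverse construction and evaluating $J$ in that frame recovers the already-noted identity $k^b J\ind{_b^a} = \ii\, k^a$, so $k^a$ spans precisely the eigenspace singled out as $\mcN$. In the other order, constructing $J$ from $u^a$, extracting $\mcN$ from its eigenspaces, and taking the orthogonal complement of $\Re(\mcN\oplus\overline{\mcN})$ returns $u^a$ on the nose, since every complex null generator of $\mcN$ lies in $u^\perp$ by construction. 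The addendum that reversing time-orientation swaps $\mcN$ with $\overline{\mcN}$ is then immediate, as $u^a \mapsto -u^a$ flips $\omega_{ab}$ and hence $J$, interchanging its $\pm\ii$-eigenspaces.

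The main obstacle is not analytic but combinatorial: one must pin down the conventions --- time-orientation of $u^a$, sign of the volume form, and which eigenspace of $J$ is declared to be $\mcN$ --- sufficiently carefully that the round-trip constructions recover the \emph{same} null structure rather than its complex conjugate. Once those conventions are fixed, every verification reduces to a short index calculation with \eqref{eq-vol-form^2} and \eqref{eq-contact-Lor-properties}.
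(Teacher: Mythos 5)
Your construction is the paper's own: the proposition carries no separate proof there, being a summary of the discussion that precedes it (the timelike orthogonal complement of $\Re(\mcN\oplus\overline{\mcN})$ in one direction, the $\pm\ii$-eigenspaces of the endomorphism $J\ind{_a^b}$ built from $u^a$ and the volume form in the other), and that is exactly what you flesh out, with a welcome extra verification that the eigenvectors are null and that the two maps are mutually inverse. Two small corrections are needed. First, a null structure on a Lorentzian $3$-manifold is \emph{not} automatically of real index zero: the lemma at the start of section \ref{sec-real-metrics} allows real index $0$ or $1$ in signature $(2,1)$, and the index-$1$ case corresponds to a congruence of null, not timelike, curves; so real index $0$ must be taken as a hypothesis (as it implicitly is, the proposition sitting inside section \ref{sec-Lor0}). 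Second, in the forward map the sign of $u^a$ has to be fixed by the volume-form convention $k^b J\ind{_b^a}=\ii\,k^a$ (equivalently $e_{abc}=6\ii\,k_{[a}\ell_b u_{c]}$) rather than by the time-orientation alone, since otherwise $\mcN$ and $\overline{\mcN}$ would both be sent to the same future-oriented congruence and injectivity would fail; you flag precisely this convention issue in your final paragraph and your round-trip check already uses the correct convention, so this is only a matter of stating the forward map consistently with it.
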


It is convenient to encode the geometric properties of $\mcN$ and $\mcN^\perp$ in the covariant derivative of $u^a$. To characterise these properties, we decompose the covariant derivative of $u^a$ into irreducibles under the stabiliser of $u^a$ in $\SO(2,1)$ as recorded in the following definition.
\begin{defn} Let $u^a$ be a unit timelike vector field on $(\mcM,\bm{g})$ and $\mcU$ its associated congruence of oriented (timelike) curves. Then $\mcU$ is 
\begin{itemize}
\item \emph{geodetic} if and only if
\begin{align}\label{eq-geodesy-sp}
u^b \nabla_b u^a & = 0 \, ;
\end{align}
\item \emph{twist-free} if and only if
\begin{align}\label{eq-twist-sp}
 \left( \nabla_{[a} u_b \right) u_{c]} & = 0 \, ;
\end{align}
\item \emph{divergence-free} if and only if
\begin{align}\label{eq-dilation-sp}
\nabla^c u_c & = 0 \, ;
\end{align}
\item \emph{shear-free} if and only if
\begin{align}\label{eq-shear-sp}
\nabla_{(a} u_{b)} - \frac{1}{2} h_{ab} \nabla^c u_c + \left( u^c \nabla_c u_{(a} \right) u_{b)} & = 0 \, .
\end{align}
\end{itemize}
\end{defn}

Using the standard formula for the Lie derivative
\begin{align}\label{eq-shear-free-Lie2}
\mcL_u h_{ab} & = 2 \, \nabla_{(a} u_{b)} + 2 \left( u^c \nabla_c u_{(a} \right) u_{b)} \, ,
\end{align}
we can re-express the shear-free condition \eqref{eq-shear-sp} as
\begin{align}\label{eq-shear-free-Lie}
\mcL_u h_{ab} & = \Omega^2 \, h_{ab} \, ,
\end{align}
for some function $\Omega$.

These properties of the congruence $\mcU$ do not depend on the orientation of $u^a$, and thus also apply to congruences of unoriented curves.

\begin{prop}\label{prop-co2timegeod}
Let $\mcN$ be a null structure of real index $0$ on $(\mcM,\bm{g})$ equipped with a time-orientation, and let $\mcU$ be its associated congruence of oriented timelike curves. Then
\begin{itemize}
\item $\mcN$  is co-integrable if and only if $\mcU$ is shear-free;
\item $\mcN$  is co-geodetic if and only if $\mcU$ is shear-free and geodetic.
\end{itemize}
\end{prop}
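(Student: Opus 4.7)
The plan is to work in a null frame $(k^a,\ell^a,n^a)$ adapted to $\mcN$ and chosen to be compatible with the Lorentzian reality conditions identified earlier, so that $\ell^a = \overline{k^a}$, $n^a$ is real, and $u^a = \sqrt{2}\, n^a$. Complex conjugation of the expansion \eqref{eq-nablau} gives the reality relations between Newman-Penrose coefficients, most importantly $\nu = -\overline{\kappa}$ and $\mu = -\overline{\rho}$. Proposition \ref{prop-intrinsic-torsion} then translates the two equivalences we must prove into showing that $\mcU$ is shear-free iff $\kappa=0$, and geodetic iff $\rho=0$.

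The first step is to compute the acceleration. Contracting \eqref{eq-nablau} with $n^c$ (using $n^a n_a = -\tfrac12$ and $n^a k_a = n^a \ell_a = 0$) yields $n^c \nabla_c n_a = -\rho\,\ell_a + \mu\, k_a$, so that $u^c\nabla_c u_a = -2\rho\,\ell_a + 2\mu\,k_a$. Under the reality $\ell^a = \overline{k^a}$ and $\mu = -\overline\rho$, the vanishing of this expression is equivalent to $\rho=0$, giving the geodesy condition \eqref{eq-geodesy-sp} iff $\rho=0$.

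Next, I would compute the divergence and the full shear tensor. From \eqref{eq-nablau} a direct trace gives $\nabla^c u_c = \sqrt{2}(\pi-\tau)$, while the projector reads $h_{ab} = g_{ab} + u_a u_b = 2 k_{(a}\ell_{b)}$. Symmetrising \eqref{eq-nablau} and assembling the pieces of \eqref{eq-shear-sp} leads to
\begin{align*}
\nabla_{(a}u_{b)} - \tfrac12 h_{ab}\nabla^c u_c + (u^c\nabla_c u_{(a})u_{b)} = \sqrt{2}\bigl[-\kappa\,\ell_a\ell_b + \nu\,k_a k_b\bigr],
\end{align*}
the $(\pi-\tau)\ell_{(a}k_{b)}$ contribution from $\nabla_{(a}u_{b)}$ cancelling against the $h_{ab}$ divergence term, and the $\rho,\mu$ contributions cancelling against the acceleration term. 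Because $\ell^a\ell^b$ and $k^a k^b$ are linearly independent at each point and $\nu = -\overline\kappa$, the shear tensor vanishes if and only if $\kappa=0$.

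Combining these two calculations with Proposition \ref{prop-intrinsic-torsion} closes both equivalences: $\mcN$ is co-integrable $\Leftrightarrow \kappa=0 \Leftrightarrow$ $\mcU$ is shear-free, and $\mcN$ is co-geodetic $\Leftrightarrow \kappa=\rho=0 \Leftrightarrow$ $\mcU$ is shear-free and geodetic. The proof is essentially a frame computation, with no real obstacle beyond bookkeeping; the mild subtlety worth flagging is the cancellation in the shear tensor, which works cleanly only after the projector identity $h_{ab}=2k_{(a}\ell_{b)}$ is used to reduce the divergence term, and the reality relation $\nu = -\overline{\kappa}$ to promote the complex condition $\kappa=0$ to a genuine vanishing of the real tensor $\sigma_{ab}$.
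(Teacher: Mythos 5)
Your proof is correct and follows essentially the same route as the paper: both compute $\nabla_a u_b$ in the adapted frame with the real-index-$0$ reality conditions, read off the acceleration and shear as the $\rho$- and $\kappa$-components respectively, and then invoke (the proof of) Proposition \ref{prop-intrinsic-torsion}. Your version merely makes explicit the cancellations and the reality relations $\nu=-\overline{\kappa}$, $\mu=-\overline{\rho}$ that the paper leaves implicit in its displayed formula for $\nabla_a u_b$.
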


\begin{proof}
In terms of the Newman-Penrose formalism, and with suitable reality conditions, we have
\begin{align}\label{eq-covderv-u}
\nabla_a u_b & = - \sqrt{2} \left( \bar{\kappa} k_a k_b + \kappa \bar{k}_a \bar{k}_b \right) + 2 \left( \bar{\rho} u_a k_b + \rho u_a \bar{k}_b \right)  + \frac{\ii}{\sqrt{2}} ( \tau - \bar{\tau} ) \omega_{ab} - \frac{1}{\sqrt{2}} (\tau + \bar{\tau}) h_{ab} \, ,
\end{align}
so that taking the irreducible components, we obtained
\begin{align*}
u^b \nabla_b u_a & = - 2 \left( \rho \, \bar{k}_a + \bar{\rho} \, k_a \right) \, , \\
\left( \nabla_{[a} u_b \right) u_{c]} & = \frac{\ii}{\sqrt{2}} \left(  \tau - \bar{\tau} \, \right) \omega_{[ab} u_{c]} \, , \\
\nabla^a u_a & = - \sqrt{2} \left( \tau -\bar{\tau} \right) \, , \\
2 \nabla_{(a} u_{b)} - h_{ab} \nabla^c u_c + 2 \left( u^c \nabla_c u_{(a} \right) u_{b)} & = - 2 \sqrt{2} \left( \kappa \, \bar{k}_a \bar{k}_b + \bar{\kappa} \, k_a k_b \right) \, .
\end{align*}
The conclusion of the proof now follows from (the proof of) Proposition \ref{prop-intrinsic-torsion}.
\end{proof}

\paragraph{Local forms of metrics}
We now give the normal form of a metric admitting a shear-free congruence of timelike geodesics.

\begin{prop}\label{prop-normal-metric}
Let $(\mcM,\bm{g})$ be an oriented and time-oriented three-dimensional Lorentzian manifold admitting a co-integrable null structure $\mcN$, $k^a$ a section of $\mcN$ and $u^a$ the associated unit timelike vector field. Then, around each point, there exist coordinates $(t,z,\bar{z})$ such that the metric takes the form
\begin{align}
\bm{g} & = - \left( \dd t -  \bar{p}  \, \dd z - p \, \dd \bar{z}  \right)^2 + 2 h^2 \dd z \, \dd \bar{z} \, , \label{eq-canon-met-SF} \\
\bm{k} & = h^{-1} \left( \partial_{\bar{\zeta}} + p \partial_t \right) \, , & \bm{u} & = \partial_t  \, , \nonumber
\end{align}
where $h=h(z,\bar{z},t)$ and $p=p(z,\bar{z},t)$.

If $\mcN$ is co-geodetic, we have in addition
\begin{align}\label{eq-canon-met-Geod}
\partial_t p & = 0 \, .
\end{align}
\end{prop}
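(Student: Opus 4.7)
The plan is to build the coordinates by combining a parameter $t$ along the flow of $u^a$ with a complex coordinate $z$ arising from a horizontal conformal map associated to $\mcN$. The shear-free condition will translate directly into $g^{zz}=0$, and this alone forces the normal form \eqref{eq-canon-met-SF}.

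First, using (the proof of) Proposition \ref{prop-closed-coclosed}, I rescale the generator of $\mcN$ so that $k_a = \nabla_a \varphi$ for some complex-valued function $\varphi$; since $k^a$ is null, $\varphi$ is a horizontal conformal map, i.e.\ $g^{ab}\nabla_a\varphi\nabla_b\varphi = 0$. Because $\mcN \subset \mcN^\perp$ and $u^a \in \mcN^\perp$, we have $u(\varphi) = u^a k_a = 0$, so $\varphi$ is constant along the flow of $u^a$. Around any point where $d\varphi, d\bar\varphi$ are linearly independent (a generic condition), I introduce a local parameter $t$ with $\partial_t = u^a$ and set $z := \varphi$, $\bar z := \bar\varphi$, obtaining a local coordinate system in which $\partial_t z = 0$.

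In these coordinates $g_{tt} = u^a u_a = -1$, while the horizontal conformal property of $\varphi$ reads $g^{zz} = 0$ (and similarly $g^{\bar z\bar z} = 0$). A direct inversion of the $3\times 3$ matrix $g_{ab}$ then shows that $g^{zz}=0$ is equivalent to $g_{zz} = -(g_{tz})^2$, and likewise $g_{\bar z\bar z} = -(g_{t\bar z})^2$. Setting $p := g_{t\bar z}$ (so $\bar p = g_{tz}$) and $h^2 := g_{z\bar z} + p\bar p$ — which is real and strictly positive because the induced metric on the spacelike distribution $u^\perp$ is positive definite — the components of $\bm g$ assemble exactly into \eqref{eq-canon-met-SF}. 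The vector $\bm k = h^{-1}(\partial_{\bar z} + p\partial_t)$ is null, orthogonal to $u^a$, and annihilated by $dz$, so it spans $\mcN$ (after adjusting orientation if necessary).

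For the co-geodetic case I use Proposition \ref{prop-co2timegeod} to note that $u^a$ is additionally geodetic, i.e.\ $u^b\nabla_b u^a = 0$. Since $u^a = \delta^a_t$ and $g_{tt} = -1$ is constant, this reduces to $\Gamma^a_{tt} = g^{ab}\partial_t g_{bt} = 0$; using the inverse metric components read off from \eqref{eq-canon-met-SF} (in particular $g^{zz} = 0$ and $g^{z\bar z} = h^{-2}$), the $z$-component evaluates to $h^{-2}\partial_t p$, whose vanishing is precisely \eqref{eq-canon-met-Geod} (the $\bar z$-component is then its complex conjugate, and the $t$-component follows automatically). The main obstacle is essentially bookkeeping — checking the independence of $d\varphi, d\bar\varphi, dt$ at the chosen point and matching the orientation so that the $\bm k$ in \eqref{eq-canon-met-SF} spans $\mcN$ rather than $\bar\mcN$ — since the key analytic input (the existence of $\varphi$) is already supplied by the cited results.
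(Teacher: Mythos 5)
Your proof is correct and follows essentially the same route as the paper's: both obtain the complex coordinate $z$ from a potential for the integrable complex $1$-form $k_a$ (the paper invokes a lemma giving $k_a = h\,\nabla_a\zeta$ with $h$ real, you rescale $k_a$ to be exact via the proof of Proposition \ref{prop-closed-coclosed}) and take $t$ along the flow of $u^a$, with your explicit derivation of the normal form from $g^{zz}=g^{\bar z\bar z}=0$ merely filling in a step the paper asserts without computation. The only blemish is cosmetic: the cofactor expansion shows that $g^{zz}=0$ is equivalent to $g_{\bar z\bar z}=-(g_{t\bar z})^2$, while $g_{zz}=-(g_{tz})^2$ comes from the conjugate condition $g^{\bar z\bar z}=0$; since both hold, the conclusion is unaffected.
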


\begin{proof}
We first note that the complex-valued $1$-form $k_a$ satisfies $k_{[a} \nabla_b k_{c]} =0$. By a lemma of reference \cite{Hill2008}, $k_a$ can be put in the form $k_a = h \nabla_a \zeta$ for some real function $h$ and complex function $\zeta$ such that $\dd \zeta \wedge \dd \bar{\zeta} \neq 0$. We can therefore use $\zeta$ and its complex conjugate $\bar{\zeta}$ as complex coordinates on $(\mcM,\bm{g})$. We can also choose a real coordinate $t$ such that $u^a\nabla_a = \parderv{}{t}$. The metric must therefore take the form \eqref{eq-canon-met-SF}.

Further, the property that $u^a$ be geodetic can be expressed as
\begin{align*}
0 & = - u^b \nabla_b u_a = u^b \left(\nabla_a u_b - \nabla_b u_a \right) \, ,
\end{align*}
which leads to \eqref{eq-canon-met-Geod}.
\end{proof}

\begin{rem}\label{rem-contact-Lor}
One can check that a null structure $\mcN$ of real index $0$ on a three-dimensional Lorentzian manifold $(\mcM,\bm{g})$ is also known as an \emph{almost contact Lorentzian structure} -- see \cite{Calvaruso2011} for details and generalisation to higher odd dimensions. The contact distribution is precisely annihilated by the timelike vector field $u^a$. When the null structure is co-geodetic, the almost contact Lorentzian structure is said to be \emph{normal}.
\end{rem}

\subsubsection{Real index $1$}\label{sec-Lor1}
A null structure $\mcN$ of real index $1$ satisfies $\dim(\mcN_p \cap \overline{\mcN}_p)=1$ at every point $p \in \mcM$. In particular, since $\mcN$ is one-dimensional, it must be totally real.

We can therefore introduce a totally real basis $( k^a , \ell^a , u^a)$ of $\Tgt \mcM$, where $k^a$ is a generator of $\mcN$, $\ell^a$ a null vector field transversal to $\mcN^\perp$ such that $k^a \ell^b g_{ab} = 1$, and $u^a$ a unit spacelike vector field in $\mcN^\perp$, i.e. $u^a u_a =1$,  and thus complementary to $k^a$ and $\ell^a$. The Lorentzian metric then takes the form
\begin{align}\label{eq-real-metric}
g_{ab} & = 2 \, k_{(a} \ell_{b)} + u_a u_b \, .
\end{align}
Setting $n^a = \frac{\ii}{\sqrt{2}} u^a$, an adapted frame $( k^a , \ell^a , n^a )$ can be recovered from $( k^a , \ell^a , u^a)$, in which case it satisfies the reality condition
 \begin{align*}
( k^a , \ell^a , n^a ) \mapsto  ( \overline{k^a} , \overline{\ell^a} , \overline{n^a} ) = ( k^a , \ell^a , -n^a  ) \, .
 \end{align*}

In what follows, we shall not be concerned with the orientation of $k^a$, i.e. whether it is past-pointing or future-pointing.
\begin{defn}
 Let $\mcN$ be a null structure of real index $1$ on $(\mcM,\bm{g})$, and $k^a$ a section of $\mcN$. Let $\mcK$ be the congruence of null curves generated by $k^a$. Then $\mcK$ is
 \begin{itemize}
 \item \emph{geodetic} if and only if
\begin{align}\label{eq-null-geodetic}
\left( k^b \nabla_b k^{[a} \right) k^{b]} & = 0 \, ;
\end{align} 
 \item \emph{divergence-free geodetic} if and only if
 \begin{align}\label{eq-null-non-expanding}
k^a \nabla_b k^b - k^b \nabla_b k^a & = 0 \, .
 \end{align}
 \end{itemize}
\end{defn}

\begin{defn}
A three-dimensional Lorentzian manifold equipped with a divergence-free congruence of null geodesics is called a \emph{Kundt spacetime}.
\end{defn}

\begin{rem}
In dimensions greater than three, congruences of null geodesics are also characterised by their \emph{shear} and \emph{twist}, and a Kundt spacetime is usually defined as a Lorentzian manifold equipped with a shear-free, twist-free and divergence-free congruence of null geodesics. However, in three dimensions, any congruence of null geodesics has vanishing shear and twist.
\end{rem}

From Proposition \ref{prop-intrinsic-torsion}, we now obtain the geometric interpretation of a null structure of real index $1$.
\begin{prop}\label{prop-co2nullgeod}
 Let $\mcN$ be a null structure $\mcN$ of real index $1$ on $(\mcM,\bm{g})$. Then
 \begin{itemize}
 \item $\mcN$ is co-integrable if and only if it generates a congruence of null geodesics.
 \item $\mcN$ is co-geodetic if and only if it generates a divergence-free congruence of null geodesics.
 \end{itemize}
\end{prop}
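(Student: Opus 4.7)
The plan is to observe that this proposition is essentially a direct translation of Proposition \ref{prop-intrinsic-torsion} once the reality constraints imposed by real index $1$ are made explicit. Since $\mcN$ is one-dimensional and satisfies $\dim(\mcN_p \cap \overline{\mcN}_p) = 1$ at every point $p$, we have $\mcN = \overline{\mcN}$, so $\mcN$ admits a real generator $k^a$. A congruence of \emph{real} null curves (in the sense of the proposition) thus makes sense as the congruence of integral curves of any such real $k^a$.

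First, I would recall the two equivalences from Proposition \ref{prop-intrinsic-torsion}:
\begin{align*}
\mbox{$\mcN$ is co-integrable} & & \Longleftrightarrow & & \left( k^b \nabla_b k^{[a} \right) k^{b]} & = 0 \, , \\
\mbox{$\mcN$ is co-geodetic} & & \Longleftrightarrow & & k^a \nabla_b k^b - k^b \nabla_b k^a & = 0 \, ,
\end{align*}
and then compare them directly with the definitions \eqref{eq-null-geodetic} and \eqref{eq-null-non-expanding} of a geodetic null congruence $\mcK$ and a divergence-free geodetic null congruence $\mcK$ generated by $k^a$. The conditions coincide literally, so the two bullet points of the statement follow at once.

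The only point requiring a line of justification is that these characterisations, which were originally stated in Proposition \ref{prop-intrinsic-torsion} for an arbitrary (possibly complex) generator of $\mcN$, remain valid when applied to a real generator, and conversely that any congruence of null curves admitted by $\mcN$ arises from such a real generator. This is immediate from the $\mcN = \overline{\mcN}$ observation above: any two real generators of $\mcN$ differ by a real scaling, and the conditions \eqref{eq-null-geodetic} and \eqref{eq-null-non-expanding} are invariant under rescaling modulo the usual reparametrisation freedom of a congruence.

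There is no real obstacle here; the work has already been done in Proposition \ref{prop-intrinsic-torsion}. The proof proposal is therefore just a short pointer to that proposition together with the reality remark, mirroring the structure of the proof of Proposition \ref{prop-co2timegeod} in the real index $0$ case, but without the need to decompose $\nabla_a u_b$ into irreducibles since here the conditions are already in the correct form.
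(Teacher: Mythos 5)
Your proposal is correct and matches the paper's own treatment: the paper simply states that Proposition \ref{prop-co2nullgeod} follows ``From Proposition \ref{prop-intrinsic-torsion}'', since the conditions \eqref{eq-null-geodetic} and \eqref{eq-null-non-expanding} defining a (divergence-free) geodetic null congruence are literally the conditions in \eqref{eq-equiv-integrable} and \eqref{eq-equiv-strongly}. Your additional remark that real index $1$ forces $\mcN = \overline{\mcN}$, hence a real generator, is exactly the observation the paper makes at the start of section \ref{sec-Lor1}, so nothing is missing.
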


\subsection{Euclidean signature}\label{sec-Euc}
Assume now $(\mcM , \bm{g})$ has Euclidean signature. Then a null structure is necessarily of real index $0$, and an adapted frame $\{ k^a , \ell^a , n^a \}$ will satisfy the reality conditions
 \begin{align*}
 \{ k^a , \ell^a , n^a \} \mapsto  \{ \overline{k^a} , \overline{\ell^a} , \overline{n^a} \} = \{ \ell^a , k^a , -n^a  \} \, .
 \end{align*}
 Set $\bar{k}^a := \ell^a$ and $u^a := \sqrt{2} \ii \, n^a$, so that $u^a$ is a real spacelike vector of unit norm, i.e. $u^a u_a = 1$. We can then write the metric \eqref{eq-complex-metric} as
\begin{align}\label{eq-al-contact-metric}
g_{ab} & = 2 \, k_{(a} \bar{k}_{b)} + u_a u_b \, .
\end{align}
Clearly, this setting is almost identical to the case where $(\mcM,\bm{g})$ has Lorentzian signature and is equipped with a null structure of real index $0$. The only difference is that now $u^a$ is spacelike, rather than timelike.
Real tensors $\omega_{ab}$ and $J \ind{_a^b}$ are defined in exactly the same manner as in the Lorentzian case, and we now have $h_{ab}  = 2 \, k_{(a} \bar{k}_{b)} = g_{ab} - u_a u_b$.

The reader is invited to go through section \ref{sec-Lor0} with $u^a$ now spacelike.

\begin{rem}\label{rem-contact-Riem}
Following on from Remark \ref{rem-contact-Lor}, a null structure $\mcN$ on a three-dimensional Riemannian manifold $(\mcM,\bm{g})$ can be shown to be equivalent to an \emph{almost contact Riemannian (or metric) structure} -- see \cite{Chinea1990} and references therein for details.
\end{rem}

\section{Algebraic classification of the tracefree Ricci tensor}\label{sec-alg-class}
A special feature of the Riemann curvature of the Levi-Civita connection on a three-dimensional (pseudo-)Riemannian manifold $(\mcM,\bm{g})$ is that it is entirely determined by the Ricci tensor. Its tracefree part $\Phi _{ab}$ belongs to a five-dimensional irreducible representation of $\SO(3,\C)$, and as for the Weyl tensor in four dimensions, we can introduce the notion of principal null structure to classify $\Phi _{ab}$.

\begin{defn}\label{defn-principal}
Let $\mcN$ be a null structure. We say that it is \emph{principal} at a point $p$ of $\mcM$ if a null vector $\xi^a$ generating $\mcN_p$ satisfies
\begin{align}\label{eq-PND-eq}
\Phi_{ab} \xi^a \xi^b & = 0 \, ,
\end{align}
at $p$.
\end{defn}

Now, the space of all complex null vectors at a point is parametrised by points of the Riemann sphere $S^2 \cong \CP^1$. To be precise, using a standard chart on $\CP^1$, an arbitrary null vector, written in an adapted frame $(k^a,\ell^a,n^a)$, is of the form
\begin{align}\label{eq-arb-null-vec}
\xi^a (z) & = k^a + 2z \, n^a + z^2 \ell^a \, , 
\end{align}
for some $z \in \C$. The other standard chart on $\CP^1$ is simply obtained by sending $z$ to $z^{-1}$ in \eqref{eq-arb-null-vec}. Thus, to determine all the principal null structures at a point, it suffices to plug \eqref{eq-arb-null-vec} into 
\eqref{eq-PND-eq} to form the quartic polynomial
\begin{align}\label{eq-Ricci-polynomial}
0 & = \Phi (z) := \frac{1}{2} \Phi_{ab} \xi^a (z) \xi^b (z) = \Phi _0 + 4 \, \Phi _1 z + 6 \, \Phi _2 z^2 + 4 \, \Phi _3 z^3 + \Phi _4 z^4 \, ,
\end{align}
in $\C$, where
\begin{align*}
 \Phi_0 & := \frac{1}{2} \Phi _{ab} k^a k^b \, , & 
 \Phi_1 & := \frac{1}{2} \Phi _{ab} k^a n^b \, , &
 \Phi_2 & := \frac{1}{2} \Phi _{ab} k^a \ell^a = \frac{1}{2} \Phi _{ab} n^a n^b \, , \\
 \Phi_3 & := \frac{1}{2} \Phi _{ab} \ell^a n^b \, , &
 \Phi_4 & := \frac{1}{2} \Phi _{ab} \ell^a \ell^b \, .
\end{align*}

Thus, a root $z$ of \eqref{eq-Ricci-polynomial} determines a principal null structure $\mcN$ where $\xi^a(z)$ generates $\mcN$ at a point. Conversely, any
principal null structure determines a unique root (up to multiplicity) of \eqref{eq-Ricci-polynomial} at a point. In particular, the algebraic classification of the tracefree Ricci tensor boils down to the classification of the roots of \eqref{eq-Ricci-polynomial} and their multiplicities.  A full review of the classification of the Weyl tensor in four dimensions is given in \cite{Gover2011}, and we shall only recall their results closely following their terminology.

\begin{defn}\label{defn-multiple}
Let $\mcN$ be a principal null structure determined by a root $z$ of the associated polynomial \eqref{eq-Ricci-polynomial} at a point. We say that $\mcN$ is \emph{multiple} at that point if $z$ is multiple.
\end{defn}

\begin{defn}\label{defn-alg-spec}
We say that $\Phi_{ab}$ is \emph{algebraically special} at a point if it admits a multiple principal null structure at that point.
\end{defn}

Rather than considering the quartic polynomial \eqref{eq-Ricci-polynomial}, it is convenient to consider the quartic \emph{homogeneous} polynomial
\begin{align}\label{eq-Ricci-polynomial-spinor}
0 & = \Phi (z) = \Phi _{ABCD} \xi^A \xi^B \xi^C \xi^D \, ,
\end{align}
where $\xi^A$ are now complex homogeneous coordinates on $\CP^1$, and $\Phi _{ABCD}$ is an element of $\odot^4 (\C^2)^*$, with the understanding that the upper case Roman indices take the values $0$ and $1$. We can then recover \eqref{eq-Ricci-polynomial} by setting $\xi^A (z) = o^A + z \, \iota^A$ where $\{o^A, \iota^A\}$ is a basis of $\C^2$. The roots of \eqref{eq-Ricci-polynomial-spinor} then determine a unique factorisation (up to permutation of the factors)
\begin{align*}
0 & = \Phi (z) = (\xi^A \alpha_A) (\xi^B \beta_B) (\xi^C \gamma_C) (\xi^D \delta_D) \, ,
\end{align*}
where $\alpha_A$, $\beta_A$, $\gamma_A$ and $\delta_A$ are elements $(\C^2)^*$ defined up to scale. In this case, we can write
\begin{align*}
\Phi _{ABCD} & \propto \alpha_{(A} \beta_B \gamma_C \delta_{D)} \, .
\end{align*}
Multiplicities of the roots of \eqref{eq-Ricci-polynomial-spinor} will be mirrored by some of the corresponding $\alpha_A$, $\beta_A$, $\gamma_A$ and $\delta_A$ being proportional to each other.

\begin{rem}\label{rem-spinor-Petrov}
The above identification is $\Phi_{ab}$ with $\Phi _{ABCD}$ is a consequence of the local isomorphism of Lie groups $\SO(3,\C) \cong \SL(2,\C)$ as explained in Appendix \ref{app-spinor-calculus}, where $\C^2$ is identified with the spinor representation of $\SO(3,\C)$. This is virtually identical to the treatment of the (anti-)self-dual part of the Weyl tensor in four dimensions \cites{Petrov2000,Witten1959,Penrose1960}.
\end{rem}

\subsection{Complex case}\label{sec-Petrov-types-complex}
If the metric is complex with no preferred reality condition imposed on it, the coefficients $( \Phi _0 , \Phi _1 , \Phi _2 , \Phi _3 , \Phi _4 )$ are generically complex, and the polynomial \eqref{eq-Ricci-polynomial} has generically four distinct roots, and thus four distinct principal null structures. Following the notation of \cite{Penrose1986}, we can encode the multiplicities of the roots of \eqref{eq-Ricci-polynomial} in a partition $\{a_1,a_2,a_3,a_4\}$ of the integer $4$, where $a_1+a_2+a_3+a_4=4$. We shall omit those $a_i$ from the partition whenever $a_i=0$. We thus obtain a Petrov classification of $\Phi_{ab}$:
\begin{displaymath}
\begin{array}{c|c|c}
\text{Petrov type} & \{a_1,a_2,a_3,a_4 \} & \Phi _{ABCD} \\
\hline
\text{I} & \{ 1,1,1,1 \} & \alpha_{(A} \beta_{B} \gamma_{C} \delta_{D)} \\
\text{II} & \{ 2,1,1 \} & \alpha_{(A} \alpha_{B} \beta_{C} \gamma_{D)} \\
\text{III} & \{3,1\} & \alpha_{(A} \alpha_{B} \alpha_{C} \beta_{D)} \\
\text{N} & \{4\} & \alpha_{A} \alpha_{B} \alpha_{C} \alpha_{D} \\
\text{D} & \{2,2\} & \alpha_{(A} \alpha_{B} \beta_{C} \beta_{D)}\\
\text{O} & \{ - \} & 0
\end{array}
\end{displaymath}
Petrov types II, II, III and N single out  a multiple principal null structure, and Petrov type D a pair of distinct multiple principal null structures.

\begin{rem}\label{rem-typeD}
Suppose $\Phi_{ab}$ is of Petrov type D so that the polynomial \eqref{eq-Ricci-polynomial} has two distinct roots  of multiplicity two.  Then we can always arrange that these roots are $0$ and $\infty$ in $\CP^1$, which is done by some suitable change of frame \eqref{eq-adapted-modulo} by assuming, with no loss, that one of these roots is $0$. In this case, we have a distinguished frame $(k^a,\ell^a,n^a)$ adapted to \emph{both} multiple principal null structures, namely $k^a$ and $\ell^a$.
\end{rem}

Let $k^a$ be a generator of a null structure $\mcN$. To verify whether $\mcN$ is a (multiple) principal null structure, it suffices to check whether any of the following algebraic relations holds:
\begin{displaymath}
\begin{array}{ccc}
\text{Petrov type I:} & k^a \Phi_{ab} k^b = 0 \, , & \Phi_0 = 0 \, , \\
\text{Petrov type II:} & k_{[a} \Phi_{b]c} k^c = 0 \, , & \Phi_0 = \Phi_1 = 0 \, , \\
\text{Petrov type D:} & k_{[a} \Phi_{b]c} k^c = 0 \quad \mbox{and} \quad \ell_{[a} \Phi_{b]c} \ell^c = 0 \, , & \Phi_0 = \Phi_1 = 0 = \Phi_3 = \Phi_4 \, , \\
\text{Petrov type III:} & k^a \Phi_{ab} = 0 \quad \mbox{or} \quad k_{[a} \Phi_{b][c} k_{d]} = 0 \, , & \Phi_0 = \Phi_1 = \Phi_2 = 0 \, , \\
\text{Petrov type N:} & k_{[a} \Phi_{b]c} = 0 \, , & \Phi_0 = \Phi_1 = \Phi_2 = \Phi_3 = 0 \, , 
\end{array}
\end{displaymath}
where $\ell^a$, in the case of Petrov type D, determines a multiple principal structure distinct from $\mcN$.

Because of the importance of the Goldberg-Sachs theorem, we highlight the algebraically special condition of the tracefree Ricci tensor by means of the following proposition. In particular, the proofs of the various versions of the Goldberg-Sachs theorem in section \ref{sec-GS} will impinge on it.
\begin{prop}
In an adapted frame, the tracefree Ricci tensor is algebraically special if and only if
\begin{align*}
\Phi_0 = \Phi_1 = 0 \, .
\end{align*}
\end{prop}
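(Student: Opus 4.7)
The plan is to reduce the claim directly to a statement about the root structure of the quartic $\Phi(z)$ introduced in \eqref{eq-Ricci-polynomial}, using the fact that a frame is adapted to $\mcN$ precisely when $k^a$ generates $\mcN$, and hence corresponds to the parameter value $z=0$ in the parametrisation $\xi^a(z) = k^a + 2z\,n^a + z^2 \ell^a$ of null vectors on $\CP^1$.

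First I would observe that, by Definition \ref{defn-principal}, a null structure generated by $k^a$ (that is, $\xi^a(0)$) is principal if and only if $\Phi(0)=0$, and by Definition \ref{defn-multiple} it is multiple if and only if $z=0$ is a multiple root of $\Phi(z)$, i.e.\ $\Phi(0) = \Phi'(0) = 0$. Reading off the polynomial
\begin{align*}
\Phi(z) = \Phi_0 + 4\,\Phi_1 z + 6\,\Phi_2 z^2 + 4\,\Phi_3 z^3 + \Phi_4 z^4\,,
\end{align*}
one sees immediately that $\Phi(0) = \Phi_0$ and $\Phi'(0) = 4\,\Phi_1$.

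For the direction ($\Leftarrow$), if $\Phi_0 = \Phi_1 = 0$ then $z=0$ is a root of $\Phi(z)$ of multiplicity at least two, so $\mcN$ (generated by $k^a$) is a multiple principal null structure, hence $\Phi_{ab}$ is algebraically special at that point by Definition \ref{defn-alg-spec}. For the converse ($\Rightarrow$), if $\Phi_{ab}$ is algebraically special in a frame \emph{adapted to the distinguished multiple principal null structure} $\mcN$ — which is the content of ``in an adapted frame'' — then $k^a$ generates a multiple root, and so $\Phi_0 = \Phi_1 = 0$. The table compiled just above the proposition can then be used as a consistency check: in each of the algebraically special Petrov types II, D, III, N, the common vanishing conditions indeed include $\Phi_0 = \Phi_1 = 0$, and no further vanishing is forced on $\Phi_2, \Phi_3, \Phi_4$ by type II alone.

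There is essentially no obstacle here beyond interpretation: one must be careful that ``adapted to $\mcN$'' is taken to mean that $k^a$ generates the (relevant) multiple principal null structure, and not merely some arbitrary null line — otherwise the equivalence is vacuous in one direction. Once this is spelt out, the proposition reduces to the elementary observation that $z=0$ is a double root of a quartic $\Phi(z)$ precisely when its constant and linear coefficients vanish.
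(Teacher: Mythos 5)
Your argument is correct and is precisely the reasoning the paper intends: the proposition is stated without an explicit proof, as an immediate consequence of the root--null-structure correspondence for the quartic $\Phi(z)$, and your observation that $\Phi(0)=\Phi_0$ and $\Phi'(0)=4\,\Phi_1$ together with the convention that the adapted frame is chosen so that $k^a$ generates the multiple principal null structure is exactly what is being left implicit. Your caveat about the meaning of ``adapted'' is well taken and matches the paper's usage throughout section \ref{sec-GS}.
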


\subsection{Real case}
\subsubsection{Euclidean signature}\label{sec-Petrov-types-Euc}
In Euclidean signature, the four roots of the polynomial \eqref{eq-Ricci-polynomial} come in two complex conjugate pairs. Thus, we distinguish only two algebraic types: the generic type G, where the conjugate pairs of complex roots are distinct, and the special type D, where the pairs coincide. Notationally, we shall bracket a conjugate pair of complex roots, i.e. $\{1^\C ,\overline{1^\C} \}$, and where the ${}^\C$ indicates that the root is complex.
\begin{displaymath}
\begin{array}{c|c|c}
\text{Petrov type} & \{a_1,a_2,a_3,a_4 \} & \Phi _{ABCD} \\
\hline
\text{G} & \{ \{1^\C ,\overline{1^\C} \}, \{1^\C ,\overline{1^\C} \} \} & \xi_{(A} \hat{\xi}_{B} \eta_{C} \hat{\eta}_{D)}\\
\text{D} & \{ \{1^\C ,\overline{1^\C} \}^2 \} & \xi_{(A} \hat{\xi}_{B} \xi_{C} \hat{\xi}_{D)}\\
\text{O} & \{ - \} & 0 
\end{array}
\end{displaymath}
Here, a $\hat{}$ denotes a reality  condition on $(\C^2)^*$ defined as follows: if $\xi_A = (\xi_0 , \xi_1)$, then $\hat{\xi}_A = (-\bar{\xi}_1 , \bar{\xi}_0)$.

Since a null structure determines a unit vector $u^a$ (up to sign) and an endomorphism $J \ind{_a^b}$ as described in section \ref{sec-Euc}, we can characterise principal null structures as follows
\begin{displaymath}
\begin{array}{cc}
\text{Petrov type G:} & J \ind{_{(a}^c} \Phi_{b)c} = 0 \quad \mbox{or} \quad 2 u_{[a} \Phi_{b][c} u_{d]} + u_{[a} g_{b][c} u_{d]} u^e u^f \Phi_{ef} = 0 \, , \\
\text{Petrov type D:} & J \ind{_{(a}^c} \Phi_{b)c} = 0 \quad \mbox{and} \quad u_{[a} \Phi_{b]c} u^c = 0 \, .
\end{array}
\end{displaymath}

\begin{rem}
In the context of almost contact metric manifolds (see Remarks \ref{rem-contact-Lor} and \ref{rem-contact-Riem}), the type D condition is equivalent to the manifold being \emph{$\eta$-Einstein}, i.e. the Ricci tensor takes the form $R_{ab} = a \, g_{ab} + b \, u_a u_b$
for some unit vector $u^a$ and functions $a$ and $b$, i.e. $R= 3a -b$ and $\Phi_{ab} = \frac{b}{3} \left( g_{ab} + 3 u_a u_b \right)$.
\end{rem}

\subsubsection{Lorentzian signature}\label{sec-Petrov-types-Lor}
In Lorentzian signature, the situation is a little more complex, and one distinguishes ten Petrov types including type O. Here, a root of \eqref{eq-Ricci-polynomial} can either be real or complex. We distinguish the following cases, excluding type O:
\begin{itemize}
\item If all roots are real, we obtain a totally real analogue of the complex Petrov types with five Petrov types denoted G$_r$, II$_r$, III$_r$, N$_r$ and D$_r$. Petrov types II$_r$, III$_r$ and N$_r$ single out a multiple principal null structure of real index $1$, and Petrov type D$_r$ a pair of distinct multiple principal null structures of real index $1$.
\item If all the roots are complex, they come in conjugate pairs, and we have two Petrov types, G and D as in the Euclidean case. Petrov type D singles out a complex conjugate pair of multiple principal null structures of real index $0$.
\item The remaining types, denoted SG and II, occur when $\Phi(z)$ has two real roots and one conjugate pair of complex roots. Petrov Type II singles out a multiple principal null structure of real index $1$.
\end{itemize}
Using the same notation as above to describe the degeneracy and reality of the roots of \eqref{eq-Ricci-polynomial}, we obtain the following Petrov types of $\Phi_{ab}$:
\begin{displaymath}
\begin{array}{c|c|c}
\text{Petrov type} & \{a_1,a_2,a_3,a_4 \}  & \Phi _{ABCD} \\
\hline
\text{G} & \{ \{ 1^\C ,\overline{1^\C} \}, \{ 1^\C ,\overline{1^\C} \} \} & \xi_{(A} \hat{\xi}_{B} \eta_{C} \hat{\eta}_{D)} \\
\text{SG} & \{ 1,1,\{ 1^\C ,\overline{1^\C} \} \} & \alpha_{(A} \beta_{B} \eta_{C} \hat{\eta}_{D)}\\
\text{II} & \{ 2, \{ 1^\C ,\overline{1^\C} \} \} & \alpha_{(A} \alpha_{B} \eta_{C} \hat{\eta}_{D)}\\
\text{D} & \{ \{ 1^\C ,\overline{1^\C})^2 \} & \xi_{(A} \hat{\xi}_{B} \eta_{C} \hat{\eta}_{D)}\\
\text{G$_r$} & \{ 1,1,1,1 \} & \alpha_{(A} \beta_{B} \gamma_{C} \delta_{D)} \\
\text{II$_r$} & \{ 2,1,1 \} &  \alpha_{(A} \alpha_{B} \beta_{C} \gamma_{D)} \\
\text{III$_r$} & \{ 3,1 \} & \alpha_{(A} \alpha_{B} \alpha_{C} \beta_{D)} \\
\text{N$_r$} & \{ 4 \} & \alpha_{A} \alpha_{B} \alpha_{C} \alpha_{D} \\
\text{D$_r$} & \{ 2,2 \} & \alpha_{(A} \alpha_{B} \beta_{C} \beta_{D)}\\
\text{O} & \{ - \} & 0 
\end{array}
\end{displaymath}
Characterisation of the Petrov types in terms of their principal null structures can be done as in the previous cases in the obvious way.

\section{Curvature conditions}\label{sec-curvature-cond}
Before we move to our main results on the Goldberg-Sachs theorem, we remark that in three dimensions, unlike in higher dimensions \cite{Taghavi-Chabert2013}, the conformally invariant condition
\begin{align*}
[ \Gamma ( \mcN^\perp ) , \Gamma ( \mcN^\perp ) ] \subset \Gamma ( \mcN^\perp ) \, ,
\end{align*}
for a null structure $\mcN$, imposes no constraint on the curvature. Non-trivial constraints on the Ricci curvature are expected to arise from non-conformally invariant conditions on $\mcN$ \cite{Taghavi-Chabert2013}. In particular, we have the following proposition.
\begin{prop}\label{prop-integrability-cond}
Let $\mcN$ be a null structure on an oriented three-dimensional (pseudo-)Riemannian manifold $(\mcM,\bm{g})$. 
\begin{itemize}
\item Suppose that $\mcN$ is co-geodetic. Then $\mcN$ is a principal null structure, i.e.
\begin{align}\label{eq-integrability-cond}
k^a k^b \Phi_{ab} & = 0 \, ,
\end{align}
for any generator $k^a$ of $\mcN$.
\item Suppose $\mcN$ is parallel. Then $\Phi_{ab}$ is algebraically special, i.e.
\begin{align}\label{eq-integrability-cond-parallel}
k^c \Phi_{c[a} k_{b]} & = 0 \, .
\end{align}
for any generator $k^a$ of $\mcN$.
\end{itemize}
\end{prop}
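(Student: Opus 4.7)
The plan is to translate each geometric hypothesis on $\mcN$ into a constraint on the covariant derivatives of an adapted frame, then invoke the Ricci identity together with the three-dimensional decomposition \eqref{eq-Riemann} to read off the desired conditions on $\Phi_0$ and $\Phi_1$. Working in an adapted frame $(k^a,\ell^a,n^a)$ with $k^a$ generating $\mcN$, I would first observe using (the proof of) Proposition \ref{prop-intrinsic-torsion} and \eqref{eq-nablak} that $\mcN$ is co-geodetic if and only if $\kappa=\rho=0$, and $\mcN$ is parallel if and only if in addition $\tau=0$, so that $\nabla_a k^b = \phi_a k^b$ for some $1$-form $\phi_a$; specialising \eqref{eq-nablau} to $\kappa=\rho=\tau=0$ likewise gives $\nabla_a n^b = \psi_a k^b$, so $\nabla_a n^b$ is itself a section of $\mcN$ in the parallel case.

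The heart of the argument is the pair of curvature identities
\begin{align*}
R_{abd}{}^c\, n^a k^b k^d & = -2\Phi_0 \, n^c + 2\Phi_1 \, k^c \, , \\
R_{abd}{}^c\, k^a \ell^b n^d & = -2\Phi_3 \, k^c + 2\Phi_1 \, \ell^c \, ,
\end{align*}
which I would establish by substituting \eqref{eq-Riemann} into the left-hand sides and contracting using the adapted-frame pairings $k^a\ell_a=1$, $n^a n_a=-\tfrac{1}{2}$ (all others vanishing); the scalar-curvature contributions cancel identically in both cases.

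For the first assertion of the proposition, co-geodesy gives $X^a \nabla_a k^b \in \Gamma(\mcN)$ for every $X\in\Gamma(\mcN^\perp)$, together with the integrability $[X,Y]\in\Gamma(\mcN^\perp)$ for $X,Y\in\Gamma(\mcN^\perp)$, whence the Ricci identity forces $R_{abd}{}^c X^a Y^b k^d \in \Gamma(\mcN)$ for all such $X,Y$. Taking $X=n$, $Y=k$ in the first identity above, the coefficient of $n^c$ on the left must vanish, i.e.\ $\Phi_0=0$, which is \eqref{eq-integrability-cond}. For the second assertion, $\nabla_a k^b = \phi_a k^b$ gives $R_{abd}{}^c k^d \in \Gamma(\mcN)$ unconditionally (hence $\Phi_0=0$ once more), while $\nabla_a n^b = \psi_a k^b$ gives $R_{abd}{}^c n^d \in \Gamma(\mcN)$; substituting $(a,b)=(k,\ell)$ into the second identity, the coefficient of $\ell^c$ must vanish, giving $\Phi_1=0$, and the combination $\Phi_0=\Phi_1=0$ is equivalent to $\Phi_{ab} k^b \propto k_a$, i.e.\ \eqref{eq-integrability-cond-parallel}. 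The main obstacle will be the algebraic bookkeeping in the two curvature identities: one must carefully track the raised-index position in $R_{abd}{}^c$, the $3$--$4$ antisymmetry of the Riemann tensor, and the non-unit normalisation $n^a n_a=-\tfrac{1}{2}$, so that the $\Phi_0$ and $\Phi_1$ components appear with the correct signs.
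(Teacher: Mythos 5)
Your proposal is correct and is essentially the paper's argument in invariant clothing: the paper simply quotes the Newman--Penrose Ricci identities \eqref{Ricci000100}, \eqref{Ricci001100} and \eqref{Ricci110100} with $\kappa=\rho=0$ (resp.\ $\kappa=\rho=\tau=0$) to read off $\Phi_0=0$ (resp.\ $\Phi_0=\Phi_1=0$), and your two contracted curvature identities together with the integrability argument $R\ind{_{abd}^c}X^aY^bk^d\in\Gamma(\mcN)$ are exactly the frame components of that same Ricci identity. Both your identities and the final step $\Phi_0=\Phi_1=0\Leftrightarrow k^c\Phi_{c[a}k_{b]}=0$ check out against \eqref{eq-Riemann} and the adapted-frame pairings, so the only difference is that you rederive the needed components from the algebraic decomposition of the Riemann tensor rather than citing the appendix.
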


\begin{proof}
We use the Newman-Penrose formalism of appendix \ref{sec-NP} adapted to $\mcN$.
\begin{itemize}
\item By assumption, $\kappa = \rho = 0$. Then, by equation \eqref{Ricci000100}, we have $\Phi _0 = 0$.
\item By assumption, $\kappa = \rho = \tau = 0$. Then, \eqref{Ricci001100}, \eqref{Ricci000100} and \eqref{Ricci110100} give $\Phi _1 = 0$, $\Phi _0 = 0$ and $\Phi _2 = S$ respectively.
\end{itemize}
This completes the proof.
\end{proof}

\begin{rem}
In anticipation of the Goldberg-Sachs theorem, which will be concerned with the relation between algebraically special tracefree Ricci tensors and co-geodetic null structures, the above proposition tells that the existence of a co-geodetic null structure $\mcN$ already imposes algebraic constraints relating the curvature and $\mcN$.
\end{rem}

\section{Main results}\label{sec-GS}
Throughout this section, $(\mcM,\bm{g})$ will denote an oriented three-dimensional (pseudo-)Riemannian manifold. As before, the tracefree Ricci tensor will be denoted by $\Phi_{ab}$ and the Ricci scalar by $R$, and its scalar multiple $S := \frac{1}{12} R$. We also recall the definitions of the Cotton tensor and its Hodge dual:
\begin{align*}
 A_{a b c} & := 2 \, \nabla_{[b} \Rho _{c] a} = - 2 \, \nabla_{[b} \Phi_{c]a} + 2 g_{a[b} \, \nabla_{c]} S \, , & 
 (*A)_{ab} & := \frac{1}{2} e \ind{_b^{cd}} A \ind{_{acd}} = - e \ind{_{(a|}^{cd}} \nabla_{c} \Phi _{d |b)} \, .
\end{align*}
In particular, the components of $A_{abc}$ with respect to the frame $(k^a, \ell^a, n^a)$ will be denoted
\begin{align}
\begin{aligned}\label{eq-A-components}
 A_0 & := 2 \, A _{abc} k^a k^b n^c \, , &
 A_1 & := A _{abc} k^a k^b \ell^c \, , &
 A_2 & := 2 \, A _{abc} k^a n^b \ell^c \, , \\
 A_3 & := A _{abc} \ell^a k^b \ell^c \, , &
 A_4 & := 2 \, A _{abc} \ell^a n^b \ell^c \, .
 \end{aligned}
\end{align}
The results of sections \ref{sec-obst}, \ref{sec-algsp-co-geo} and \ref{sec-co-geo-algsp} will be stated for an arbitrary complex-valued metric with no reality conditions imposed.

\subsection{Obstructions to the existence of multiple co-geodetic null structures}\label{sec-obst}
We first present results concerning curvature obstructions to the existence of multiple co-geodetic null structures.

\begin{prop}\label{prop-obstruction-GS}
Let $\mcN$ be a null structure on $(\mcM,\bm{g})$, and let $k^a$ be any generator of $\mcN$. Suppose that $\mcN$ is co-geodetic and multiple principal.
\begin{itemize}
 \item If $\Phi_{ab}$ is of Petrov type II, then
\begin{align}\label{eq-obstruction-II}
k^a k^b \left( A_{abc} + 3 \, g_{bc} \nabla_a S \right) & = 0 \, .
\end{align}
 \item If $\Phi_{ab}$ is of Petrov type III, then
\begin{align}\label{eq-obstruction-III}
k^a \left( A_{abc} - 2 \, g_{a[b} \nabla_{c]} S \right) & = 0 \, , & k^a k^b A_{abc} & = 0 \, , & k^a \nabla_a S & = 0 \, .
\end{align}
 \item If $\Phi_{ab}$ is of Petrov type N, then
\begin{align}\label{eq-obstruction-N}
k^c \left( A_{abc} - g_{ca} \nabla_b S \right) & = 0 \, , & k^a A_{abc} & = 0 \, , & k_{[a} \nabla_{b]} S & = 0 \, .
\end{align}
\end{itemize}
\end{prop}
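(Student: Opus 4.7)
My plan is to work in the Newman-Penrose formalism adapted to $\mcN$, as set up in Appendix \ref{sec-NP}, and use the Newman-Penrose form of the contracted Bianchi identity \eqref{eq-Bianchi-contracted} together with the NP expression of the Cotton tensor components \eqref{eq-A-components}. The hypothesis that $\mcN$ is co-geodetic translates, by (the proof of) Proposition \ref{prop-intrinsic-torsion}, to $\kappa = \rho = 0$; and the hypothesis that $\mcN$ is multiple principal translates to $\Phi_0 = \Phi_1 = 0$. The strengthening to Petrov types III and N adds $\Phi_2 = 0$ and $\Phi_2 = \Phi_3 = 0$ respectively.

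The strategy in each of the three cases is the same. Expanding $A_{abc} = -2\nabla_{[b}\Phi_{c]a} + 2 g_{a[b}\nabla_{c]}S$ in the adapted frame, each of the scalars $A_0, \ldots, A_4$ defined in \eqref{eq-A-components} can be written as a frame-derivative of one of the $\Phi_i$ plus algebraic corrections consisting of NP connection coefficients multiplying the remaining $\Phi_j$'s and the scalars $DS, \Delta S, \delta S$. For example, $A_1$ reads essentially as $-D\Phi_1 + \delta\Phi_0 + (\text{correction in NP coefficients times }\Phi_j\text{'s}) + \frac{1}{2} DS$; and similarly for the other $A_i$. (These are nothing but the Bianchi identities of Appendix \ref{sec-NP} rewritten.) Once $\kappa = \rho = 0$ is imposed, each of the vanishing $\Phi_j$'s remains vanishing along the relevant frame directions, so derivatives of such $\Phi_j$ that appear drop out, and the remaining algebraic relations collapse to identities between the $A_i$'s and derivatives of $S$.

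Concretely, for Petrov type II ($\Phi_0 = \Phi_1 = 0$) the only non-trivial combination one expects to get is on the $kk$-components of $A_{abc}$, and one reads off that $A_0 = -6\,\delta S$ and $A_1 = -3\,DS$ on this locus; rewritten tensorially, this is precisely \eqref{eq-obstruction-II}, i.e.\ $k^a k^b A_{abc} = -3 k^a k^b g_{bc} \nabla_a S$. For Petrov type III we further set $\Phi_2 = 0$ and pick up an additional identity from the Bianchi equation whose leading term is $\Delta\Phi_1 - \delta\Phi_2$; on this locus it reduces to $A_2 = -2\,DS = 0$ together with $A_0 = -4\,\delta S$, etc., which translated back to tensors gives the three conditions in \eqref{eq-obstruction-III} (note that $k^a\nabla_a S = 0$ is forced by the tracelessness of the combination already giving $A_0$, $A_1$). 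For Petrov type N the further vanishing $\Phi_3 = 0$ enlarges the set of Bianchi identities that become algebraic; these collectively imply \eqref{eq-obstruction-N}.

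The main technical obstacle is purely bookkeeping: correctly identifying, among the many Newman-Penrose Bianchi equations of Appendix \ref{sec-NP}, those whose leading frame-derivative hits a $\Phi_j$ that we have set to zero, and then separating the resulting algebraic content into an $A_i$-contribution and a $\nabla S$-contribution. Once that is done, translating the NP-scalar identities back into index-free tensor statements of the form \eqref{eq-obstruction-II}--\eqref{eq-obstruction-N} is a direct check: one contracts the definition \eqref{eq-Cotton} of $A_{abc}$ with the appropriate product of $k^a$'s (and, for type N, a single free index) and compares with the expressions for the $A_i$ in the adapted frame.
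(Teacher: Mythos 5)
Your overall strategy is exactly the paper's: pass to an adapted Newman--Penrose frame, translate co-geodesy into $\kappa=\rho=0$ and the Petrov type into the vanishing of the appropriate $\Phi_i$, and then read the conclusions off the Newman--Penrose forms of the Bianchi identity and of the Cotton tensor. The paper carries this out by first identifying the frame content of \eqref{eq-obstruction-II}--\eqref{eq-obstruction-N} and then invoking \eqref{Cotton0000}, \eqref{Cotton0001}, \eqref{Cotton0011}, \eqref{Cotton0111} together with the combinations three times \eqref{Bianchi00} plus \eqref{Cotton0001}, two times \eqref{Bianchi01} plus \eqref{Cotton0011}, and \eqref{Bianchi11} plus \eqref{Cotton0111}.

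The concrete component identities you preview, however, are not the right ones, and as stated they would not establish the proposition. For Petrov type II, \eqref{eq-obstruction-II} is the vanishing of a $1$-form in $c$ whose $n^c$-component is $\tfrac12 A_0$ (the $\nabla S$ term drops out since $k_cn^c=0$) and whose $\ell^c$-component is $A_1+3\,DS$; so you must prove $A_0=0$ outright, not $A_0=-6\,\delta S$. Indeed \eqref{Cotton0000} with $\Phi_0=\Phi_1=\kappa=0$ gives $A_0=0$ with no $\delta S$ appearing anywhere, while the Bianchi--Cotton combination gives $A_1+3\,DS=0$. Similarly, for type III the correct output is $DS=0$ (directly from \eqref{Bianchi00}, since $-DS=-2\kappa\Phi_3=0$ there), $A_0=A_1=0$, and $A_2+2\,\delta S=0$; your claimed ``$A_2=-2DS=0$'' and ``$A_0=-4\delta S$'' are wrong on both counts --- $\delta S$ is \emph{not} forced to vanish at type III, only at type N. Since the entire content of the proposition is precisely which $A_i$ vanish and which are balanced against which frame derivative of $S$, this bookkeeping \emph{is} the proof, and it needs to be done correctly rather than asserted. (Your parenthetical observation that $k^a\nabla_aS=0$ is implied by the first two conditions of \eqref{eq-obstruction-III} is in fact correct --- contracting the first with $k^b\ell^c$ gives $A_1=-DS$ --- but the paper obtains $DS=0$ independently from the Bianchi identity, which is what the proof actually requires.)
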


\begin{proof}
With reference to the Newman-Penrose formalism, and in a frame adapted to $\mcN$, we first note that
\begin{align*}
\mbox{Condition \eqref{eq-obstruction-II}} & & \Longleftrightarrow & & \left\{
\begin{aligned}
A _0 \equiv 0 \, , \\
A _1 + 3 \, D S \equiv 0 \, ,
\end{aligned} \right. \\
\mbox{Condition \eqref{eq-obstruction-III}} & & \Longleftrightarrow & & \left\{
\begin{aligned}
A _0 = A _1 \equiv 0 \, , \qquad D S = 0 \\
A _2 + 2 \, \delta S \equiv 0 \, .
\end{aligned} \right. \\
\mbox{Condition \eqref{eq-obstruction-N}} & & \Longleftrightarrow & & \left\{
\begin{aligned}
A _0 = A _1 = A_2 \equiv 0 \, , \qquad D S = \delta S = 0 \\
A _3 + \Delta S \equiv 0 \, .
\end{aligned} \right.
\end{align*}
The assumption that $\mcN$ is co-geodetic is simply $\kappa \equiv 0$ and $\rho \equiv 0$. We now deal with each case separately, referring to the Newman-Penrose equations given in Appendix \ref{sec-NP}.

\begin{itemize}
\item Assuming the type II condition, i.e. $\Phi _0 = \Phi _1 \equiv 0$, we have
\begin{align*}
\eqref{Cotton0000} : & & A _0 & \equiv 0 \, , \\
3 \times \eqref{Bianchi00} + \eqref{Cotton0001} : & & A _1 + 3 \, D S & \equiv 0 \, .
\end{align*}
\item Assuming the type III condition, i.e. $\Phi _0 = \Phi _1 = \Phi _2 \equiv 0$, we have
\begin{align*}
 \eqref{Bianchi00} : & & D S & =  0 \, , \\
 \eqref{Cotton0000} : & & A _0 & = 0 \, , \\
 \eqref{Cotton0001} : & & A _1 & = 0 \, , \\
2 \times \eqref{Bianchi01} + \eqref{Cotton0011} : & & A _2 + 2 \, \delta S & \equiv 0 \, .
\end{align*}
\item Assuming the type N condition, i.e. $\Phi _0 = \Phi _1 = \Phi _2 = \Phi _3 \equiv 0$, we have
\begin{align*}
\eqref{Bianchi00} : & & D S & = 0 \, , \\
\eqref{Bianchi01} : & & \delta S & = 0 \, , \\
\eqref{Cotton0000} : & & A _0 & = 0 \, , \\
\eqref{Cotton0001} : & & A _1 & = 0 \, , \\
\eqref{Cotton0011} : & & A _2 & = 0 \, , \\
\eqref{Bianchi11} + \eqref{Cotton0111} : & & A _3 + \Delta S & \equiv 0 \, .
\end{align*}
\end{itemize}
Comparison with the frame components \eqref{eq-A-components} completes the proof.
\end{proof}

\begin{rem}
Proposition \ref{prop-obstruction-GS} also applies to tracefree Ricci tensors of Petrov type D, in which case one has a pair of distinct multiple principal null structures as described in Remark \ref{rem-typeD}.
\end{rem}

\subsection{Algebraic speciality implies co-geodetic null structures}\label{sec-algsp-co-geo}
We are now in the position of formulating the Goldberg-Sachs theorems (Theorems \ref{thm-GS-II-gen}, \ref{thm-GS-III-gen} and \ref{thm-GS-N-gen}), along lines similar to Kundt \& Thompson \cite{Kundt1962} and Robinson \& Schild \cite{Robinson1963}. Note however that unlike the versions of these authors, the following theorems are \emph{not} conformally invariant.

\begin{thm}[Petrov type II]\label{thm-GS-II-gen}
Let $\mcN$ be a multiple principal null structure on $(\mcM,\bm{g})$. Assume that $\Phi_{ab}$ is of Petrov type II and does not degenerate further. Suppose further that, for any generator $k^a$ of $\mcN$,
\begin{align}\tag{\ref{eq-obstruction-II}}
k^a k^b \left( A_{abc} + 3 \, g_{bc} \nabla_a S \right) & = 0 \, .
\end{align}
Then $\mcN$ is co-geodetic.
\end{thm}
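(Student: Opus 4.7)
The plan is to analyze the hypothesis in the Newman-Penrose formalism of Appendix \ref{sec-NP}, using a frame $(k^a, \ell^a, n^a)$ adapted to $\mcN$ with $k^a$ a generator of $\mcN$. Since $\mcN$ is a multiple principal null structure of pure Petrov type II, we have $\Phi_0 = \Phi_1 = 0$, while $\Phi_2 \neq 0$ by the non-degeneracy assumption (otherwise the type would drop to III or D). By Proposition \ref{prop-intrinsic-torsion}, the conclusion $\mcN$ is co-geodetic is equivalent to $\kappa = \rho = 0$, and this is what we aim to establish. The content of the hypothesis \eqref{eq-obstruction-II}, translated into frame components as in the proof of Proposition \ref{prop-obstruction-GS}, reads
\begin{align*}
A_0 & = 0 \, , & A_1 + 3 \, D S & = 0 \, .
\end{align*}

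The strategy is to read the computation in Proposition \ref{prop-obstruction-GS} in the reverse direction. There, one assumed $\kappa = \rho = 0$ and derived the vanishing of these Cotton combinations. Here, we instead drop the assumption $\kappa = \rho = 0$ but retain $\Phi_0 = \Phi_1 = 0$, and revisit the same Newman-Penrose identities, namely \eqref{Cotton0000} for $A_0$ and $3 \times \eqref{Bianchi00} + \eqref{Cotton0001}$ for $A_1 + 3 D S$. With the algebraic speciality substituted in, these identities should reduce to expressions of the schematic form
\begin{align*}
A_0 & = c_1 \, \kappa \, \Phi_2 \, , & A_1 + 3 \, D S & = c_2 \, \rho \, \Phi_2 \, ,
\end{align*}
for nonzero numerical constants $c_1, c_2$, since all other terms in the original equations are multiplied by $\Phi_0$ or $\Phi_1$, which vanish. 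The hypothesis then gives $\kappa \, \Phi_2 = 0$ and $\rho \, \Phi_2 = 0$, and the non-degeneracy input $\Phi_2 \neq 0$ allows algebraic inversion to yield $\kappa = \rho = 0$.

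The main technical step is therefore purely a matter of inspecting the explicit Bianchi and Cotton equations in Appendix \ref{sec-NP} under the type II specialisation, and verifying that in the particular linear combinations $\eqref{Cotton0000}$ and $3 \times \eqref{Bianchi00} + \eqref{Cotton0001}$, the coefficients of $\kappa$ and $\rho$ really are nonzero multiples of $\Phi_2$, and that no derivatives of $\kappa$ or $\rho$ (which would prevent the algebraic inversion) survive. This is ensured by the fact that the linear combinations were chosen in the forward direction precisely to cancel such derivative terms. The only potential obstacle is this bookkeeping on derivatives and the possibility of hidden terms involving $\nabla \Phi_0$ or $\nabla \Phi_1$; but since $\Phi_0$ and $\Phi_1$ vanish identically as scalar \emph{components} of a frame-dependent expansion, such derivatives are controlled by further Bianchi identities that themselves reduce to multiples of $\kappa \Phi_2$ and $\rho \Phi_2$ once algebraic speciality is imposed. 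Thus the non-degeneracy of $\Phi_2$ carries the full weight of the implication, and the argument closes.
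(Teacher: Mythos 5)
Your proposal is correct and follows essentially the same route as the paper's proof: under $\Phi_0=\Phi_1=0$ the Newman--Penrose identities \eqref{Cotton0000} and $3\times\eqref{Bianchi00}+\eqref{Cotton0001}$ reduce exactly to $A_0=-12\,\kappa\,\Phi_2$ and $A_1+3\,DS=-12\,\rho\,\Phi_2$ (the $D\Phi_2$ and $\kappa\Phi_3$ terms cancel in the second combination), so the hypothesis and $\Phi_2\neq 0$ give $\kappa=\rho=0$, i.e.\ $\mcN$ is co-geodetic by Proposition \ref{prop-intrinsic-torsion}. The only superfluous worry is about derivatives of $\Phi_0$ and $\Phi_1$: these vanish simply because those components are identically zero, with no need to invoke further Bianchi identities.
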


\begin{proof}
Assume that $\Phi_{ab}$ is of Petrov type II, i.e. $\Phi _0 = \Phi _1 = 0$ in an adapted frame. In this case, the Newman-Penrose equations give
\begin{align*}
\eqref{Cotton0000} : & & A _0 & = - 12 \, \kappa \Phi _2 \, , \\
3 \times \eqref{Bianchi00} + \eqref{Cotton0001} : & &  A _1 + 3 D S & =  
 - 12 \, \rho \Phi _2 
\, .
\end{align*}
The assumption \eqref{eq-obstruction-II} in an adapted frame tells us that the LHS of the above set of equations are precisely zero. Now, since $\Phi_{ab}$ does not degenerate further, $\Phi _2 \neq 0$, so we conclude $\kappa \equiv 0$ and $\rho \equiv 0$.
\end{proof}

\begin{thm}[Petrov type III]\label{thm-GS-III-gen}
Let $\mcN$ be a multiple principal null structure on $(\mcM,\bm{g})$. Assume that $\Phi_{ab}$ is of Petrov type III and does not degenerate further. Suppose further that, for any generator $k^a$ of $\mcN$,
\begin{align}\tag{\ref{eq-obstruction-III}}
k^a \left( A_{abc} - 2 \, g_{a[b} \nabla_{c]} S \right) & = 0 \, , & k^a k^b A_{abc} & = 0 \, , & k^a \nabla_a S & = 0 \, ,
\end{align}
Then $\mcN$ is co-geodetic.
\end{thm}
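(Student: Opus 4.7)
The proof should proceed in parallel to that of Theorem \ref{thm-GS-II-gen}, using the same machinery but with the Petrov type III condition in place of type II. I would work in a frame $(k^a,\ell^a,n^a)$ adapted to $\mcN$, in which the Petrov type III assumption reads $\Phi_0 = \Phi_1 = \Phi_2 \equiv 0$ while non-degeneracy gives $\Phi_3 \neq 0$. The hypothesis \eqref{eq-obstruction-III} becomes, in scalar form, the four relations
\begin{align*}
A_0 & \equiv 0 \, , & A_1 & \equiv 0 \, , & DS & \equiv 0 \, , & A_2 + 2\,\delta S & \equiv 0 \, ,
\end{align*}
(cf.\ the translation carried out in the proof of Proposition \ref{prop-obstruction-GS}).

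Next I would invoke the same Newman-Penrose equations that were used in the type III portion of Proposition \ref{prop-obstruction-GS}, but \emph{without} presupposing the co-geodetic conditions $\kappa = \rho = 0$. Concretely: \eqref{Bianchi00}, \eqref{Cotton0000}, \eqref{Cotton0001}, and the combination $2\times\eqref{Bianchi01}+\eqref{Cotton0011}$. With $\Phi_0 = \Phi_1 = \Phi_2 = 0$ imposed, the right-hand sides of these four identities should reduce to linear expressions in $\kappa$ and $\rho$ (and possibly $\tau$) multiplied by the non-vanishing component $\Phi_3$. The obstruction hypothesis forces the left-hand sides to be zero, so non-vanishing of $\Phi_3$ will allow me to read off $\kappa \equiv 0$ and $\rho \equiv 0$, which is exactly the co-geodetic condition by the equivalence \eqref{eq-equiv-strongly} of Proposition \ref{prop-intrinsic-torsion}.

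The main obstacle, as in the type II case, is one of bookkeeping: I need to check that among the four available identities, at least two produce clean relations of the form $(\text{obstruction scalar}) = c_1 \kappa \Phi_3$ and $(\text{obstruction scalar}) = c_2 \rho \Phi_3$ with non-zero constants $c_1,c_2$, uncontaminated by the free coefficients $\tau$, $\pi$, $\alpha$, $\gamma$, $\epsilon$ which we are not allowed to constrain. The fact that the converse calculation in Proposition \ref{prop-obstruction-GS} goes through (after setting $\kappa = \rho = 0$) is a strong indication that running the argument in reverse will succeed, but I would need to revisit the explicit NP equations of Appendix \ref{sec-NP} to confirm the precise coefficients and to verify that no additional hidden algebraic constraint is required. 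Once this is done, the proof closes immediately by division by $\Phi_3$.
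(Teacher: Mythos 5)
Your proposal is correct and follows exactly the paper's own proof: in an adapted frame the identities \eqref{Bianchi00}, \eqref{Cotton0000}, \eqref{Cotton0001} and $2\times\eqref{Bianchi01}+\eqref{Cotton0011}$ reduce, under $\Phi_0=\Phi_1=\Phi_2=0$, to $DS = 2\,\kappa\Phi_3$, $A_0=0$, $A_1=-6\,\kappa\Phi_3$ and $A_2+2\,\delta S=-8\,\rho\Phi_3$, with the $\kappa\Phi_4$ terms cancelling in the last combination and no contamination from $\tau,\pi,\alpha,\gamma,\epsilon$. The bookkeeping you flagged therefore closes as you anticipated, and $\Phi_3\neq 0$ yields $\kappa=\rho\equiv 0$.
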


\begin{proof}
Assume that $\Phi_{ab}$ is of Petrov type III, i.e. $\Phi _0 = \Phi _1 = \Phi_2 = 0$ in an adapted frame. In this case, the Newman-Penrose equations give
\begin{align*}
\eqref{Bianchi00} : & & D S & =  2 \, \kappa \Phi _3 
\, , \\
\eqref{Cotton0000} : & & A _0 & = 0
\, , \\
\eqref{Cotton0001} : & & A _1 & = 
- 6 \, \kappa \Phi _3 \, , \\
2\times\eqref{Bianchi01}+\eqref{Cotton0011} : & & A _2 + 2 \, \delta S & =  - 8 \, \rho \Phi _3 
\, .
\end{align*}
The assumption \eqref{eq-obstruction-III} in an adapted frame tells us that the LHS of the above set of equations are precisely zero. Now, since $\Phi_{ab}$ does not degenerate further, $\Phi _3 \neq 0$, so we conclude $\kappa \equiv 0$ and $\rho \equiv 0$.
\end{proof}

\begin{thm}[Petrov type N]\label{thm-GS-N-gen}
Let $\mcN$ be a multiple principal null structure on $(\mcM,\bm{g})$. Assume that $\Phi_{ab}$ is of Petrov type N and does not degenerate further. Suppose further that, for any generator $k^a$ of $\mcN$,
\begin{align}\tag{\ref{eq-obstruction-N}}
k^c \left( A_{abc} - g_{ca} \nabla_b S \right) & = 0 \, , & k^a A_{abc} & = 0 \, , & k_{[a} \nabla_{b]} S & = 0 \, ,
\end{align}
Then $\mcN$ is co-geodetic.
\end{thm}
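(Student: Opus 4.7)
The plan is to mimic the strategy used in Theorems \ref{thm-GS-II-gen} and \ref{thm-GS-III-gen}, specialising to the more degenerate algebraic situation. Choose an adapted frame $(k^a,\ell^a,n^a)$ with $k^a$ a generator of $\mcN$. The Petrov type N assumption together with the non-degeneracy hypothesis translates into
\begin{align*}
\Phi_0 = \Phi_1 = \Phi_2 = \Phi_3 & = 0 \, , & \Phi_4 & \neq 0 \, ,
\end{align*}
while, just as in the proof of Proposition \ref{prop-obstruction-GS}, the hypothesis \eqref{eq-obstruction-N} unpacks, at the frame-component level, as
\begin{align*}
A_0 = A_1 = A_2 & = 0 \, , & DS = \delta S & = 0 \, , & A_3 + \Delta S & = 0 \, .
\end{align*}

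First I would feed the type N conditions into the Newman-Penrose Bianchi identities \eqref{Bianchi00}, \eqref{Bianchi01}, \eqref{Bianchi11} and the Cotton-tensor equations \eqref{Cotton0000}, \eqref{Cotton0001}, \eqref{Cotton0011}, \eqref{Cotton0111} of Appendix \ref{sec-NP}, without assuming $\kappa = \rho = 0$. Following the pattern established in the type II and type III proofs, all the connection coefficients that multiplied $\Phi_2$ and $\Phi_3$ drop out, and the only surviving curvature term on each right-hand side is a constant multiple of $\kappa\Phi_4$ or $\rho\Phi_4$. Thus I expect to obtain a system of the form
\begin{align*}
A_0 & = 0 \, , & A_1 & = 0 \, , & DS & = 0 \, , \\
A_2 + 2 \, \delta S & = c_1 \, \kappa\Phi_4 \, , & A_3 + \Delta S & = c_2 \, \rho\Phi_4 \, ,
\end{align*}
together with a companion equation (most likely a linear combination of \eqref{Bianchi11} and \eqref{Cotton0111}, or a byproduct of the divergence identity \eqref{eq-divCotton} projected onto $\ell^b \ell^c$) in which a nonzero multiple of $\kappa\Phi_4$ or $\rho\Phi_4$ appears in isolation.

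Once this system is written down, the assumption \eqref{eq-obstruction-N} makes the left-hand sides vanish identically, leaving $c_1\,\kappa\Phi_4 = c_2\,\rho\Phi_4 = 0$. Since $\Phi_4 \ne 0$ and $c_1, c_2$ are nonzero numerical constants, I conclude $\kappa \equiv 0$ and $\rho \equiv 0$, which is exactly the statement that $\mcN$ is co-geodetic, by Proposition \ref{prop-intrinsic-torsion}.

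The main obstacle I anticipate is bookkeeping rather than conceptual: in the type N reduction, both $\kappa\Phi_4$ and $\rho\Phi_4$ have to be isolated from what are a priori several NP equations of the same boost weight, and I must make sure that the hypothesis \eqref{eq-obstruction-N} is strong enough to annihilate every combination that does not directly yield $\kappa\Phi_4$ or $\rho\Phi_4$. If some residual $\Delta\Phi_4$, $\delta\Phi_4$ or derivative of $S$ survives, it will be necessary to feed the type N specialisation back into a further Bianchi or Cotton equation (or to use the divergence identity $\nabla^a A_{abc}=0$ from \eqref{eq-divCotton}) to eliminate it before the final algebraic step.
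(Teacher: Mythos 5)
Your proposal follows the paper's proof essentially verbatim: in an adapted frame the type N conditions reduce the Newman--Penrose Bianchi and Cotton equations to $DS=0$, $\delta S=\kappa\Phi_4$, $A_0=A_1=0$, $A_2=-2\,\kappa\Phi_4$ and $\Delta S+A_3=-4\,\rho\Phi_4$, and the hypothesis \eqref{eq-obstruction-N} annihilates the left-hand sides, so $\Phi_4\neq 0$ forces $\kappa=\rho\equiv 0$. One small correction to your anticipated system: the combination $A_2+2\,\delta S$ that you single out for isolating $\kappa\Phi_4$ actually has $c_1=0$ (it equals $-2\,\kappa\Phi_4+2\,\kappa\Phi_4$), so you must instead use \eqref{Bianchi01} and \eqref{Cotton0011} separately --- which is harmless, since \eqref{eq-obstruction-N} gives $\delta S=0$ and $A_2=0$ individually rather than merely their sum, exactly the bookkeeping point you flagged.
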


\begin{proof}
Assume that $\Phi_{ab}$ is of Petrov type N, i.e. $\Phi _0 = \Phi _1 = \Phi_2 = \Phi_3 = 0$ in an adapted frame. In this case, the Newman-Penrose equations give
\begin{align*}
\eqref{Bianchi00} : & & D S & =  0
\, , \\
\eqref{Bianchi01} : & & \delta S & = \kappa \Phi _4 
\, , \\
\eqref{Cotton0000} : & & A _0 & = 0
\, , \\
\eqref{Cotton0001} : & & A _1 & = 0 \, ,  \\
\eqref{Cotton0011} : & & A _2 & = - 2 \,  \kappa \Phi _4 
\, , \\
\eqref{Bianchi11} + \eqref{Cotton0111} :
& & \Delta S +  A _3 & = - 4 \, \rho \Phi _4 \, .
\end{align*}
The assumption \eqref{eq-obstruction-N} in an adapted frame tells us that the LHS of the above set of equations are precisely zero. Now, since $\Phi_{ab}$ does not degenerate further, $\Phi _4 \neq 0$, so we conclude $\kappa \equiv 0$ and $\rho \equiv 0$.
\end{proof}

\begin{thm}[Petrov type D]\label{thm-GS-D-gen}
Assume that $\Phi_{ab}$ is of Petrov type D with multiple principal null structures $\mcN$ and $\mcN'$ on $(\mcM,\bm{g})$, and does not degenerate further. Let $k^a$ and $\ell^a$ be any generators of $\mcN$ and $\mcN'$ respectively, and suppose further that
\begin{align}\label{eq-obstruction-D}
k^a k^b \left( A_{abc} + 3 \, g_{bc} \nabla_a S \right) & = 0 \, , & \ell^a \ell^b \left( A_{abc} + 3 \, g_{bc} \nabla_a S \right) & = 0 \, .
\end{align}
Then both $\mcN$ and $\mcN'$ are co-geodetic.
\end{thm}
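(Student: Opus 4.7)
The plan is to reduce to the Petrov type II argument of Theorem \ref{thm-GS-II-gen} applied twice, once for each multiple principal null structure, by exploiting the special frame available in Petrov type D. By Remark \ref{rem-typeD}, since $\Phi_{ab}$ is of Petrov type D with distinct multiple principal null structures $\mcN$ and $\mcN'$, we may choose a single adapted frame $(k^a, \ell^a, n^a)$ in which $k^a$ generates $\mcN$, $\ell^a$ generates $\mcN'$, and the components of $\Phi_{ab}$ satisfy $\Phi_0 = \Phi_1 = \Phi_3 = \Phi_4 = 0$ while $\Phi_2 \neq 0$ (the non-degeneracy hypothesis).

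First I would show that $\mcN$ is co-geodetic by essentially transplanting the proof of Theorem \ref{thm-GS-II-gen}. Since $\Phi_0 = \Phi_1 = 0$, the Newman-Penrose equation \eqref{Cotton0000} reads $A_0 = -12 \, \kappa \Phi_2$, and the combination $3\times\eqref{Bianchi00} + \eqref{Cotton0001}$ reads $A_1 + 3 D S = -12 \, \rho \Phi_2$. Translating the first half of the hypothesis \eqref{eq-obstruction-D} into frame components gives $A_0 = 0$ and $A_1 + 3 D S = 0$. Using $\Phi_2 \neq 0$, we conclude $\kappa = \rho = 0$, which by Proposition \ref{prop-intrinsic-torsion} is precisely the co-geodetic condition for $\mcN$.

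For $\mcN'$ I would invoke the $k \leftrightarrow \ell$ symmetry of the Newman-Penrose formalism. Concretely, one passes to the relabelled frame $(k'^a, \ell'^a, n'^a) := (\ell^a, k^a, -n^a)$ (the sign on $n^a$ keeping the orientation \eqref{eq-complex-metric} fixed), which is again adapted, now with $k'^a$ generating $\mcN'$. The type D condition is symmetric under this swap, so in the primed frame one still has $\Phi'_0 = \Phi'_1 = 0$ and $\Phi'_2 \neq 0$, while the second half of \eqref{eq-obstruction-D} becomes exactly the hypothesis \eqref{eq-obstruction-II} for $k'^a$. The argument of the previous paragraph then yields $\kappa' = \rho' = 0$, which is the co-geodetic condition for $\mcN'$.

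The only real obstacle is verifying that the frame swap is consistent with the sign conventions used in Appendix \ref{sec-NP}, i.e.\ that the cited Newman-Penrose and Bianchi identities genuinely hold in the primed frame with the same signs. This is a bookkeeping issue: one either appeals to the priming involution tabulated in the appendix, or simply repeats the derivation of \eqref{Cotton0000} and $3\times\eqref{Bianchi00}+\eqref{Cotton0001}$ with $k^a$ and $\ell^a$ interchanged, which by the structural symmetry of the decomposition \eqref{eq-complex-metric} presents no new computation. No further curvature information is required, so the theorem follows at once.
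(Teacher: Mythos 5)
Your proposal is correct and follows essentially the same route as the paper: the paper likewise works in the single frame of Remark \ref{rem-typeD}, runs the type II argument verbatim to get $\kappa=\rho=0$, and then handles $\mcN'$ by writing down the mirror equations \eqref{Cotton1111} and $3\times\eqref{Bianchi11}-\eqref{Cotton0111}$ to conclude $\mu=\nu=0$ — which is exactly what your $k\leftrightarrow\ell$ frame swap produces, since $\kappa'=\nu$ and $\rho'=-\mu$ under $(k^a,\ell^a,n^a)\mapsto(\ell^a,k^a,-n^a)$. The only cosmetic difference is that the paper spells out the swapped Newman--Penrose identities explicitly rather than invoking the involution.
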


\begin{proof}
Assume that $\Phi_{ab}$ is of Petrov type D, i.e. $\Phi _0 = \Phi _1 = 0 = \Phi_3 = \Phi_4$ in a frame adapted to both $\mcN$ and $\mcN'$ as explained in Remark \ref{rem-typeD}. In this case, we see that the additional constraints $\Phi_3 = \Phi_4 =0$ do not affect any of the argument of the proof of Theorem \ref{thm-GS-D-gen}, which impinges on the condition $\Phi_2 \neq 0$, and we can conclude $\mcN$ is geodetic.

To show that $\mcN'$ is integrable, we have to show that in an adapted frame, and with reference to the covariant derivative \eqref{eq-nablal} of $\ell^a$, the Newman-Penrose coefficients $\mu$ and $\nu$ should also be zero. But the Newman-Penrose equations give
\begin{align*}
\eqref{Cotton1111}: & & A _4 & = - 12 \, \nu \Phi _2 \, , \\
3 \times \eqref{Bianchi11} - \eqref{Cotton0111}: & & A _3 - 3 \Delta S & = - 12 \, \mu \Phi _2 \, .
\end{align*}
The assumption \eqref{eq-obstruction-D} in an adapted frame tells us that the LHS of the above set of equations are precisely zero. Now, since $\Phi_{ab}$ does not degenerate further, $\Phi _2 \neq 0$, so we conclude $\kappa \equiv 0$ and $\rho \equiv 0$.
\end{proof}

\subsection{Co-geodetic null structures implies algebraic speciality}\label{sec-co-geo-algsp}
We now state and prove the converse to Theorems \ref{thm-GS-II-gen}, \ref{thm-GS-III-gen} and \ref{thm-GS-N-gen}.
\begin{thm}\label{thm-GS-hard}
Let $\mcN$ be a co-geodetic null structure on $(\mcM,\bm{g})$. Suppose  that, for any generator $k^a$ of $\mcN$,
\begin{align}\tag{\ref{eq-obstruction-II}}
k^a k^b \left( A_{abc} + 3 \, g_{bc} \nabla_a S \right) & = 0 \, .
\end{align}
Then $\Phi_{ab}$ is algebraically special with $\mcN$ as multiple principal null structure.
\end{thm}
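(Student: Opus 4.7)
The plan is to show that the hypotheses force $\Phi_0 = 0$ and $\Phi_1 = 0$ in any frame adapted to $\mcN$, which is precisely the statement that $\Phi_{ab}$ is algebraically special with $\mcN$ as multiple principal null structure. The first equality is immediate from Proposition \ref{prop-integrability-cond}: co-geodesy of $\mcN$ gives $\kappa = \rho = 0$ and consequently $\Phi_0 \equiv 0$, so $\mcN$ is already a principal null structure. The substantive content of the theorem is therefore the passage from principal to multiple principal, i.e. the identity $\Phi_1 \equiv 0$.

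For this, I would revisit the two Newman-Penrose equations used in the proof of Theorem \ref{thm-GS-II-gen}, namely \eqref{Cotton0000} and the combination $3\,\eqref{Bianchi00} + \eqref{Cotton0001}$, but read them in the opposite direction. In Theorem \ref{thm-GS-II-gen}, these were specialised under $\Phi_0 = \Phi_1 = 0$ to produce $A_0 = -12\kappa\Phi_2$ and $A_1 + 3DS = -12\rho\Phi_2$, from which $\kappa = \rho = 0$ was extracted using $\Phi_2 \neq 0$. Here the roles are swapped: one fixes $\kappa = \rho = 0$ (and thus $\Phi_0 = 0$), and the same NP equations now express $A_0$ and $A_1 + 3DS$ as explicit combinations of $\Phi_1$ together with its frame derivatives $D\Phi_1, \delta\Phi_1$ and connection coefficients.

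The Cotton-type hypothesis $k^a k^b (A_{abc} + 3\, g_{bc}\nabla_a S) = 0$, contracted in turn with $n^c$ and $\ell^c$, is equivalent to $A_0 = 0$ and $A_1 + 3DS = 0$ in an adapted frame; the $k^c$ contraction vanishes trivially by the antisymmetry $A_{abc} = -A_{acb}$. Substituting these zeros into the reduced NP expressions produces homogeneous linear relations on $\Phi_1$, which I expect to collapse to $\Phi_1 \equiv 0$ and thereby yield the sought algebraic speciality.

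The anticipated obstacle is that, a priori, these relations are first-order differential in $\Phi_1$ rather than purely algebraic, so one must argue that the derivative terms cancel in a suitable pairing. The cancellation should emerge either directly from the structure of the Cotton equations in the adapted frame -- since $\Phi_0 \equiv 0$ identically also kills $D\Phi_0, \delta\Phi_0, \Delta\Phi_0$ -- or after combining with one further NP identity, most naturally the commutator \eqref{commutator0001} or the Ricci/Bianchi equation controlling $\Phi_1$. Should the tensorial manipulation prove cumbersome, the cleaner route is to transcribe the argument into the spinor formalism of Appendix \ref{sec-GS-spinor}, where the factorisation $\Phi_{ABCD} \propto \alpha_{(A}\beta_B\gamma_C\delta_{D)}$ makes multiple principality transparent and the Cotton-type hypothesis directly supplies the additional $\alpha^A$-factor required.
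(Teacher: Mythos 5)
Your setup coincides with the paper's: work in an adapted frame, obtain $\Phi_0 \equiv 0$ from co-geodesy via Proposition \ref{prop-integrability-cond}, and translate the hypothesis into $A_0 = 0$ and $A_1 + 3\,DS = 0$. The gap is at the decisive step. With $\kappa = \rho = \Phi_0 \equiv 0$, equations \eqref{Cotton0000} and $3\times\eqref{Bianchi00}+\eqref{Cotton0001}$ reduce to the propagation equations $D\Phi_1 = 2\,\epsilon\,\Phi_1$ and $\delta\Phi_1 = 2\left(2\tau+\alpha\right)\Phi_1$. These are exactly the ``homogeneous linear relations'' you anticipate --- but such relations do \emph{not} collapse to $\Phi_1 \equiv 0$: a homogeneous linear transport equation admits plenty of nowhere-vanishing solutions, and there is no pairing in which the derivative terms cancel to leave a linear algebraic constraint on $\Phi_1$. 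As written, your plan stalls precisely where the theorem's content lies.

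What actually closes the argument in the paper is the \emph{integrability condition} of this overdetermined pair: apply $\delta$ to the first relation and $D$ to the second, subtract, and replace $[D,\delta]\Phi_1$ using \eqref{commutator0001}, which under $\kappa=\rho=0$ reads $[D,\delta]=(\pi-2\alpha)D$. The terms $D\tau$ and $D\alpha-\delta\epsilon$ that then appear are eliminated via the Ricci identities \eqref{Ricci001100} and \eqref{Ricci000101}, each of which injects a further copy of $\Phi_1$; the net result is the \emph{quadratic} identity $40\left(\Phi_1\right)^2 = \delta A_0 - D\left(A_1+3DS\right) + 2\epsilon\left(A_1+3DS\right) + \left(\pi-4\tau-4\alpha\right)A_0$, whose right-hand side vanishes by hypothesis, forcing $\Phi_1^2=0$ pointwise. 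You do name the commutator and the Ricci/Bianchi identities as the likely extra input, so you are circling the correct ingredients, but the expected mechanism is mis-stated: the conclusion comes from a nonlinear consistency condition, not a linear collapse. The fallback to Appendix \ref{sec-GS-spinor} would not rescue this either --- the spinorial proof there likewise rests on a nontrivial integrability analysis (a consistency condition for the potential equation for $\ln\phi$, argued by contradiction), not on the factorisation of $\Phi_{ABCD}$ supplying an extra spinor factor directly.
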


\begin{proof}
As always we work in an adapted frame and use the Newman-Penrose formalism, in which $\kappa = \rho \equiv 0$ means that $\mcN$ is co-geodetic. Then, assuming $\kappa = \rho \equiv 0$, we know by Proposition \ref{prop-integrability-cond} that $\Phi _0 \equiv 0$. In this case, we have the following components of the Bianchi identity
\begin{align}
D \Phi _2 - 2 \, \delta \Phi _1 - D S & =  \left( 2 \, \pi - 4 \, \tau - 4 \, \alpha \right) \Phi _1 
 \, , \tag{\ref{Bianchi00}}
\end{align}
and of the Cotton tensor
\begin{align}
- 4 \, D \Phi _1 & = A _0 
- 8 \, \epsilon \Phi _1 \, , \tag{\ref{Cotton0000}} \\
2 \, \delta \Phi _1 - 3 D \Phi _2 & = A _1 
+ \left( 4 \, \alpha - 4 \, \tau - 6 \, \pi \right) \Phi _1
\, , \tag{\ref{Cotton0001}}
\end{align}
We proceed by steps:
\begin{itemize}
 \item first, computing $3 \times \eqref{Bianchi00} +  \eqref{Cotton0001}$ yields
\begin{align}\label{eq-step1}
 - 4 \delta \Phi _1 & = \left( A _1 + 3 \, D S \right) - 8 \left( 2 \tau + \alpha \right) \Phi _1 \, ;
\end{align}
 \item then $\delta \eqref{Cotton0000} - D \eqref{eq-step1}$ gives
 \begin{align}\label{eq-step2}
4 \, [ D , \delta ] \Phi _1 & = \delta \, A _0 - D \left( A _1 + 3 D \, S \right) - 8 \left( \delta \epsilon - D \alpha - 2\, D \tau \right) \Phi _1 - 8 \, \epsilon \, \delta \Phi _1 + 8 \, \left( \alpha +2 \tau \right) D \Phi _1 \, ;
 \end{align}
 \item at this stage, we can substitute the commutation relation
\begin{align}
 [ D , \delta ] & =  \left( \pi - 2 \, \alpha \right) D \, , \tag{\ref{commutator0001}}
\end{align}
into the LHS of \eqref{eq-step2}, and the following components of the Ricci identity
\begin{align}
D \tau & =
 - 2 \, \Phi _1 \, , \tag{\ref{Ricci001100}} \\
D \alpha - \delta \epsilon & =
- 2 \, \epsilon \alpha 
+ \pi \epsilon 
- \Phi _1 \, , \tag{\ref{Ricci000101}}
\end{align}
together with \eqref{eq-step1} and \eqref{Cotton0000} into the RHS of \eqref{eq-step2}, to get
 \begin{multline}\label{eq-step3}
4 \,  \left( \pi - 2 \, \alpha \right) D \Phi _1 = \delta \, A _0 - D \left( A _1 + 3 D \, S \right) + 2 \, \epsilon \,  \left( A _1 + 3 \, D S \right) - 2 \, \left( \alpha +2 \tau \right) A _0  \\
- 8 \left( 2 \, \epsilon \alpha - \pi \epsilon + 5 \, \Phi _1 \right) \Phi _1 \cancel{ - 16 \, \epsilon \,   \left( 2 \tau + \alpha \right) \Phi _1 + 16 \, \left( \alpha +2 \tau \right) \epsilon \Phi _1 } \, ;
 \end{multline}
 \item subsituting \eqref{Cotton0000} into the RHS of \eqref{eq-step3} and expanding yields
 \begin{align}\label{eq-step4}
 40 \left( \Phi _1 \right)^2 & =  \delta \, A _0 - D \left( A _1 + 3 D \, S \right) + 2 \, \epsilon \,  \left( A _1 + 3 \, D S \right) + \left( \pi - 4 \tau - 4 \alpha \right) A _0 \, ;
 \end{align}
 \item finally, by condition \eqref{eq-obstruction-II}, the RHS of \eqref{eq-step4} vanishes identically and we conclude $\Phi _1 \equiv  0$.
\end{itemize}
In summary, $\kappa = \rho \equiv 0$ implies $\Phi _0 = \Phi _1 \equiv  0$, i.e. co-geodetic $\mcN$ implies algebraic speciality of $\Phi_{ab}$ with $\mcN$ multiple principal null structure.
\end{proof}

\subsection{Topological massive gravity}\label{sec-GS-TMG}
Next, we consider the equations governing topological massive gravity. These are none other than Einstein's equations with cosmological constant $\Lambda$ in which the energy-momentum tensor is proportional to the Hodge-dual of the Cotton tensor, i.e.
\begin{align}\label{eq-TMG1}
R_{ab} - \frac{1}{2} R \, g_{ab} + \Lambda \, g_{ab} & = \frac{1}{m} (*A)_{ab} \, .
\end{align}
Here, $m \neq 0$ is a parameter of topological massive gravity theory.
Substituting the expression for the tracefree Ricci tensor and tracing yield the expressions
\begin{align}
\Phi_{ab} & = \frac{1}{m} (*A)_{ab} \, , \label{eq-TMG2a} \\
R & = 6 \, \Lambda = \mbox{constant} \, , \label{eq-TMG2b}
\end{align}
equivalent to \eqref{eq-TMG1}.

\begin{rem}
It is in fact sufficient to consider only \eqref{eq-TMG2a} since \eqref{eq-TMG2b} follows from it. To see this, we note that $\nabla^a (*A)_{ab} = 0$ which follows from \eqref{eq-divCotton}. So, by \eqref{eq-TMG2a}, $\nabla^a \Phi_{ab}  = 0$. Now, the Bianchi identity \eqref{eq-Bianchi-contracted} gives $\nabla_a R = 0$, i.e. \eqref{eq-TMG2b}.
\end{rem}

In an adapted frame, and with reference to \eqref{eq-NP-Cotton}, equations \eqref{eq-TMG2a} read as
\begin{align}\label{eq-TMG2a-comp}
\Phi_0 & = -\frac{\ii^q}{2\sqrt{2}m} A_0 \, , & \Phi_1 & = -\frac{\ii^q}{2\sqrt{2}m} A_1 \, , & \Phi_2 & = -\frac{\ii^q}{2\sqrt{2}m} A_2 \, , & \Phi_3 & = -\frac{\ii^q}{2\sqrt{2}m} A_3 \, , & \Phi_4 & = -\frac{\ii^q}{2\sqrt{2}m} A_4 \, ,
\end{align}
where $q=0$ in Euclidean signature, and $q=1$ in Lorentzian signature.

\begin{lem}
Suppose $(\mcM,\bm{g})$ is a solution of the topological massive gravity equations \eqref{eq-TMG2a} and \eqref{eq-TMG2b}.
\begin{itemize}
\item If the tracefree Ricci tensor is of Petrov type $II$, then
\begin{align*}\tag{\ref{eq-obstruction-II}}
k^a k^b \left( A_{abc} + 3 \, g_{bc} \nabla_a S \right) & = 0 \, .
\end{align*}
\item If the tracefree Ricci tensor is of Petrov type $III$, then
\begin{align*}\tag{\ref{eq-obstruction-III}}
k^a \left( A_{abc} - 2 \, g_{a[b} \nabla_{c]} S \right) & = 0 \, , & k^a k^b A_{abc} & = 0 \, , & k^a \nabla_a S & = 0 \, .
\end{align*}
\item If the tracefree Ricci tensor is of Petrov type $N$, then
\begin{align*}\tag{\ref{eq-obstruction-N}}
k^c \left( A_{abc} - g_{ca} \nabla_b S \right) & = 0 \, , & k^a A_{abc} & = 0 \, , & k_{[a} \nabla_{b]} S & = 0 \, .
\end{align*}
\end{itemize}
\end{lem}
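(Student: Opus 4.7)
The plan is to exploit the two consequences of the TMG equations separately. Equation \eqref{eq-TMG2b} asserts that $R$ is constant, hence $\nabla_a S \equiv 0$; this annihilates every $\nabla_a S$ term appearing in \eqref{eq-obstruction-II}, \eqref{eq-obstruction-III}, and \eqref{eq-obstruction-N}. Equation \eqref{eq-TMG2a}, spelled out in adapted frame components as \eqref{eq-TMG2a-comp}, asserts that the Cotton components $A_i$ are scalar multiples of the Ricci components $\Phi_i$ with the same index $i$. So the algebraic speciality hypotheses $\Phi_0=\Phi_1=0$ (type II), $\Phi_0=\Phi_1=\Phi_2=0$ (type III), and $\Phi_0=\Phi_1=\Phi_2=\Phi_3=0$ (type N) translate at once into the vanishing of the same-indexed $A_i$.

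It remains to recognise that these component vanishings are exactly the tensorial statements demanded by the lemma, once the $\nabla_a S$ terms have been dropped. For type II, the tensor $k^ak^bA_{abc}$ has free index $c$; contracting with $k^c$ gives zero by the antisymmetry $A_{a[bc]}=A_{abc}$, while contracting with $n^c$ and $\ell^c$ yields $\tfrac{1}{2}A_0$ and $A_1$ respectively. Hence $A_0=A_1=0$ is equivalent to $k^ak^bA_{abc}=0$, which is \eqref{eq-obstruction-II}. For type III, the tensor $k^aA_{abc}$ is a skew $2$-tensor in $b,c$ whose three independent components work out to $\tfrac{1}{2}A_0$, $A_1$, and $\tfrac{1}{2}A_2$, so $A_0=A_1=A_2=0$ is equivalent to $k^aA_{abc}=0$, which delivers all three clauses of \eqref{eq-obstruction-III}.

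The type N case requires the extra observation that $A_0=A_1=A_2=A_3=0$ also forces $k^cA_{abc}=0$. The cleanest way to see this is to invert the Hodge-dual relation \eqref{eq-Cotton-dual} as $A_{abc}=e_{bc}{}^d(*A)_{ad}$, which gives $k^cA_{abc}=-(*k)_b{}^d(*A)_{ad}$ with $(*k)_{bd}:=e_{cbd}k^c$. Because $k^a$ is null, in an adapted frame the $2$-form $(*k)_{bd}$ has only the entries $(*k)_{n\ell}$ and $(*k)_{\ell n}$ nonvanishing, so the contraction on the right-hand side involves only the four components $(*A)_0,\ldots,(*A)_3$ of the symmetric tracefree tensor $(*A)_{ab}$, while $(*A)_4$ drops out. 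Setting $(*A)_0=\cdots=(*A)_3=0$ (equivalently $A_0=\cdots=A_3=0$ by \eqref{eq-TMG2a-comp}) therefore forces $k^cA_{abc}=0$, which, combined with the already established $k^aA_{abc}=0$ and $\nabla_aS\equiv0$, gives \eqref{eq-obstruction-N}. This last bookkeeping step is the only real computation in the argument; all the rest follows directly from the component form of the TMG equations and the index symmetries of $A_{abc}$.
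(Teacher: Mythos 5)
Your proof is correct and follows essentially the same route as the paper's: constancy of $R$ kills the $\nabla_a S$ terms, the TMG equation in an adapted frame gives $A_i \propto \Phi_i$, and the Petrov hypotheses then force the vanishing of the relevant Cotton components. The only difference is that you re-derive explicitly the dictionary between the frame components $A_0,\dots,A_3$ and the tensorial conditions \eqref{eq-obstruction-II}--\eqref{eq-obstruction-N} (including the Hodge-dual bookkeeping for $k^c A_{abc}$ in the type N case), whereas the paper simply invokes the translation already recorded in the proof of Proposition \ref{prop-obstruction-GS}.
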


\begin{proof}
We first note that $S=\frac{1}{2}\Lambda$ is constant by virtue of the topological massive gravity equations \eqref{eq-TMG2b}. Consequently, equations \eqref{eq-obstruction-II}, \eqref{eq-obstruction-III} and \eqref{eq-obstruction-N}, which we need to assert, are reduced to algebraic conditions on the Cotton tensor. More precisely, with respect to an adapted frame, we must now show that
\begin{align*}
\mbox{Petrov type II:} \qquad \Phi_0 = \Phi_1 & \equiv 0 & & \Rightarrow & A_0 = A_1 & = 0 \, , \\
\mbox{Petrov type III:} \qquad \Phi_0 = \Phi_1 = \Phi_2 & \equiv 0  & & \Rightarrow & A_0 = A_1 = A_2 & = 0 \, , \\
\mbox{Petrov type N:} \qquad \Phi_0 = \Phi_1 = \Phi_2 = \Phi_3 & \equiv 0  & & \Rightarrow & A_0 = A_1 = A_2 = A_3 & = 0 \, . 
\end{align*}
But the veracity of these statements follows from the topological massive gravity equations \eqref{eq-TMG2a}, which are \eqref{eq-TMG2a-comp} in an adapted frame.
\end{proof}

Combining Theorems \ref{thm-GS-II-gen}, \ref{thm-GS-III-gen}, \ref{thm-GS-N-gen}, \ref{thm-GS-D-gen} and \ref{thm-GS-hard} leads to our main result.
\begin{thm}\label{thm-GS-TMG}
Let $(\mcM, \bm{g})$ be an oriented three-dimensional (pseudo-)Riemannian manifold that is a solution of the topological massive gravity equations, and assume that the Petrov type of the tracefree Ricci tensor $\Phi_{ab}$ does not change in an open subset of $\mcM$. Then $\Phi_{ab}$ is algebraically special if and only if $(\mcM, \bm{g})$ admits a co-geodetic null structure.
\end{thm}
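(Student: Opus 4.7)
The plan is to split the proof into the two implications of the biconditional, treating them rather differently. For the forward direction (algebraically special $\Rightarrow$ co-geodetic null structure), I will argue case by case on the Petrov type. Because the Petrov type is locally constant by hypothesis, an algebraically special $\Phi_{ab}$ must be of type II, III, N, or D throughout the open set. In each of these cases, the preceding Lemma --- a direct algebraic consequence of the TMG equations \eqref{eq-TMG2a}, \eqref{eq-TMG2b} --- produces the relevant curvature obstruction \eqref{eq-obstruction-II}, \eqref{eq-obstruction-III}, \eqref{eq-obstruction-N}, or \eqref{eq-obstruction-D} (for type D, the type-II obstruction is applied to each multiple principal null structure in turn). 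Feeding this into Theorem \ref{thm-GS-II-gen}, \ref{thm-GS-III-gen}, \ref{thm-GS-N-gen}, or \ref{thm-GS-D-gen} respectively then yields the co-geodesy of the multiple principal null structure.

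For the converse, I will take a co-geodetic null structure $\mcN$ and fix an adapted Newman-Penrose frame, so that $\kappa = \rho = 0$. Proposition \ref{prop-integrability-cond} immediately delivers $\Phi_0 = 0$, making $\mcN$ at least a simple principal null structure; the task reduces to showing $\Phi_1 = 0$. The natural vehicle is Theorem \ref{thm-GS-hard}, whose hypotheses are precisely co-geodesy together with the obstruction \eqref{eq-obstruction-II}.

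The hard part is that a direct appeal to Theorem \ref{thm-GS-hard} in the TMG setting is circular. Since $S \equiv \tfrac{\Lambda}{2}$ is constant all $\nabla S$ terms drop, and the frame relations \eqref{eq-TMG2a-comp} give $A_0 = -c\,\Phi_0$ and $A_1 = -c\,\Phi_1$ with nonzero constant $c = 2\sqrt{2}\,m / \ii^q$; so $\Phi_0 = 0$ makes $A_0 = 0$ automatic, but the remaining piece $A_1 + 3DS = 0$ of \eqref{eq-obstruction-II} is equivalent to $\Phi_1 = 0$, i.e.\ to what we are trying to prove. To break this loop I will revisit the proof of Theorem \ref{thm-GS-hard}. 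Its key intermediate identity, derived purely from the Newman-Penrose and Bianchi equations under $\kappa = \rho = 0$ and $\Phi_0 = 0$, reads
\begin{align*}
40 \, \Phi_1^2 & = \delta A_0 - D(A_1 + 3DS) + 2\epsilon (A_1 + 3DS) + (\pi - 4\tau - 4\alpha)\, A_0 \, .
\end{align*}
Substituting the TMG data $A_0 = 0$, $DS = 0$, and $A_1 = -c\,\Phi_1$ collapses the right-hand side to $c\,(D\Phi_1 - 2\epsilon\,\Phi_1)$; the NP identity \eqref{Cotton0000} specialised to $A_0 = 0$ gives $D\Phi_1 = 2\epsilon\,\Phi_1$, so the right-hand side vanishes, leaving $\Phi_1^2 = 0$ and hence $\Phi_1 = 0$. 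This delivers algebraic speciality of $\Phi_{ab}$ with $\mcN$ as multiple principal null structure, completing the proof.
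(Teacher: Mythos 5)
Your proof is correct, and for the forward direction it is exactly the paper's argument: locally constant Petrov type reduces to the cases II, III, N, D, the Lemma of section \ref{sec-GS-TMG} converts the TMG equations \eqref{eq-TMG2a-comp} into the obstructions \eqref{eq-obstruction-II}--\eqref{eq-obstruction-N} (and \eqref{eq-obstruction-D}), and Theorems \ref{thm-GS-II-gen}, \ref{thm-GS-III-gen}, \ref{thm-GS-N-gen}, \ref{thm-GS-D-gen} deliver co-geodesy. Where you genuinely diverge is the converse. The paper disposes of it in one line (``combining \ldots{} and Theorem \ref{thm-GS-hard}''), but as you correctly observe, a black-box application of Theorem \ref{thm-GS-hard} is circular in the TMG setting: its hypothesis \eqref{eq-obstruction-II} amounts, with $S$ constant, to $A_0 = A_1 = 0$, and while $A_0 \propto \Phi_0 = 0$ follows from Proposition \ref{prop-integrability-cond}, the condition $A_1 \propto \Phi_1 = 0$ is precisely the conclusion sought. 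Your repair --- going inside the proof of Theorem \ref{thm-GS-hard} to the identity \eqref{eq-step4}, which is derived from $\kappa = \rho = 0$ and $\Phi_0 = 0$ alone without invoking the obstruction, and then substituting $A_0 = 0$, $DS = 0$, $A_1 = -\bigl(2\sqrt{2}\,m/\ii^q\bigr)\Phi_1$ so that the right-hand side collapses to a multiple of $D\Phi_1 - 2\epsilon\Phi_1$, which vanishes by \eqref{Cotton0000} --- is exactly the right move, and the computation checks out: one is left with $40\,\Phi_1^2 = 0$. What your version buys is a proof of the ``if'' direction that is actually self-contained, whereas the paper's citation of Theorem \ref{thm-GS-hard} only becomes non-circular once the reader unpacks it in the way you have done. (A small alternative that avoids re-entering that proof: one could instead note that \eqref{eq-step4} can be packaged as a Riccati-type constraint and argue directly, but your route is the most economical given the machinery already in place.)
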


\subsection{Real versions}
All the theorems given in sections \ref{sec-obst}, \ref{sec-algsp-co-geo}, \ref{sec-co-geo-algsp} and \ref{sec-GS-TMG} can easily be adapted to the case where the metric is real-valued. The crucial points here are that
\begin{itemize}
\item each of the algebraically special Petrov types of the tracefree Ricci tensor, as given in sections \ref{sec-Petrov-types-Euc} and \ref{sec-Petrov-types-Lor}, singles out multiple principal null structure of a particular real index, and
\item the real index $r$ of this null structure yields a particular \emph{real} geometric interpretation as given in section \ref{sec-real-metrics}, i.e. a congruence of null curves when $r=1$, or a congruence of timelike curves when $r=0$.
\end{itemize}

We shall only give real versions of Theorem \ref{thm-GS-TMG} in the context of the topological massive gravity equations. 

\begin{thm}[Lorentzian Goldberg-Sachs theorem for Topological Massive Gravity]\label{thm-Lor-GS-TMG}
Let $(\mcM,\bm{g})$ be an oriented  three-dimensional Lorentzian manifold that is a solution of the topological massive gravity equations. Then
\begin{itemize}
\item $(\mcM,\bm{g})$ admits a divergence-free congruence of null geodesics (i.e. is a Kundt spacetime) if and only if its tracefree Ricci tensor is of Petrov type II, II${}_r$, D${}_r$, III${}_r$ or N${}_r$;
\item $(\mcM,\bm{g})$ admits two distinct divergence-free congruences of null geodesics if and only if its tracefree Ricci tensor is of Petrov type D${}_r$;
\item $(\mcM,\bm{g})$  admits a shear-free congruence of timelike geodesics if and only if its tracefree Ricci tensor is of Petrov type D.
\end{itemize}
\end{thm}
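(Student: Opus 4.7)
The plan is to combine the complex-valued Theorem \ref{thm-GS-TMG} with the reality structure from section \ref{sec-real-metrics} and the Lorentzian Petrov classification of section \ref{sec-Petrov-types-Lor}. The whole argument is essentially a bookkeeping exercise in sorting algebraically special Petrov types by the real index of the multiple principal null structure they single out, and then translating ``co-geodetic'' into its real geometric content.

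First, I would invoke Theorem \ref{thm-GS-TMG}: on a three-dimensional solution of the topological massive gravity equations, $\Phi_{ab}$ is algebraically special if and only if there exists a co-geodetic null structure $\mcN$. Inspection of the proofs of Theorems \ref{thm-GS-II-gen}--\ref{thm-GS-D-gen} and of Theorem \ref{thm-GS-hard} shows further that this $\mcN$ can be taken to be a multiple principal null structure of $\Phi_{ab}$, and conversely.

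Second, I would partition the algebraically special Lorentzian Petrov types according to the real index of the multiple principal null structure they determine, reading off the table of section \ref{sec-Petrov-types-Lor}. Real multiple roots of the polynomial \eqref{eq-Ricci-polynomial} yield multiple principal null structures of real index $1$; a repeated complex-conjugate pair yields a pair of complex conjugate multiple principal null structures of real index $0$. Hence the types with at least one multiple principal null structure of real index $1$ are precisely II, II$_r$, D$_r$, III$_r$ and N$_r$; the types with two distinct such structures are precisely D$_r$, since two distinct real roots of multiplicity $\geq 2$ in a degree-four polynomial force the multiplicity pattern $\{2,2\}$; and the only type with a multiple principal null structure of real index $0$ is D.

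Finally, I would use Propositions \ref{prop-co2nullgeod} and \ref{prop-co2timegeod} to re-express ``co-geodetic'' in real terms: a co-geodetic null structure of real index $1$ is exactly a divergence-free congruence of null geodesics (a Kundt structure), while a co-geodetic null structure of real index $0$, once a time orientation is chosen to fix the sign of the associated unit timelike vector $u^a$, is exactly a shear-free congruence of timelike geodesics. Assembling these three ingredients produces the three bullets of the theorem. The only step that requires a moment's thought, and which I expect to be the main (minor) obstacle, is to check that the ``two distinct'' statement in the second bullet goes in both directions: a second independent Kundt congruence supplies a second co-geodetic real-index-$1$ null structure, forcing the multiplicity pattern of \eqref{eq-Ricci-polynomial} up from $\{2,1,1\}$, $\{3,1\}$, $\{4\}$ or $\{2,\{1^\C,\overline{1^\C}\}\}$ to $\{2,2\}$, which is precisely D$_r$.
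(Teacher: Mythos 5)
Your proposal is correct and follows exactly the route the paper intends: the paper gives no separate proof of Theorem \ref{thm-Lor-GS-TMG}, stating only that it follows from Theorem \ref{thm-GS-TMG} by matching each algebraically special Lorentzian Petrov type of section \ref{sec-Petrov-types-Lor} with the real index of its multiple principal null structure and then applying Propositions \ref{prop-co2nullgeod} and \ref{prop-co2timegeod}, which is precisely your three-step argument. Your extra check on the ``two distinct'' direction of the second bullet is a sensible piece of diligence that the paper leaves implicit.
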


In fact, parts of Theorem \ref{thm-Lor-GS-TMG} were proved in reference in \cite{Chow2010}: namely, that every Kundt spacetime that is solution of the topological massive gravity equations must be algebraically special. By Theorem \ref{thm-Lor-GS-TMG}, this exhausts all solutions of Petrov types II, II${}_r$, D${}_r$, III${}_r$ or N${}_r$. All Petrov type D solutions of the topological massive gravity equations are also given in reference \cite{Chow2010a}. Therefore, Theorem \ref{thm-Lor-GS-TMG} tells us that these are the only possible algebraically special solutions of the topological massive gravity equations.

For the sake of completeness, we state the Riemannian version of Theorem \ref{thm-Riem-GS-TMG}.

\begin{thm}[Riemannian Goldberg-Sachs theorem for Topological Massive Gravity]\label{thm-Riem-GS-TMG}
Let $(\mcM,\bm{g})$ be an oriented three-dimensional Riemannian manifold that is a solution of the topological massive gravity equations. Then $(\mcM,\bm{g})$  admits a shear-free congruence of geodesics if and only if its tracefree Ricci tensor is algebraically special, i.e. of Petrov type D.
\end{thm}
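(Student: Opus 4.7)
The plan is to deduce this statement as a specialisation of the signature-independent Theorem \ref{thm-GS-TMG} together with the Riemannian interpretation of null structures developed in section \ref{sec-Euc}. The core observation is that in Riemannian signature there is no additional work of a differential-geometric nature to do: Theorem \ref{thm-GS-TMG} already establishes the equivalence between algebraic speciality of $\Phi_{ab}$ and the existence of a co-geodetic null structure $\mcN$ for any solution of the topological massive gravity equations. What remains is purely an unpacking of the real geometric content of the two sides of this equivalence in Euclidean signature.

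First I would invoke the fact that when $\bm{g}$ has signature $(3,0)$, every null structure $\mcN$ on $(\mcM,\bm{g})$ necessarily has real index $0$, as recalled at the beginning of section \ref{sec-real-metrics}. Consequently the only non-trivial complex Petrov types surviving the reality conditions (see section \ref{sec-Petrov-types-Euc}) are G and D, so ``algebraically special'' in the Riemannian context is synonymous with ``Petrov type D''. This takes care of the algebraic side of the equivalence.

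Next I would translate the co-geodetic condition into real language. As explained in section \ref{sec-Euc}, a null structure $\mcN$ of real index $0$ is in one-to-one correspondence with an oriented congruence $\mcU$ of curves tangent to the unit spacelike vector $u^a := \sqrt{2}\,\ii\, n^a$ obtained from an adapted frame; the whole discussion of section \ref{sec-Lor0} culminating in Proposition \ref{prop-co2timegeod} carries over verbatim, the only change being that $u^a$ is now spacelike rather than timelike. In particular, the Riemannian analogue of Proposition \ref{prop-co2timegeod} reads that $\mcN$ is co-geodetic if and only if $\mcU$ is a shear-free congruence of geodesics. This is the precise real interpretation of the right-hand side of Theorem \ref{thm-GS-TMG} in signature $(3,0)$.

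Putting these three ingredients together, the theorem follows: by Theorem \ref{thm-GS-TMG}, $\Phi_{ab}$ is algebraically special if and only if $(\mcM,\bm{g})$ admits a co-geodetic null structure; by the index constraint, algebraic speciality forces Petrov type D; and by the Riemannian version of Proposition \ref{prop-co2timegeod}, a co-geodetic null structure is the same datum as a shear-free congruence of geodesics. There is no genuine obstacle in the argument, since all of the analytic work was already done in Theorem \ref{thm-GS-TMG}; the only point to be careful about is to check that the sign switch in the Hodge-duality normalisation ($q=0$ rather than $q=1$ in \eqref{eq-vol-form^2}) and the reality conditions on the adapted frame of section \ref{sec-Euc} do not disturb the correspondence between the Newman--Penrose scalar equations and the geometric conditions, but this is entirely parallel to section \ref{sec-Lor0} and requires no new computation.
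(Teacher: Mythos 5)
Your proposal is correct and follows exactly the route the paper itself takes: the paper gives no separate proof of Theorem \ref{thm-Riem-GS-TMG}, but derives it (as you do) by specialising Theorem \ref{thm-GS-TMG} to Euclidean signature, where every null structure has real index $0$, the only algebraically special type is D (section \ref{sec-Petrov-types-Euc}), and the co-geodetic condition translates via the Euclidean analogue of Proposition \ref{prop-co2timegeod} into a shear-free congruence of geodesics. No discrepancy with the paper's argument.
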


\appendix
\section{Spinor calculus in three dimensions}\label{app-spinor-calculus}
Let $(\mcM,\bm{g})$ be a three-dimensional (pseudo-)Riemannian manifold, which we shall also assume to be oriented and equipped with a spin structure. To make the discussion signature-independent, we shall as before complexify both the tangent bundle $\Tgt \mcM$ and $\bm{g}$, and shall not assume the existence of any particular reality structure on $\Tgt^\C \mcM$ preserving $\bm{g}$. With these considerations, the spinor bundle $\mcS$ over $\mcM$ is a complex rank-$2$ vector bundle, whose sections will carry upstairs upper-case Roman indices, e.g. $\alpha^A \in \Gamma (\mcS)$. Similarly, sections of the dual spinor bundle $\mcS^*$ will carry downstairs upper-case Roman indices, e.g. $\beta_A \in \Gamma (\mcS^*)$. The bundles $\mcS$ and $\mcS^*$ are equipped with non-degenerate skew-symmetric $2$-spinors $\varepsilon_{AB}$ and $\varepsilon^{AB}$ respectively, which we shall choose to satisfy the normalisation condition
\begin{align*}
	\varepsilon_{AC} \varepsilon^{BC} & = \delta \ind*{_A^B} \, ,
\end{align*}
where $\delta_A^B$ is the identity on $\mcS$. These bilinear forms establish an isomorphism between $\mcS$ and its dual $\mcS^*$, and we shall then raise and lower indices on spinors and dual spinors according to the convention
\begin{align*}
\alpha_A & = \alpha^B \varepsilon_{BA} \, , & \beta^A & = \varepsilon ^{AB} \beta_B\, .
\end{align*}
This spinor calculus is almost identical to the two-spinor calculus in four dimensions, except for the absence of chirality (i.e. of `primed' spinor indices).

We can consider the tensor product of any number copies of $\mcS$. Since the fibers of $\mcS$ are two-dimensional, any skew-symnetric $2$-spinor must be pure trace, i.e. $\phi_{[AB]} = \frac{1}{2} \varepsilon_{AB} \phi \ind{_C^C}$. In particular, there is a natural isomorphism between $\odot^2 \mcS$ and $\Tgt^\C \mcM$, and, by Hodge duality, $\wedge^2 \Tgt^\C \mcM$. This means that vector fields can be represented by a symmetric $2$-spinor, i.e.
\begin{align*}
V^a = V^{AB} \, .
\end{align*}
where $V^{AB} = V^{(AB)}$. Here, we are employing the abstract index notation of \cite{Penrose1984}. For those uncomfortable with this approach, one can introduce $\gamma$-matrices $\gamma \ind{_a^{AB}}$ to convert spinorial indices into tensorial ones and vice versa, i.e.
\begin{align*}
V^{AB} & = V^a \gamma \ind{_a^{AB}} \, , & V^a & = \gamma \ind{^a_{AB}} V^{AB} \, .
\end{align*}
These $\gamma$-matrices satisfy $2 \, \gamma \ind{_{(a}_A^C} \gamma \ind{_{b)}_{BC}} =  g_{ab} \varepsilon_{AB}$.

As in four dimensions, the metric $g_{ab}$ can be reinterpreted as the outer product of two copies $\varepsilon_{AB}$, which we find to be
\begin{align}\label{eq-metric-spinor}
g_{ab} & = g_{ABCD} = g_{(AB)(CD)} = - \varepsilon _{A(C} \varepsilon _{D)B} \, .
\end{align}
It follows that the norm of any vector field $V^a$ at any point equals the Pfaffian of its corresponding spinor $V^{AB}$, i.e.
\begin{align*}
V^a V_a & = \varepsilon _{AC} \varepsilon _{BD} V^{AB} V^{CD} \, .
\end{align*}
Hence, a non-zero vector field $V^a$ is null if and only if its corresponding spinor $V^{AB}$ has vanishing Pfaffian. Hence, $V^{AB}$ must be of rank $1$, and we can write $V^{AB} = \alpha^A \beta^B$ for some spinors $\alpha^A$ and $\beta^A$. In fact, using \eqref{eq-metric-spinor} once more, we see that $\alpha^A \beta_A = 0$, and so $\beta^A$ must be proportional to $\alpha^A$. The constant of proportionality can be absorbed by the spinor so that

\begin{lem}
Any null vector field $k^a$ can be written in the form
\begin{align*}
k^a = k^{AB} & = \xi^A \xi^B \, ,
\end{align*}
for some spinor field $\xi^A$.

In particular, there is a one-to-one correspondence between null line subbundle of $\Tgt^\C \mcM$ and lines of spinor fields.
\end{lem}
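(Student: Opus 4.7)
The plan is to pick up directly where the preceding discussion leaves off. The setup has already done most of the work: a null vector $V^a$ corresponds to a symmetric 2-spinor $V^{AB}$ via $\Tgt^\C \mcM \cong \odot^2 \mcS$, and the nullity condition translates, via the spinorial form \eqref{eq-metric-spinor} of the metric, into the vanishing of the Pfaffian (equivalently, the determinant) of $V^{AB}$ viewed as a $2\times 2$ symmetric matrix. My first step would be to observe that such a matrix has rank at most one, and therefore factorises as $V^{AB} = \alpha^A \beta^B$ for some spinors $\alpha^A, \beta^A$, as already noted in the text.

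Second, I would extract the additional constraint that $\alpha^A$ and $\beta^A$ be proportional. The cleanest route is to use the symmetry of $V^{AB}$: from $\alpha^A \beta^B = \alpha^B \beta^A$ one concludes, by contracting with any $\gamma_B$ such that $\gamma_B \beta^B \neq 0$, that $\alpha^A$ and $\beta^A$ span the same line in $\mcS$ (wherever $V^{AB}$ is nonzero). Writing $\beta^A = c\, \alpha^A$ and absorbing the scalar $c$ into the spinor by setting $\xi^A := \sqrt{c}\, \alpha^A$ gives the desired factorisation $V^{AB} = \xi^A \xi^B$.

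For the bijective correspondence between null line subbundles of $\Tgt^\C \mcM$ and line subbundles of $\mcS$, I would argue as follows. The map $\xi^A \mapsto \xi^A \xi^B$ is homogeneous of degree two, so rescaling $\xi^A$ by $\lambda \in \C^*$ rescales the image by $\lambda^2$; since $\lambda \mapsto \lambda^2$ is surjective on $\C^*$, the induced map on projective classes is well-defined, and the previous paragraph provides an inverse (given up to the sign ambiguity $\xi \mapsto -\xi$, which is exactly what passing to projective classes kills). Globalising, the pointwise bijection extends to an isomorphism between the projective bundle $\mathbb{P}\mcS$ (thought of as parametrising line subbundles of $\mcS$) and the bundle of null directions in $\Tgt^\C \mcM$.

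The only subtle point — and the place to be careful — is the square root in the step $\xi^A := \sqrt{c}\, \alpha^A$: this is a choice of sign, not a smoothness obstruction, and it disappears once we pass from spinors to lines of spinors. There is no monodromy obstruction to the correspondence at the level of line subbundles precisely because we are quotienting out the ambiguity that the square root introduced. The actual routine check — that this pointwise correspondence is smooth — reduces to the smoothness of the map $\mathbb{P}\mcS \to \mathbb{P}(\odot^2 \mcS)$, which is algebraic, hence holomorphic.
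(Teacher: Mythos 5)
Your argument is correct and follows essentially the same route as the paper: vanishing Pfaffian forces $V^{AB}$ to have rank one, hence $V^{AB}=\alpha^A\beta^B$, proportionality of $\alpha^A$ and $\beta^A$ lets you absorb the scalar into a single spinor, and the projective statement kills the residual sign ambiguity of the square root. The only (harmless) deviation is that you deduce $\beta^A\propto\alpha^A$ from the symmetry of $V^{AB}$, whereas the paper contracts the nullity condition with the spinorial metric once more to obtain $\alpha^A\beta_A=0$ and then invokes the nondegeneracy of $\varepsilon_{AB}$ on the rank-two spin space; both are one-line arguments leading to the same conclusion.
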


\paragraph{Decomposition of a $2$-form}
Any $2$-form on $\mcM$ can be expressed as
\begin{align}\label{eq-2form-spinor}
\phi_{ab} = \phi_{[ab]} = \phi _{ABCD} = \phi _{(AB)(CD)} = \phi_{ABCD} = 2 \, \varepsilon_{(A|(C} \phi_{D)|B)}
\end{align}
where $\phi_{AB} = \phi_{(AB)} = \frac{1}{2} \phi \ind{_{ACB}^C}$.

\paragraph{Curvature spinors}
The decomposition rule \eqref{eq-2form-spinor} allows us to write the Riemann tensor as
\begin{align*}
R \ind{_{abcd}} & = R \ind{_{(AB)(CD)(EF)(GH)}}
 = 4 \, \varepsilon_{(A|(C} X_{D)|B)(E|(G} \varepsilon_{H)|F)} \, .
\end{align*}
where $X_{ABCD} = X_{(AB((CD)}$ satisfies $X_{ABCD} = X_{CDAB}$.
Writing
\begin{align*}
2 \, \Phi \ind{_{ABCD}} & = \Phi_{ab} \, , &  R & = R \ind{_a^a} =: 12 \, S \, ,
\end{align*}
for the tracefree Ricci tensor and the Ricci scalar respectively. Here the factor $2$ preceding $\Phi_{ABCD}$ has been added for later convenience. We can show
\begin{align*}
X \ind{_{ABCD}} & = \Phi \ind{_{ABCD}} + S \, \varepsilon \ind{_{A(C}} \varepsilon \ind{_{D)B}} \, .
\end{align*}
where $\Phi \ind{_{ABCD}} = \Phi \ind{_{(ABCD)}}$ and $X \ind{_{ACB}^C} = 3 \, S \, \varepsilon_{AB}$ and $X \ind{_{AB}^{AB}} = 6 \, S$.
Contracting yields
\begin{align*}
R_{ab} & = 2 X _{(A|(CD)|B)} - 2 \, \varepsilon_{(A|(C} X \ind{_{D)}^E_{E|B)}}  \, .
\end{align*}

\paragraph{Spinor geometry}
Applying the decomposition \eqref{eq-2form-spinor} to the commutator yields
\begin{align}\label{eq-commutator}
2 \, \nabla_{[a} \nabla_{b]} & = \nabla_{AB} \nabla_{CD} - \nabla_{CD} \nabla_{AB}
= 2 \, \varepsilon _{(A|(C} \Box{_{D)|B)}}
\end{align}
where $\Box \ind{_{AB}} := \nabla \ind{_{C (A}} \nabla \ind{_{B)}^C}$, which on any spinor $\alpha^A$, acts as
\begin{align}\label{eq-commutator-spinor}
\Box_{AB} \alpha^C & = - X \ind{_{ABD}^C} \alpha^D \, ,
\end{align}
where $X \ind{_{ABCD}}$ is the curvature spinor.

The contracted Bianchi identitiy \eqref{eq-Bianchi-contracted} in spinorial form reads
\begin{align}\label{eq-Bianchi-spinor}
\nabla^{CD} \Phi \ind{_{CDAB}} - \nabla _{CD} S & = 0 \, ,
\end{align}
while the Cotton tensor \eqref{eq-Cotton} or \eqref{eq-Cotton-dual} reads
\begin{align}\label{eq-Cotton-spinor}
A_{ABCD} & = 4 \, \nabla \ind{_{(A}^E} \Phi \ind{_{BCD)E}} \, .
\end{align}
As a matter of interest, we record the topological massive gravity equations \eqref{eq-TMG2a} and \eqref{eq-TMG2b} in spinorial form
\begin{align*}
\Phi \ind{_{ABCD}} & = - \frac{\ii^q}{2 \sqrt{2}m} A \ind{_{ABCD}} \, , & 2 \, S & = \Lambda = \mbox{constant} \, ,
\end{align*}
where $q=0$ in Euclidean signature and $q=1$ in Lorentzian signature, and $m$ is a constant.

\paragraph{A useful formula}
We can convenient split the image of $\Phi_{ABCD}$ under the Dirac operator into irreducibles, in the sense of
\begin{align*}
\nabla \ind{_A^E} \Phi \ind{_{BCDE}} & = \frac{1}{4} \left( - 3 \, \varepsilon \ind{_{A(B}} \nabla \ind{^{EF}} \Phi \ind{_{CD)EF}} + 4 \, \nabla \ind{_{(A}^E} \Phi \ind{_{BCD)E}} \right) \, .
\end{align*}
Now, by the Bianchi identity \eqref{eq-Bianchi-spinor} and the definition of the Cotton `spinor' \eqref{eq-Cotton-spinor}, we obtain the useful identity
\begin{align}\label{eq-Bianchi+Cotton}
\nabla \ind{_A^E} \Phi \ind{_{BCDE}} & = \frac{1}{4} \left( - 3 \, \varepsilon \ind{_{A(B}} \nabla \ind{_{CD)}} S + A_{ABCD} \right) \, .
\end{align}

\paragraph{Reality conditions}
When $\bm{g}$ has signature $(3,0)$ the spin group  is isomorphic to $\SU(2)$, while when $\bm{g}$ has signature $(2,1)$, the spin group is isomorphic to $\SL(2,\R)$. Both are real forms of the complex Lie group $\SL(2,\C)$.

\section{A Newman-Penrose formalism in three dimensions}\label{sec-NP}
Our starting point will be a spin dyad $( o^A , \iota^A )$ normalised to $o_A \iota^A =1$. We shall adopt the convention that
\begin{align*}
o^A & = \delta \ind*{_0^A} \, , & \iota^A & = \delta \ind*{_1^A} \, , & \iota_A & = - \delta \ind*{_A^0} \, , & o_A & = \delta \ind*{_A^1} \, ,
\end{align*}
where we think of $\delta_A^B$ as a Kronecker delta.
Thus, to take the components of a spinor $S \ind{_{ABC}}$, say, with respect to this dyad, we shall write
\begin{align*}
S_{ABC} o^A o^B o^C & = S_{000} \, , & S_{ABC} o^A \iota^B o^C & = S_{010} \, , & \ldots
\end{align*}
and so on. The spin-invariant bilinear form then takes the form $\varepsilon_{AB} = 2 \, o_{[A} \iota_{B]}$.
With respect to the spin dyad, the components of $\varepsilon _{AB}$ and its inverse $\varepsilon ^{AB}$ are given by
\begin{align*}
\varepsilon_{01} & = - \varepsilon_{10} = 1 \, , & \varepsilon^{01} & = - \varepsilon^{10} = 1 \, .
\end{align*}
This normalised spin dyad determines a null triad
\begin{align*}
k^a & := o^A o^B \, , & \ell^a & := \iota^A \iota^B \, , & n^a & := o^{(A} \iota^{B)} \, , 
\end{align*}
so that $k^a \ell_a = 1$ and $n^a n_b = - \frac{1}{2}$, and all other contractions vanish. The metric then takes the form
\begin{align}\label{eq-metric}
g_{ab} & = 2 \, k_{(a} \ell_{b)} - 2 \, n_a n_b \, .
\end{align}

\paragraph{Spin coefficients}
As before, we let $\nabla_{AB}$ denote the Levi-Civita connection preserving $g_{ab}$, and by extension its lift to the spinor bundle. We introduce a connection $\partial \ind{_{AB}}$, which preserves $g_{ab}$ together with the spin dyad $\{ o^A , \iota^A \}$. Then the difference between $\nabla_{AB}$ and $\partial_{AB}$ on any spinor $\xi^A$ will be given by
\begin{align*}
\nabla \ind{_{AB}} \xi^C & = \partial \ind{_{AB}} \xi^C + \gamma \ind{_{ABD}^C} \xi^D \, ,
\end{align*}
for some spinor $\gamma _{ABCD} = \gamma _{(AB)(CD)}$. This spinor can then be interpreted as the connection $1$-form of the Levi-Civita connection.

We define the following differential operators
\begin{align*}
\begin{pmatrix}
D \\ \Delta \\ \delta
\end{pmatrix}
& :=
\begin{pmatrix}
o^A o^B \nabla_{AB}  \\ \iota^A \iota^B \nabla_{AB} \\ o^A \iota^B \nabla_{AB} 
\end{pmatrix}
=
\begin{pmatrix}
k^a \nabla_a \\ \ell^a \nabla_a \\ n^a \nabla_a 
\end{pmatrix} \, .
\end{align*}
Then, we can express the components of the connection $1$-form $\gamma \ind{_{ABC}^D}$
\begin{align*}
\begin{pmatrix}
\kappa & \rho & \tau \\
\epsilon & \alpha & \gamma \\
\pi & \mu & \nu 
\end{pmatrix} :=
\begin{pmatrix}
\gamma \ind{_{0000}} & \gamma \ind{_{0100}} & \gamma \ind{_{1100}} \\
\gamma \ind{_{0001}} & \gamma \ind{_{0101}} & \gamma \ind{_{1101}} \\
\gamma \ind{_{0011}} & \gamma \ind{_{0111}} & \gamma \ind{_{1111}} 
\end{pmatrix}
& =
\begin{pmatrix}
o^B D o_B & o^B \delta o_B & o^B \Delta  o_B \\
\iota^B D o_B & \iota^B \delta o_B & \iota^B \Delta o_B \\
\iota^B D \iota_B & \iota^B \delta \iota_B & \iota^B \Delta \iota_B 
\end{pmatrix} \\
& =
\begin{pmatrix}
n^b D k_b & n^b \delta k_b & n^b \Delta  k_b \\
\frac{1}{2} \ell^b D k_b & \frac{1}{2} \ell^b \delta k_b & \frac{1}{2} \ell^b \Delta k_b \\
- n^b D \ell_b & - n^b \delta \ell_b & - n^b \Delta  \ell_b 
\end{pmatrix} \, .
\end{align*}
Expanding the covariant derivatives of $k^a$, $\ell^a$ and $n^a$ in terms of the spin coefficients yield
\begin{align}
\nabla_a k^b & = 2 \, \gamma \, k_a k^b + 2 \, \epsilon \, \ell_a k^b - 4 \, \alpha \, n_a k^b - 4 \, \tau \, k_a n^b - 4 \, \kappa \, \ell_a n^b + 8 \, \rho \, n_a n^b \, , \tag{\ref{eq-nablak}}\\
\nabla_a \ell^b & = - 2 \, \epsilon \, \ell_a \ell^b - 2 \, \gamma \, k_a \ell^b + 4 \, \alpha \, n_a \ell^b + 4 \, \pi \, \ell_a n^b + 4 \, \nu \, k_a n^b - 8 \, \mu \, n_a n^b \, , \tag{\ref{eq-nablal}}\\
\nabla_a n^b & = - 2 \, \kappa \, \ell_a \ell^b + 2 \, \nu k_a k^b + 2 \, \pi \, \ell_a k^b - 2 \, \tau \, k_a \ell^b + 4 \, \rho \, n_a \ell^b - 4 \, \mu \, n_a k^b \, . \tag{\ref{eq-nablau}}
\end{align}

\paragraph{Curvature coefficients}
Similarly, the components of the tracefree Ricci tensor are given by
\begin{align*}
\begin{pmatrix}
 \Phi_0 \\ \Phi_1 \\ \Phi_2 \\ \Phi_3 \\ \Phi_4
\end{pmatrix}
:=
\begin{pmatrix}
 \Phi_{0000} \\ \Phi_{0001} \\ \Phi_{0011} \\ \Phi_{0111} \\ \Phi_{1111} 
\end{pmatrix}
& = 
\begin{pmatrix}
 \Phi _{ABCD} o^A o^B o^C o^D \\ \Phi _{ABCD} o^A o^B o^C \iota^D \\ \Phi _{ABCD} o^A o^B \iota^C \iota^D \\ \Phi _{ABCD} o^A \iota^B \iota^C \iota^D \\ \Phi_{ABCD} \iota^A \iota^B \iota^C \iota^D 
\end{pmatrix} = 
\begin{pmatrix}
\frac{1}{2} \Phi _{ab} k^a k^b \\ \frac{1}{2} \Phi _{ab} k^a n^b \\ \frac{1}{2} \Phi _{ab} k^a \ell^a = \frac{1}{2} \Phi _{ab} n^a n^b \\ \frac{1}{2} \Phi _{ab} \ell^a n^b \\ \frac{1}{2} \Phi _{ab} \ell^a \ell^b 
\end{pmatrix} \, ,
\end{align*}
while those of the Cotton tensor by
\begin{align}\label{eq-NP-Cotton}
\begin{pmatrix}
 A_0 \\ A_1 \\ A_2 \\ A_3 \\ A_4
\end{pmatrix}
:=
\begin{pmatrix}
 A_{0000} \\ A_{0001} \\ A_{0011} \\ A_{0111} \\ A_{1111} 
\end{pmatrix}
& = 
\begin{pmatrix}
 A _{ABCD} o^A o^B o^C o^D \\ A _{ABCD} o^A o^B o^C \iota^D \\ A _{ABCD} o^A o^B \iota^C \iota^D \\ A _{ABCD} o^A \iota^B \iota^C \iota^D \\ A_{ABCD} \iota^A \iota^B \iota^C \iota^D 
\end{pmatrix} = 
\begin{pmatrix}
2 \, A _{abc} k^a k^b n^c \\ A _{abc} k^a k^b \ell^c \\ 2 \, A _{abc} k^a n^b \ell^c \\ A _{abc} \ell^a k^b \ell^c \\  2 \, A _{abc} \ell^a n^b \ell^c 
\end{pmatrix} = 
\begin{pmatrix}
- \sqrt{2} (-\ii)^q \, (*A) _{ab} k^a k^b \\ - \sqrt{2} (-\ii)^q \, (*A) _{ab} k^a n^b \\ - \sqrt{2} (-\ii)^q \, (*A) _{ab} k^a \ell^b \\ - \sqrt{2} (-\ii)^q \, (*A) _{ab} \ell^a n^b \\  - \sqrt{2} (-\ii)^q \, (*A) _{ab} \ell^a \ell^b 
\end{pmatrix} \, ,
\end{align}
where $q=0$ in Euclidean signature and $q=1$ in Lorentzian signature, and we have assumed that the volume form is given by
\begin{align*}
e_{abc} & = \ii^q 6 \sqrt{2} k_{[a} \ell_b n_{c]} \, .
\end{align*}

\paragraph{Commutation relations}
The commutator of the Levi-Civita connection
\begin{align*}
 [ \nabla_{A B} , \nabla_{C D} ] & = 2 \, \gamma \ind{_{A B (C}^E} \nabla \ind{_{D) E}} - 2 \, \gamma \ind{_{C D (A}^E} \nabla \ind{_{B) E}}
\end{align*}
has components given by
\begin{align}
 [ D , \Delta ] & = 2 \, \left( \pi + \tau \right) \delta - 2 \, \gamma D - 2 \, \epsilon \Delta \, , \label{commutator0011} \\
 [ D , \delta ] & = 2 \, \rho \delta 
 + \left( \pi - 2 \, \alpha \right) D - \kappa \Delta \, , \label{commutator0001} \\
 [ \Delta , \delta ] & = - 2 \, \mu \delta
 + \nu D + \left( - \tau + 2 \, \alpha \right) \Delta  \, . \label{commutator1101} 
\end{align}

\paragraph{Ricci identity}
The Ricci identity \eqref{eq-commutator} together with \eqref{eq-commutator-spinor} can be re-expressed as
\begin{multline*}
\partial \ind{_{AB}} \gamma \ind{_{CDE}^{F}} - \partial \ind{_{CD}} \gamma \ind{_{ABE}^{F}} = \\
\gamma \ind{_{ABE}^{G}} \gamma \ind{_{CDG}^{F}} - \gamma \ind{_{CDE}^{G}} \gamma \ind{_{ABG}^{F}} - \gamma \ind{_{CDA}^{G}} \gamma \ind{_{GBE}^{F}} + \gamma \ind{_{ABC}^{G}} \gamma \ind{_{GDE}^{F}} - \gamma \ind{_{CDB}^{G}} \gamma \ind{_{GAE}^{F}} + \gamma \ind{_{ABD}^{G}} \gamma \ind{_{GCE}^{F}} \\
- \frac{1}{2} \left( \varepsilon \ind{_{AC}} \Phi \ind{_{DBE}^F} + \varepsilon \ind{_{AD}} \Phi \ind{_{CBE}^F} + \varepsilon \ind{_{BC}} \Phi \ind{_{DAE}^F} + \varepsilon \ind{_{BD}} \Phi \ind{_{CAE}^F} \right) \\
- \frac{1}{4} S \left( \varepsilon \ind{_{AC}} \varepsilon \ind{_{ED}} \varepsilon \ind{_B^F} + \varepsilon \ind{_{AC}} \varepsilon \ind{_{EB}} \varepsilon \ind{_D^F} + \varepsilon \ind{_{AD}} \varepsilon \ind{_{EC}} \varepsilon \ind{_B^F} + \varepsilon \ind{_{AD}} \varepsilon \ind{_{EB}} \varepsilon \ind{_C^F} \right. \\
\left. + \varepsilon \ind{_{BC}} \varepsilon \ind{_{ED}} \varepsilon \ind{_A^F} + \varepsilon \ind{_{BC}} \varepsilon \ind{_{EA}} \varepsilon \ind{_D^F} + \varepsilon \ind{_{BD}} \varepsilon \ind{_{EC}} \varepsilon \ind{_A^F} + \varepsilon \ind{_{BD}} \varepsilon \ind{_{EA}} \varepsilon \ind{_C^F} \right) \, .
\end{multline*}
Taking the various components with respect to the spin dyad $\{ o^A, \iota^A \}$ yields
\begin{align}
D \tau - \Delta \kappa & =
- 4 \, \gamma \kappa 
+ 2 \, \pi \rho + 2 \, \tau \rho 
- 2 \, \Phi _1 \, , \label{Ricci001100} \\
D \rho - \delta \kappa & =
2 \, \epsilon \rho - 4 \, \alpha \kappa 
+ 2 \, \rho \rho + \pi \kappa 
- \kappa \tau 
- \Phi _0 \, , \label{Ricci000100} \\
\Delta \rho - \delta \tau & =
2 \, \gamma \rho 
 - 2 \, \mu \rho + \nu \kappa 
- \tau \tau 
+ \Phi _2  - S \, , \label{Ricci110100} \\
D \gamma - \Delta \epsilon & =
- 4 \, \epsilon \gamma 
- \kappa \nu + \tau \pi 
+ 2 \, \pi \alpha 
+ 2 \, \tau \alpha 
- 2 \, \Phi _2 - S \, , \label{Ricci001101} \\
D \alpha - \delta \epsilon & =
- 2 \, \epsilon \alpha 
- \kappa \mu + \rho \pi 
+ \pi \epsilon 
+ 2 \, \rho \alpha - \kappa \gamma 
- \Phi _1 \, , \label{Ricci000101} \\
\Delta \alpha - \delta \gamma & =
2 \, \gamma \alpha 
- \tau \mu + \rho \nu 
- 2 \, \mu \alpha + \nu \epsilon 
- \tau \gamma 
+ \Phi _3 \, , \label{Ricci110101} \\
D \nu - \Delta \pi & =
- 4 \, \nu \epsilon 
 + 2 \, \pi \mu 
 + 2 \, \tau \mu 
- 2 \, \Phi _3 \, , \label{Ricci001111} \\
D \mu - \delta \pi & =
- 2 \, \mu \epsilon 
+ \pi \pi 
+ 2 \, \rho \mu - \kappa \nu 
- \Phi _2 + S \, , \label{Ricci000111} \\
\Delta \mu - \delta \nu & =
4 \, \nu \alpha - 2 \, \mu \gamma 
- 2 \, \mu \mu + \nu \pi 
- \tau \nu 
+ \Phi _4 \, . \label{Ricci110111}
\end{align}

\paragraph{Bianchi identity}
The Bianchi identity \eqref{eq-Bianchi-spinor} can be re-expressed as
\begin{multline*}
\varepsilon^{A C} \varepsilon^{B D}  \left( \partial_{A B} \Phi_{C D E F} - \, \partial_{E F} S \right) = \\
\varepsilon^{A C} \varepsilon^{B D}  \left( \gamma \ind{_{A B C}^{G}} \Phi \ind{_{D E F G}} + \gamma \ind{_{A B D}^{G}} \Phi \ind{_{E F C G}} + \gamma \ind{_{A B E}^{G}} \Phi \ind{_{F C D G}} + \gamma \ind{_{A B F}^{G}} \Phi \ind{_{C D E G}} \right) \, ,
\end{multline*}
so that taking components with respect to the spin dyad yields
\begin{align}
D \Phi _2 + \Delta \Phi _0 - 2 \, \delta \Phi _1 - D S & =  \left( 2 \, \pi - 4 \, \tau - 4 \, \alpha \right) \Phi _1 
- 2 \, \kappa \Phi _3 
+ \left( 4 \, \gamma - 2 \, \mu \right) \Phi _0 
+ 6 \, \rho \Phi _2 \, , \label{Bianchi00} \\
D \Phi _3 + \Delta \Phi _1 - 2 \, \delta \Phi _2 - \delta S & =
\left( 3 \, \pi - 3 \, \tau \right) \Phi _2 + \left( 4 \, \rho - 2 \, \epsilon \right) \Phi _3
- \kappa \Phi _4 + \nu \Phi _0 
+ \left( 2 \, \gamma 
- 4 \, \mu \right) \Phi _1 \, , \label{Bianchi01} \\
D \Phi _4 + \Delta \Phi _2 - 2 \, \delta \Phi _3 - \Delta S & =
\left( 4 \, \pi - 2 \, \tau + 4 \, \alpha \right) \Phi _3 
+ \left( 2 \, \rho - 4 \, \epsilon \right) \Phi _4 
+ 2 \, \nu \Phi _1 
- 6 \, \mu \Phi _2 \, . \label{Bianchi11}
\end{align}

\paragraph{Cotton tensor}
Finally, from the definition \eqref{eq-Cotton-spinor} of the Cotton tensor , one obtains
\begin{align*}
4 \, \varepsilon^{E F}   \partial_{(A| F|} \Phi_{B C D) E} & = A_{A B C D} +
\varepsilon^{E F}  \left( 12 \, \gamma \ind{_{(A| F| B}^{G}} \Phi \ind{_{C D) E G}} + 4 \, \gamma \ind{_{(A| F E}^{G}} \Phi \ind{_{|B C D) G}} \right)  \, ,
\end{align*}
with components
\begin{align}
4 \left( \delta \Phi _0 - D \Phi _1 \right) & = A _0 +  4 \left( 4 \,  \alpha - \pi \right) \Phi _0 
- 4 \left( 4 \, \rho + 2 \, \epsilon \right) \Phi _1 + 12 \, \kappa \Phi _2 \, , \label{Cotton0000} \\
\Delta \Phi _0 + 2 \, \delta \Phi _1 - 3 D \Phi _2 & = A _1 + \left( 2\, \mu + 4 \, \gamma \right) \Phi _0 
+ \left( 4 \, \alpha - 4 \, \tau - 6 \, \pi \right) \Phi _1
- 6 \, \rho \Phi _2 
+ 6 \, \kappa \Phi _3 \, , \label{Cotton0001} \\
2 \Delta \Phi _1 - 2 D \Phi _3 & = A _2 
+ 2 \,  \nu \Phi _0 
+ 2 \,  \kappa \Phi _4 
+ 4 \,  \gamma \Phi _1 
+ 4 \,  \epsilon \Phi _3 
-  6 \left(  \pi + \tau \right)  \Phi _2 \, , \label{Cotton0011} \\
 - D \Phi _4 - 2 \, \delta \Phi _3 + 3 \Delta \Phi _2 & = A _3 
+ \left( 2\, \rho + 4 \, \epsilon \right) \Phi _4 
+ \left( 4 \, \alpha - 4 \, \pi - 6 \, \tau \right) \Phi _3
- 6 \, \mu \Phi _2 
+ 6 \, \nu \Phi _1 \, , \label{Cotton0111} \\
 4 \left( - \delta \Phi _4 + \Delta \Phi _3 \right) & = A _4
 +  4 \left( 4 \,  \alpha - \tau \right) \Phi _4 
- 4 \left( 4 \, \mu + 2 \, \gamma \right) \Phi _3 + 12 \, \nu \Phi _2 \, . \label{Cotton1111}
\end{align}

\subsection{Reality conditions}\label{sec-real-cond}
There remain to impose suitable reality conditions on the null basis  $( k^a , \ell^a , n^a )$ so that the metric \eqref{eq-metric} has the correct signature. These are listed together with their effects on the spin coefficients and the components of the tracefree Ricci tensor and Cotton tensor.
\begin{itemize}
 \item Signature $(3,0)$: $\{ k^a , \ell^a , n^a \} \mapsto  \{ \overline{k^a} , \overline{\ell^a} , \overline{n^a} \} = \{ \ell^a , k^a , -n^a  \}$
\begin{align*}
\begin{pmatrix}
\kappa & \rho & \tau \\
\epsilon & \alpha & \gamma \\
\pi & \mu & \nu 
\end{pmatrix} & \mapsto
\begin{pmatrix}
\bar{\kappa} & \bar{\rho} & \bar{\tau} \\
\bar{\epsilon} & \bar{\alpha} & \bar{\gamma} \\
\bar{\pi} & \bar{\mu} & \bar{\nu} 
\end{pmatrix} =
\begin{pmatrix}
\nu & - \mu & \pi \\
- \gamma & \alpha & - \epsilon \\
\tau & - \rho & \kappa
\end{pmatrix} \, , \\
\begin{pmatrix}
 \Phi _0 \\ \Phi _1 \\ \Phi _2 \\ \Phi _3 \\ \Phi _4
 \end{pmatrix} & \mapsto 
 \begin{pmatrix}
  \overline{\Phi _0} \\ \overline{\Phi _1} \\ \overline{\Phi _2} \\ \overline{\Phi _3} \\ \overline{\Phi _4} 
  \end{pmatrix} = 
  \begin{pmatrix}
  \Phi _4 \\ - \Phi _3 \\ \Phi _2 \\ - \Phi _1 \\ \Phi _0 
  \end{pmatrix} \, ,
  & & &
\begin{pmatrix}
 A _0 \\ A _1 \\ A _2 \\ A _3 \\ A _4
 \end{pmatrix} & \mapsto 
 \begin{pmatrix}
  \overline{A _0} \\ \overline{A _1} \\ \overline{A _2} \\ \overline{A _3} \\ \overline{A _4} 
  \end{pmatrix} = 
  \begin{pmatrix}
  A _4 \\ - A _3 \\ A _2 \\ - A _1 \\ A _0 
  \end{pmatrix} \, .
\end{align*} 

  \item Signature $(2,1)$:
  \begin{itemize}
    \item[$\lozenge$] Real index $0$: $\{ k^a , \ell^a , n^a \} \mapsto  \{ \overline{k^a} , \overline{\ell^a} , \overline{n^a} \} = \{ \ell^a , k^a , n^a  \}$
\begin{align*}
\begin{pmatrix}
\kappa & \rho & \tau \\
\epsilon & \alpha & \gamma \\
\pi & \mu & \nu 
\end{pmatrix} & \mapsto
\begin{pmatrix}
\bar{\kappa} & \bar{\rho} & \bar{\tau} \\
\bar{\epsilon} & \bar{\alpha} & \bar{\gamma} \\
\bar{\pi} & \bar{\mu} & \bar{\nu} 
\end{pmatrix} =
\begin{pmatrix}
- \nu & - \mu & - \pi \\
- \gamma & - \alpha & - \epsilon \\
- \tau & - \rho & - \kappa
\end{pmatrix} \, ,
\\
\begin{pmatrix}
 \Phi _0 \\ \Phi _1 \\ \Phi _2 \\ \Phi _3 \\ \Phi _4 
 \end{pmatrix} & \mapsto 
 \begin{pmatrix}
 \overline{\Phi _0} \\ \overline{\Phi _1} \\ \overline{\Phi _2} \\ \overline{\Phi _3} \\ \overline{\Phi _4} 
 \end{pmatrix}
 = 
 \begin{pmatrix}
 \Phi _4 \\ \Phi _3 \\ \Phi _2 \\ \Phi _1 \\ \Phi _0 
 \end{pmatrix} \, ,
 & & &
 \begin{pmatrix}
 A _0 \\ A _1 \\ A _2 \\ A _3 \\ A _4 
 \end{pmatrix} & \mapsto 
 \begin{pmatrix}
 \overline{A _0} \\ \overline{A _1} \\ \overline{A _2} \\ \overline{A _3} \\ \overline{A _4} 
 \end{pmatrix}
 = 
 \begin{pmatrix}
 - A _4 \\ - A _3 \\ - A _2 \\ - A _1 \\ - A _0 
 \end{pmatrix} \, .
\end{align*}
  \item[$\lozenge$] Real index $1$: $\{ k^a , \ell^a , n^a \} \mapsto  \{ \overline{k^a} , \overline{\ell^a} , \overline{n^a} \} = \{ k^a , \ell^a , -n^a  \}$
\begin{align*}
\begin{pmatrix}
\kappa & \rho & \tau \\
\epsilon & \alpha & \gamma \\
\pi & \mu & \nu 
\end{pmatrix} & \mapsto
\begin{pmatrix}
\bar{\kappa} & \bar{\rho} & \bar{\tau} \\
\bar{\epsilon} & \bar{\alpha} & \bar{\gamma} \\
\bar{\pi} & \bar{\mu} & \bar{\nu} 
\end{pmatrix} =
\begin{pmatrix}
- \kappa & \rho & - \tau \\
\epsilon & - \alpha & \gamma \\
- \pi & \mu & - \nu
\end{pmatrix} \, ,
\\
\begin{pmatrix}
\Phi _0 \\ \Phi _1 \\ \Phi _2 \\ \Phi _3 \\ \Phi _4 
\end{pmatrix} & \mapsto 
\begin{pmatrix}
\overline{\Phi _0} \\ \overline{\Phi _1} \\ \overline{\Phi _2} \\ \overline{\Phi _3} \\ \overline{\Phi _4} 
\end{pmatrix} = 
\begin{pmatrix}
\Phi _0 \\ - \Phi _1 \\ \Phi _2 \\ - \Phi _3 \\ \Phi _4 
\end{pmatrix} \, ,
& & &
\begin{pmatrix}
A _0 \\ A _1 \\ A _2 \\ A _3 \\ A _4 
\end{pmatrix} & \mapsto 
\begin{pmatrix}
\overline{A _0} \\ \overline{A _1} \\ \overline{A _2} \\ \overline{A _3} \\ \overline{A _4} 
\end{pmatrix} = 
\begin{pmatrix}
- A _0 \\ A _1 \\ - A _2 \\ A _3 \\ - A _4 
\end{pmatrix} \, .
\end{align*}
  \end{itemize}
\end{itemize}

\section{A spinorial approach to the Goldberg-Sachs theorem}\label{sec-GS-spinor}
The aim of this appendix is to give alternative proofs for the results in the main text using the spinor calculus of Appendix \ref{app-spinor-calculus}. Throughout, $(\mcM,\bm{g})$ will denote an oriented three-dimensional (pseudo-)Riemannian manifold. Recall that there is a one-to-one correspondence between projective spinor fields and null complex line distributions, i.e. null structures, on $\mcM$. Some of the following results are already given and generalised to arbitrary dimensions in \cites{Taghavi-Chabert2012a,Taghavi-Chabert2013}.

The following proposition is a spinorial version of Proposition \ref{prop-intrinsic-torsion}, and relates the integrability properties of a null structure to differential conditions on its associated spinor field.
\begin{prop}\label{prop-intrinsic-torsion-spinor}
Let $\xi^A$ be a spinor field on $(\mcM,\bm{g})$ with associated null structure $\mcN$. Then
\begin{align*}
\mbox{$\mcN$ is co-integrable} & & \Longleftrightarrow & & \xi^A \xi^B \xi^C \nabla _{AB} \xi_C & = 0 \, , \\
\mbox{$\mcN$ is co-geodetic} & & \Longleftrightarrow & & \xi^B \xi^C \nabla _{AB} \xi_C & = 0 \, , \\
\mbox{$\mcN$ is parallel} & & \Longleftrightarrow & & \xi^C \nabla _{AB} \xi_C & = 0 \, .
\end{align*}
\end{prop}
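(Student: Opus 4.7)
The plan is to pass to a spin dyad $(o^A, \iota^A)$ adapted to the null structure in the sense that $\xi^A \propto o^A$, so that the null generator of $\mcN$ is $k^a = o^A o^B$, and then translate each of the three spinorial contractions into the Newman-Penrose spin coefficients $\kappa, \rho, \tau, \ldots$ of Appendix \ref{sec-NP}. The conclusion will then follow by direct comparison with the tensorial characterisations obtained in (the proof of) Proposition \ref{prop-intrinsic-torsion}, supplemented by inspection of the expansion \eqref{eq-nablak} of $\nabla_a k^b$ for the parallel case.

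First, I would check that each of the three expressions is well-defined on the projective spinor, i.e.\ scale-covariant under $\xi^A \mapsto \lambda \xi^A$. The key fact is $\xi^C \xi_C = 0$, which causes the extra Leibniz terms $\xi^C \xi_C \nabla_{AB} \lambda$ to drop out, leaving each contraction homogeneous of the expected weight. Thus no generality is lost in fixing the normalisation $o_A \iota^A = 1$.

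Next, starting from the dyad-level expansions $\nabla_{AB} o^C = \gamma_{AB0}{}^C$ and its dual $\nabla_{AB} o_C = -\gamma_{ABC}{}^D o_D$, together with the identity $\xi^C \nabla_{AB}\xi_C = -\xi_C \nabla_{AB} \xi^C$, I would compute
\begin{align*}
\xi^C \nabla_{AB} \xi_C\big|_{\xi = o} & = -\gamma_{AB00}\,,
\end{align*}
and read off from the table in Appendix \ref{sec-NP} that $\gamma_{0000} = \kappa$, $\gamma_{0100} = \rho$, $\gamma_{1100} = \tau$. Contracting with further factors of $o^A$ then selects specific components: $\xi^A \xi^B \xi^C \nabla_{AB}\xi_C = -\kappa$; $\xi^B \xi^C \nabla_{AB}\xi_C$ has components $-\kappa$ (at $A=0$) and $-\rho$ (at $A=1$); and the full expression $\xi^C \nabla_{AB}\xi_C$ has symmetric components $(-\kappa, -\rho, -\tau)$ in the dyad.

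Finally, matching these with the proof of Proposition \ref{prop-intrinsic-torsion} yields the three equivalences: co-integrable $\Leftrightarrow \kappa = 0$, co-geodetic $\Leftrightarrow \kappa = \rho = 0$, and (from \eqref{eq-nablak}, since $\nabla_a k^b \in \Gamma(\mcN)$ forces the coefficients of $n^b$ and of $n^a n^b$ to vanish) parallel $\Leftrightarrow \kappa = \rho = \tau = 0$. No serious obstacle is expected; the only delicacy lies in the sign bookkeeping when raising and lowering spinor indices with $\varepsilon_{AB}$ and in identifying the correct symmetrisation of $\gamma_{ABCD} = \gamma_{(AB)(CD)}$, but once the dyad is fixed these reductions are mechanical.
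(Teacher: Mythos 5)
Your proposal is correct and follows essentially the same route as the paper: set $o^A := \xi^A$, express $\xi^C \nabla_{AB}\xi_C$ through the spin coefficients so that its dyad components are $(\kappa,\rho,\tau)$, and match the successive contractions with $o^A$ against Proposition \ref{prop-intrinsic-torsion} (and against \eqref{eq-nablak} for the parallel case). The only quibble is a harmless overall sign: with the paper's conventions $o^C\nabla_{AB}o_C = +\gamma_{AB00}$ and $o^Ao^Bo^C\nabla_{AB}o_C = +\kappa$, which of course does not affect the vanishing conditions.
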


\begin{proof}
The above result is already given in \cite{Taghavi-Chabert2013}. We can however use the NP formalism of appendix \ref{sec-NP} by taking $o^A := \xi^A$. Then
\begin{align*}
o^C \nabla_{AB} o_C & = \kappa \, \iota_A \iota_B - 2 \, \rho \, o_{(A} \iota_{B)} + \tau \, o_A o_B \, .
\end{align*}
Contracting the free indices with instances of $o^A$ give conditions on $\kappa$, $\rho$ and $\tau$, which, with reference to the the expression for $\left( \nabla_a k_{[b} \right) k_{c]}$, yields the result.
\end{proof}

Having translated the Lie bracket conditions of a null line and its orthogonal complement into spinorial differential equations, we can now re-express some of the results of sections  \ref{sec-alg-class}, \ref{sec-curvature-cond} and \ref{sec-GS}. In particular, with reference to Remark \ref{rem-spinor-Petrov}, the Petrov classification of the Weyl tensor can be expressed in the following terms.

\begin{lem}
Let $\xi^A$ be a pure spinor field on $(\mcM,\bm{g})$ with associated null structure $\mcN$. Then
\begin{itemize}
\item $\mcN$ is a principal null structure if and only if $\Phi_{ABCD} \xi^A \xi^B \xi^C \xi^D = 0$;
\item $\Phi_{ab}$ is algebraically special, i.e. of Petrov type II with $\mcN$ as multiple principal null structure if and only if $\Phi_{ABCD} \xi^A \xi^B \xi^C = 0$;
\item $\Phi_{ab}$ is of Petrov type III with $\mcN$ as multiple principal null structure if and only if $\Phi_{ABCD} \xi^A \xi^B = 0$;
\item $\Phi_{ab}$ is of Petrov type N with $\mcN$ as multiple principal null structure if and only if $\Phi_{ABCD} \xi^A = 0$.
\end{itemize}
\end{lem}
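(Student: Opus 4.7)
The plan is to work inside the Newman-Penrose framework of Appendix \ref{sec-NP} by choosing the spin dyad $(o^A,\iota^A)$ so that $o^A := \xi^A$. Under this choice the null structure associated to $\xi^A$ is precisely the one generated by $k^a = o^A o^B = \xi^A \xi^B$, and the five frame scalars $\Phi_0,\ldots,\Phi_4$ defined in Appendix \ref{sec-NP} are exactly the components of the totally symmetric spinor $\Phi_{ABCD}$ with respect to this dyad, namely $\Phi_r = \Phi_{ABCD} o^{A_0} o^{A_1}\cdots \iota^{A_{4-r}}\cdots \iota^{A_3}$ with $r$ copies of $\iota^A$.

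For the first bullet I would simply observe that $\Phi_{ABCD}\xi^A\xi^B\xi^C\xi^D$ is by definition the scalar $\Phi_0$, which via the identifications in Appendix \ref{sec-NP} equals $\tfrac{1}{2}\Phi_{ab}k^ak^b$; this vanishes precisely when $\mcN$ is principal in the sense of Definition \ref{defn-principal}. For the three remaining bullets the key point is that a spinor with free indices vanishes if and only if each of its dyad components does. Contracting $\Phi_{ABCD}$ with three, two, or one copy of $\xi^A=o^A$ and pairing each remaining free index in turn with $o^A$ and $\iota^A$ yields the chain
\begin{align*}
\Phi_{ABCD}\xi^A\xi^B\xi^C = 0 & \;\Longleftrightarrow\; \Phi_0 = \Phi_1 = 0 \, , \\
\Phi_{ABCD}\xi^A\xi^B = 0 & \;\Longleftrightarrow\; \Phi_0 = \Phi_1 = \Phi_2 = 0 \, , \\
\Phi_{ABCD}\xi^A = 0 & \;\Longleftrightarrow\; \Phi_0 = \Phi_1 = \Phi_2 = \Phi_3 = 0 \, ,
\end{align*}
and by the algebraic characterisations of the Petrov types listed after the table in section \ref{sec-Petrov-types-complex}, these are exactly the type II, type III, and type N conditions respectively, with $\mcN$ a multiple principal null structure in each case.

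There is no substantive obstacle; the proof is essentially a dictionary lookup between the index and spinor notations. The only bookkeeping to watch is the normalisation $2\,\Phi_{ABCD}\leftrightarrow \Phi_{ab}$ introduced in Appendix \ref{app-spinor-calculus}, and the fact that the five scalars $\Phi_0,\ldots,\Phi_4$ exhaust the components of the totally symmetric $\Phi_{ABCD}$. An alternative, manifestly invariant route would start from the factorisation $\Phi_{ABCD}\propto\alpha_{(A}\beta_B\gamma_C\delta_{D)}$ of section \ref{sec-alg-class} and observe that $\xi^A$ is a root of \eqref{eq-Ricci-polynomial-spinor} of multiplicity $k$ if and only if exactly $k$ of the factors $\alpha_A,\beta_A,\gamma_A,\delta_A$ are proportional to $\xi_A$; the four bullets then read off directly, without recourse to any chosen dyad.
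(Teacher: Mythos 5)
Your proof is correct and follows essentially the same route the paper intends: the paper states this lemma without an explicit proof, relying on the dictionary of sections \ref{sec-alg-class}--\ref{sec-Petrov-types-complex} between the dyad components $\Phi_0,\dots,\Phi_4$, the roots of the quartic \eqref{eq-Ricci-polynomial-spinor}, and the tabulated Petrov-type conditions, which is exactly the computation you write out (with the correct convention that ``type II'' here means ``at least type II'', i.e.\ algebraically special). Your closing remark about the factorisation $\Phi_{ABCD}\propto\alpha_{(A}\beta_B\gamma_C\delta_{D)}$ is the same mechanism viewed invariantly, so there is nothing further to add.
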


With this lemma, it is easy to compare the remaining results with those of sections \ref{sec-curvature-cond} and \ref{sec-GS}.

\begin{prop}[Integrability condition]\label{prop-int-cond-spinor}
Let $\xi^A$ be a spinor field on $(\mcM,\bm{g})$, and suppose it satisfies
\begin{align}\label{eq-strongly_foliating}
\xi^B \xi^C \nabla _{AB} \xi_C & = 0 \, .
\end{align}
Then
\begin{align}\label{eq-int-cond-spinor}
\Phi _{ABCD} \xi^A \xi^B \xi^C \xi^D & = 0 \, .
\end{align}
\end{prop}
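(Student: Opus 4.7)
The plan is to apply the spinorial Ricci identity to $\xi^C$, contract with three additional copies of $\xi$, and use the hypothesis twice to collapse the result onto the desired scalar $\Phi_{ABCD}\xi^A\xi^B\xi^C\xi^D$.

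First I would recast the hypothesis in a usable form. Since $\xi_C\xi^C=0$ implies $\xi^C\nabla_{AB}\xi_C = -\xi_C\nabla_{AB}\xi^C$, the assumption is equivalent to $\xi_C\bigl(\xi^B\nabla_{AB}\xi^C\bigr) = 0$; in a two-dimensional spinor space any spinor annihilated by $\xi_C$ is proportional to $\xi^C$, so there exists a co-spinor $\alpha_A$ with
\begin{align*}
\xi^B\nabla_{AB}\xi^C & = \alpha_A\,\xi^C \, .
\end{align*}
Raising the $D$-index and using symmetry of $\nabla_{AB}$ in $AB$ give the two equivalent forms $\xi^B\nabla_B{}^D\xi^C = \alpha^D\,\xi^C$ and $\xi^A\nabla_{DA}\xi^B = \alpha_D\,\xi^B$ that I will need.

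Next I would invoke the Ricci identity \eqref{eq-commutator-spinor}, $\Box_{AB}\xi^C = -X_{ABD}{}^C\xi^D$, with $X_{ABCD} = \Phi_{ABCD} + S\,\varepsilon_{A(C}\varepsilon_{D)B}$. Contracting with $\xi^A\xi^B$ eliminates the $S$-piece, because each of its $\varepsilon$-factors is antisymmetric in a pair of indices that is contracted against the symmetric $\xi^A\xi^B$; one is left with $\xi^A\xi^B\,\Box_{AB}\xi^C = -\Phi_{ABD}{}^C\,\xi^A\xi^B\xi^D$. A further contraction with $\xi_C$ turns the right-hand side into $-\Phi_{ABCD}\xi^A\xi^B\xi^C\xi^D$, so the proof reduces to showing that $\xi^A\xi^B\xi_C\,\Box_{AB}\xi^C$ vanishes under the hypothesis.

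For the left-hand side I would unfold $\Box_{AB}=\nabla_{D(A}\nabla_{B)}{}^D$ (dropping the $(AB)$ symmetrisation, which is automatic under $\xi^A\xi^B$) and split via the Leibniz rule,
\begin{align*}
\xi^A\xi^B\,\Box_{AB}\xi^C & = \xi^A\nabla_{DA}\bigl(\xi^B\nabla_B{}^D\xi^C\bigr) - \xi^A\bigl(\nabla_{DA}\xi^B\bigr)\nabla_B{}^D\xi^C \, .
\end{align*}
The hypothesis in its two forms collapses the first term to $(\xi^A\nabla_{DA}\alpha^D)\,\xi^C + \alpha_D\alpha^D\,\xi^C$ and the second to $\alpha_D\alpha^D\,\xi^C$; the quadratic pieces in $\alpha$ cancel, leaving $\xi^A\xi^B\,\Box_{AB}\xi^C = (\xi^A\nabla_{DA}\alpha^D)\,\xi^C$, which is manifestly proportional to $\xi^C$ and thus killed by $\xi_C$. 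This yields $\Phi_{ABCD}\xi^A\xi^B\xi^C\xi^D = 0$. The only real obstacle is bookkeeping of index positions and signs; conceptually the argument is the manifestly invariant shadow of the Newman-Penrose computation used in the text-body proof of Proposition \ref{prop-integrability-cond}, where $\Phi_0=0$ is simply read off from \eqref{Ricci000100} after setting $\kappa=\rho=0$.
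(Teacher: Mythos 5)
Your proof is correct: the hypothesis packages as $\xi^B\nabla_{AB}\xi^C=\alpha_A\,\xi^C$, the Leibniz expansion of $\xi^A\xi^B\Box_{AB}\xi^C$ then collapses to a multiple of $\xi^C$ (the two $\alpha_D\alpha^D$ terms cancelling), and contracting the curvature side of the Ricci identity \eqref{eq-commutator-spinor} with $\xi^A\xi^B\xi^D\xi_C$ leaves only $\Phi_{ABCD}\xi^A\xi^B\xi^C\xi^D$, the scalar-curvature part of $X_{ABCD}$ dying on factors of $\xi_A\xi^A=0$. This is essentially the paper's own argument in invariant dress: Proposition \ref{prop-integrability-cond} is proved by setting $\kappa=\rho=0$ in the Newman--Penrose component \eqref{Ricci000100} of the very same Ricci identity and reading off $\Phi_0=0$, while the appendix states Proposition \ref{prop-int-cond-spinor} without a written proof, so your computation supplies precisely the omitted frame-free details.
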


Our first aim is to reformulate the obstruction to the existence of a co-geodetic multiple principal null structure of the tracefree Ricci tensor. Proposition \ref{prop-obstruction-GS} gave conditions on the components of the Cotton tensor $A_{abc}$ and the derivative of the Ricci scalar $R$. It turns out that the spinorial formalism gives very concise expressions for conditions \eqref{eq-obstruction-II}, \eqref{eq-obstruction-III} and \eqref{eq-obstruction-N}. Indeed using equation \eqref{eq-Bianchi+Cotton}, we obtain
\begin{align*}
4\,  \xi^B \xi^C \xi^D  \nabla \ind{_A^E} \Phi _{BCDE} & =  A _{ABCD} \xi^B \xi^C \xi^D  + 3 \, \xi_A \xi^B \xi^C \nabla \ind{_{BC}} S \, , \\
4 \, \xi^C \xi^D \nabla \ind{_A^E} \Phi _{BCDE} & =  A _{ABCD} \xi^C \xi^D  + 2 \, \xi \ind{_A} \xi^C \nabla \ind{_{BC}}  S - \varepsilon \ind{_{AB}} \xi^C \xi^D \nabla \ind{_{CD}} S  \, , \\
4 \, \xi^D \nabla \ind{_A^E} \Phi _{BCDE} & = A _{ABCD} \xi^D  + \xi \ind{_A} \nabla \ind{_{BC}} S - 2 \, \varepsilon \ind{_{A(B}} \xi^D \nabla \ind{_{C)D}} S \, .
\end{align*}

We can now re-express Proposition \ref{prop-obstruction-GS} as
\begin{prop}\label{prop-obstruction}
Let $\xi^A$ be a spinor field on $(\mcM,\bm{g})$. Suppose $\xi^A$ satisfies
\begin{align}\tag{\ref{eq-strongly_foliating}}
\xi^B \xi^C \nabla _{AB} \xi_C & = 0 \, .
\end{align}
Then,
\begin{align}
\xi^B \xi^C \xi^D \Phi _{ABCD} & = 0 & \Longrightarrow & & \xi^B \xi^C \xi^D  \nabla \ind{_A^E} \Phi _{BCDE} & = 0 \, , \label{eq-obstruction-II-sp} \\
\xi^C \xi^D \Phi _{ABCD} & = 0 & \Longrightarrow & & \xi^C \xi^D  \nabla \ind{_A^E} \Phi _{BCDE} & = 0 \, , \label{eq-obstruction-III-sp}\\
\xi^D \Phi _{ABCD} & = 0 & \Longrightarrow & & \xi^D  \nabla \ind{_A^E} \Phi _{BCDE} & = 0 \, . \label{eq-obstruction-N-sp}
\end{align}
\end{prop}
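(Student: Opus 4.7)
The plan is to prove all three implications by a single differentiation argument, leveraging an elementary two-dimensional spinor factorisation to translate the algebraic speciality conditions into explicit forms for $\Phi_{ABCD}$. First I would record the lemma that a totally symmetric spinor $\psi_{A_1 \ldots A_n}$ satisfying $\xi^{A_1} \psi_{A_1 \ldots A_n} = 0$ must take the form $\psi_{A_1 \ldots A_n} = \lambda \, \xi_{A_1} \cdots \xi_{A_n}$ for some scalar $\lambda$; this is immediate by expanding in a spin dyad with first element $\xi^A$, whereupon the hypothesis kills every component of $\psi$ except the one with all indices equal to the second basis element. Applied to each of the three speciality hypotheses, this yields the explicit factorisations
\[
\Phi_{BCDE} \, \xi^B \xi^C = \Phi_2 \, \xi_D \xi_E \, , \qquad \Phi_{BCDE} \, \xi^B = \Phi_3 \, \xi_C \xi_D \xi_E \, , \qquad \Phi_{BCDE} = \Phi_4 \, \xi_B \xi_C \xi_D \xi_E \, ,
\]
in Petrov types II, III and N respectively, where $\Phi_2, \Phi_3, \Phi_4$ denote the relevant Newman--Penrose components.

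Next I would apply $\nabla_A{}^E$, contracting its upper index with the top slot of $\Phi_{BCDE}$, to each speciality condition, which is identically zero. Using the Leibniz rule and the total symmetry of $\Phi$, each target quantity $\xi^B \xi^C \xi^D \nabla_A{}^E \Phi_{BCDE}$, $\xi^C \xi^D \nabla_A{}^E \Phi_{BCDE}$, or $\xi^D \nabla_A{}^E \Phi_{BCDE}$ equals minus an integer multiple (namely $3$, $2$, or $1$) of a correction term of the shape $\Phi_{\cdots} \, \xi \cdots (\nabla_A{}^E \xi^D)$. Substituting the factorisations from the previous step, each correction reduces --- modulo the surviving free $\xi$-indices and the nonzero scalars $\Phi_k$ --- to the single scalar expression $\xi_D \xi_E \, \nabla_A{}^E \xi^D$.

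Finally, using $\xi_C = \xi^D \varepsilon_{DC}$ together with the covariant constancy of $\varepsilon$, one rewrites this as $-\xi^F \xi_D \, \nabla_{AF} \xi^D$, which is exactly (up to a sign) the co-geodetic hypothesis $\xi^B \xi^C \nabla_{AB} \xi_C = 0$ after raising an index. Consequently all three corrections vanish identically and the three implications \eqref{eq-obstruction-II-sp}--\eqref{eq-obstruction-N-sp} follow at once. The main difficulty is not conceptual but notational: the clean execution requires careful bookkeeping of signs and positions of $\varepsilon$-indices when shuffling between raised and lowered spinor indices, and one must verify that the coefficients match those appearing implicitly in Proposition \ref{prop-obstruction-GS} via the identity \eqref{eq-Bianchi+Cotton}. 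Everything else is a single application of Leibniz and the two-dimensional spinor factorisation lemma.
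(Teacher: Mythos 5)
Your proposal is correct and follows essentially the same route as the paper: differentiate each speciality condition with $\nabla\ind{_A^E}$, apply Leibniz, and use the two-dimensional factorisation $\Phi_{ABCD}\xi^C\xi^D = \phi\,\xi_A\xi_B$ (and its analogues) to reduce the correction term to a multiple of $\xi^B\xi^C\nabla_{AB}\xi_C$, which vanishes by \eqref{eq-strongly_foliating}. The only cosmetic differences are that you make the dyad-expansion lemma explicit and treat all three cases uniformly where the paper writes out only the type II case; note also that the scalars $\Phi_2,\Phi_3,\Phi_4$ need not be nonzero for this direction of the argument.
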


\begin{proof}
Assume $\xi^A$ satisfies \eqref{eq-strongly_foliating}. We first differentiate $\Phi _{ABCD} \xi^B \xi^C \xi^D = 0$ so that
\begin{align*}
0 = \nabla \ind{_A^E} \left( \Phi _{BCDE} \xi^B \xi^C \xi^D \right) & = \left( \nabla \ind{_A^E} \Phi _{BCDE} \right) \xi^B \xi^C \xi^D  + 3 \, \Phi _{BCDE} \left( \nabla \ind{_A^E} \xi^B \right) \xi^C \xi^D \, .
\end{align*}
The condition on $\Phi_{ABCD}$ can be rewritten as $\Phi _{ABCD} \xi^C \xi^D = \phi \, \xi_A \xi_B$ for some $\phi$. The second term then becomes $3 \, \phi \, \xi^B \xi^C \nabla \ind{_{AB}} \xi_C$, but this must vanish since $\xi^A$ satisfies \eqref{eq-strongly_foliating}. This proves \eqref{eq-obstruction-II-sp}.

The remaining cases are similar and left to the reader.
\end{proof}

\begin{rem}
Using the useful identity \eqref{eq-Bianchi+Cotton}, it is straightforward to see that the conditions on the RHS of \eqref{eq-obstruction-II-sp}, \eqref{eq-obstruction-III-sp} and \eqref{eq-obstruction-N-sp} are equivalent to the tensorial expression \eqref{eq-obstruction-II}, \eqref{eq-obstruction-III} and \eqref{eq-obstruction-N}.
\end{rem}

For conciseness, we combine the statements of Theorems \ref{thm-GS-II-gen}, \ref{thm-GS-III-gen} and \ref{thm-GS-N-gen}  into a single theorem.
\begin{thm}
Let $\xi^A$ be a spinor field on $(\mcM,\bm{g})$. Suppose $\xi^A$ satisfies any of the following conditions
\begin{enumerate}
\item $\xi^B \xi^C \xi^D \Phi _{ABCD} = 0$, $\xi^C \xi^D \Phi _{ABCD} \neq 0$, and $\xi^B \xi^C \xi^D  \nabla \ind{_A^E} \Phi _{BCDE}  = 0$; \label{cond1}
\item $\xi^C \xi^D \Phi _{ABCD} = 0$, $\xi^D \Phi _{ABCD} \neq 0$ and $\xi^C \xi^D  \nabla \ind{_A^E} \Phi _{BCDE}  = 0$;
\item $\xi^D \Phi _{ABCD} = 0$, $\Phi _{ABCD} \neq 0$ and $\xi^D  \nabla \ind{_A^E} \Phi _{BCDE}  = 0$.
\end{enumerate}
Then $\xi^A$ satisfies
\begin{align}\tag{\ref{eq-strongly_foliating}}
\xi^B \xi^C \nabla _{AB} \xi_C & = 0 \, .
\end{align}
\end{thm}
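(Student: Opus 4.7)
The strategy is the direct analogue, in reverse, of the argument used for Proposition \ref{prop-obstruction}. In each of the three cases I would start from an algebraic identity of the form $\xi^{B_1} \cdots \xi^{B_k} \Phi_{A B_1 \cdots B_k \cdots} = 0$ appropriate to the Petrov type, differentiate it, and then combine it with the hypothesised condition on $\nabla \Phi$ to isolate a relation purely on $\xi^A$ and its covariant derivative. The Petrov non-degeneracy assumption, together with the fact that the map $\xi^{(A_1} \cdots \xi^{A_k)} \otimes (\,\cdot\,) \mapsto \xi^{(A_1} \cdots \xi^{A_k)} (\,\cdot\,)$ is injective on symmetric spinors, will then force condition \eqref{eq-strongly_foliating}.

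Concretely, in Case (1), the type-II condition $\xi^B \xi^C \xi^D \Phi_{BCDE} = 0$ implies, by the non-degeneracy $\xi^C \xi^D \Phi_{ABCD} \ne 0$, a factorisation $\xi^C \xi^D \Phi_{BCDE} = \phi \, \xi_B \xi_E$ with $\phi \ne 0$. Applying $\nabla_A{}^E$ to the vanishing spinor $\xi^B \xi^C \xi^D \Phi_{BCDE}$, using the Leibniz rule and the symmetry of $\Phi_{BCDE}$, yields
\begin{align*}
0 & = 3\,(\nabla_A{}^E \xi^B)\, \xi^C \xi^D \Phi_{BCDE} + \xi^B \xi^C \xi^D\, \nabla_A{}^E \Phi_{BCDE} \, .
\end{align*}
The second term is zero by hypothesis, and the first becomes $3\phi\, \xi_E (\nabla_A{}^E \xi^B) \xi_B$. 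Then I use $\xi^B \xi_B = 0 \Rightarrow \xi_B \nabla \xi^B = -\xi^B \nabla \xi_B$, together with $\xi_E \varepsilon^{EF} = - \xi^F$, to rewrite this as $3 \phi\, \xi^B \xi^F \nabla_{AF} \xi_B$. Since $\phi \ne 0$, condition \eqref{eq-strongly_foliating} follows.

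Cases (2) and (3) are handled in exactly the same spirit. In Case (2), start from $\xi^C \xi^D \Phi_{BCDE} = 0$ (type III) and differentiate with $\nabla_A{}^E$; the non-degeneracy $\xi^D \Phi_{ABCD} \ne 0$ corresponds, after decomposing $\Phi_{ABCD} = \xi_{(A} \xi_B \xi_C \chi_{D)}$, to $\chi^D \xi_D \ne 0$, and the same index juggling leaves the co-geodetic condition. In Case (3), start from $\xi^D \Phi_{BCDE} = 0$ (type N), decompose $\Phi_{ABCD} = \lambda\, \xi_A \xi_B \xi_C \xi_D$ with $\lambda \ne 0$, differentiate and apply the derivative hypothesis; the same manipulation yields \eqref{eq-strongly_foliating}.

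The only step which requires real care is the conversion between $\nabla_A{}^E$ and $\nabla_{AB}$ via $\varepsilon$ (in particular keeping track of the identity $\xi_E \varepsilon^{EF} = - \xi^F$ and of the sign in $\xi_B \nabla \xi^B = - \xi^B \nabla \xi_B$), together with recognising that the contraction pattern $\xi_B \xi_E$ (or $\xi_B \xi_C \xi_E$ in Case (3)) multiplied by $\xi^B \xi^F \nabla_{AF} \xi_B = 0$ implies the desired equation. Everything else is purely algebraic, and no further appeal to the Bianchi identity \eqref{eq-Bianchi-spinor} or the identity \eqref{eq-Bianchi+Cotton} is needed, since the hypothesised derivative condition is already the ``curvature-free'' analogue of the consequent derivatives that appear in Proposition \ref{prop-obstruction}.
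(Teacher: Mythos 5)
Your proposal is correct and follows essentially the same route as the paper: differentiate the algebraic speciality condition with $\nabla_A{}^E$, kill the $\nabla\Phi$ term by hypothesis, use the non-degeneracy to factorise $\xi^C\xi^D\Phi_{BCDE}=\phi\,\xi_B\xi_E$ with $\phi\neq 0$, and read off \eqref{eq-strongly_foliating}. The index manipulations you flag (the signs in $\xi_B\nabla\xi^B=-\xi^B\nabla\xi_B$ and $\xi_E\varepsilon^{EF}=-\xi^F$) check out, and your treatment of cases (2) and (3) matches the paper's remark that they are analogous.
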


\begin{proof}
We assume the conditions given in case \ref{cond1}. We first differentiate $\Phi _{ABCD} \xi^B \xi^C \xi^D = 0$ so that
\begin{align*}
0 = \nabla \ind{_A^E} \left( \Phi _{BCDE} \xi^B \xi^C \xi^D \right) & = \left( \nabla \ind{_A^E} \Phi _{BCDE} \right) \xi^B \xi^C \xi^D  + 3 \, \Phi _{BCDE} \left( \nabla \ind{_A^E} \xi^B \right) \xi^C \xi^D \, .
\end{align*}
Since $\Phi_{ABCD}$ does not degenerate further with respect to $\xi^A$, we can write $\Phi _{ABCD} \xi^C \xi^D = \phi \, \xi_A \xi_B$ for some non-vanishing $\phi$. Hence,
\begin{align*}
0 & =  \xi^B \xi^C \xi^D  \nabla \ind{_A^E} \Phi _{BCDE}  + 3 \, \phi \, \xi^B \xi^C \nabla \ind{_{AB}} \xi_C \, .
\end{align*}
By assumption, the first term vanishes, and since $\phi$ is non-vanishing, we conclude $\xi^B \xi^C  \nabla \ind{_{AB}} \xi_C =0$.

We omit the proofs of the remaining cases, which are similar.
\end{proof}

Finally, Theorem \ref{thm-GS-hard} reads
\begin{thm}
Let $\xi^A$ be a spinor field on $(\mcM,\bm{g})$. Suppose $\xi^A$ satisfies
\begin{align}\tag{\ref{eq-strongly_foliating}}
\xi^B \xi^C \nabla _{AB} \xi_C & = 0 \, ,
\end{align}
and
\begin{align}\label{eq-diff-Ricci-cond}
\xi^B \xi^C \xi^D  \nabla \ind{_A^E} \Phi _{BCDE} & = 0 \, .
\end{align}
Then the tracefree Ricci tensor is algebraically special, i.e.
\begin{align*}
\Phi _{ABCD} \xi^B \xi^C \xi^D & = 0 \, ,
\end{align*}
\end{thm}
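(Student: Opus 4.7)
The plan is to mirror the Newman--Penrose argument used in the proof of Theorem \ref{thm-GS-hard}, but in manifestly invariant spinor language. First, Proposition \ref{prop-int-cond-spinor} applied to the co-geodetic spinor $\xi^A$ already forces $\Phi_{ABCD}\xi^A\xi^B\xi^C\xi^D = 0$. Since $\Phi_{ABCD}\xi^B\xi^C\xi^D$ is then a spinor annihilated by $\xi^A$, it must be proportional to $\xi_A$, so there exists a scalar field $f$ with
\[
\Phi_{ABCD}\,\xi^B\xi^C\xi^D \;=\; f\,\xi_A,
\]
and the task reduces to showing $f \equiv 0$.

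Next, I would differentiate this relation by applying $\nabla_F{}^A$. The term $(\nabla_F{}^A \Phi_{ABCD})\,\xi^B\xi^C\xi^D$ is rewritten via the useful identity \eqref{eq-Bianchi+Cotton} (after renaming a dummy index so that one recovers the Bianchi--Cotton combination contracted with $\xi^B\xi^C\xi^D$); by the differential hypothesis \eqref{eq-diff-Ricci-cond} this contribution vanishes identically. One is then left with
\[
3\,\Phi_{ABCD}\,\xi^B\xi^C\,\nabla_F{}^A\xi^D \;=\; \xi_A\,\nabla_F{}^A f \,+\, f\,\nabla_F{}^A\xi_A.
\]
Decomposing the symmetric $2$-spinor $\Phi_{ABCD}\xi^C\xi^D$ in the form $T_1\,\xi_A\xi_B - f\,(\xi_A\eta_B + \eta_A\xi_B)$, where $\eta^A$ is any auxiliary spinor with $\xi^A\eta_A = 1$, and invoking the co-geodetic condition \eqref{eq-strongly_foliating}, one checks that the $T_1$ contribution drops out since the relevant contraction $\xi^A\xi^C\nabla_{FA}\xi_C$ vanishes by hypothesis. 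What survives is an equation that expresses the two independent components of $\nabla_F{}^A f$ (the $D$- and $\delta$-derivatives of $f$) in terms of $f$ times certain components of $\nabla\xi$.

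To close the argument, a further differentiation is needed: acting on the resulting identity by $\xi^G\nabla_G{}^F$ (the spinorial analogue of the commutator combination $\delta D - D \delta$ appearing in the proof of Theorem \ref{thm-GS-hard}) produces a commutator of covariant derivatives on $\xi^A$, which via the Ricci identity \eqref{eq-commutator-spinor} re-introduces a factor of $\Phi_{ABCD}$. Using $\Phi_{ABCD}\xi^B\xi^C\xi^D = f\,\xi_A$ once more, together with the co-geodetic condition and one further application of \eqref{eq-Bianchi+Cotton} and \eqref{eq-diff-Ricci-cond}, all remaining first-order terms cancel and one is left with an identity of the shape $c\,f^2 = 0$ for a fixed nonzero numerical constant $c$ -- the spinorial counterpart of the step $40\,(\Phi_1)^2 = 0$ in the proof of Theorem \ref{thm-GS-hard}. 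The main obstacle will be precisely this final bookkeeping: keeping track of the symmetric spinor contractions so that the curvature from the commutator collapses into a clean multiple of $f^2$ with a nonzero coefficient rather than collapsing trivially. Once this is established, $f \equiv 0$ follows, which is exactly algebraic speciality of $\Phi_{ABCD}$ with $\xi^A$ as the multiple principal spinor.
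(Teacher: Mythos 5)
Your proposal follows essentially the same strategy as the paper's spinorial proof: use Proposition \ref{prop-int-cond-spinor} to write $\Phi_{ABCD}\xi^B\xi^C\xi^D = f\,\xi_A$, differentiate this relation, kill the $(\nabla\Phi)\xi\xi\xi$ term with the hypothesis \eqref{eq-diff-Ricci-cond} (via the identity \eqref{eq-Bianchi+Cotton}), and then apply a second $\xi$-contracted derivative so that the spinor Ricci identity \eqref{eq-commutator-spinor} re-introduces $\Phi_{ABCD}$ and closes the system. Your first differentiated identity is exactly the paper's \eqref{eq-diff-not-alg-sp}, and your observation that the $T_1\,\xi_A\xi_B$ part of $\Phi_{ABCD}\xi^C\xi^D$ drops out by \eqref{eq-strongly_foliating} is correct.

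Where you diverge is the endgame, and that is also where the write-up stops short of a proof. You propose to land on an identity $c\,f^2=0$ with $c\neq 0$ by direct bookkeeping, in analogy with the step $40\,(\Phi_1)^2=0$ in the Newman--Penrose proof of Theorem \ref{thm-GS-hard}; but you explicitly defer the verification that $c\neq 0$, and that verification is the entire content of the theorem --- if the coefficient vanished, nothing would follow. The paper sidesteps the coefficient chase with a slightly different device: assuming $\phi\neq 0$, it divides through and recasts the first-order relation as $\xi^B\nabla_{AB}\ln\phi = \alpha_A$ with $\alpha_A := 6\,\eta_A - 4\,\nabla_{AB}\xi^B$ (equation \eqref{eq-potential-eq}, where $\eta_A$ is defined by $\xi^E\nabla_{AE}\xi_B=\eta_A\xi_B$ as in \eqref{eq-conseq-st-fol}); the consistency condition \eqref{eq-potential-cond} for local integrability of this equation is then shown, after commuting derivatives and using $\xi^B\square_{BC}\xi^C=0$, to reduce to $\xi^B\nabla_{BA}\eta^A=0$, whence the chain $0=\xi^B\nabla_{BC}\bigl(\eta^C\xi_A\bigr)=\xi^B\xi^D\nabla_{BC}\nabla\ind{_D^C}\xi_A=\Phi_{ABCD}\xi^B\xi^C\xi^D$ yields the contradiction with no numerical coefficient to track. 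Your route is viable --- it is the frame computation of Theorem \ref{thm-GS-hard} rewritten invariantly --- but as it stands the decisive non-cancellation has not been carried out, so the argument is incomplete; either perform that computation explicitly or switch to the integrability-condition argument, which delivers the conclusion once $\xi^B\nabla_{BA}\eta^A=0$ is established.
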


\begin{proof}
Assume $\xi^A$ satisfies \eqref{eq-strongly_foliating}. Then
\begin{itemize}
\item we can write
\begin{align}\label{eq-conseq-st-fol}
\xi^E \nabla_{AE} \xi_B & = \eta_A \xi_B \, , &
\xi^C \nabla_{AB} \xi_C & = \lambda \, \xi_A \xi_B \, .
\end{align}
for some $\eta_A$ and $\lambda$. Using the identity $\xi_C \nabla_{AB} \xi^C - \xi_A \nabla _{CB} \xi^C = - \xi^C \nabla_{CB} \xi_A$ tells us that
\begin{align}\label{eq-conseq-st-fol2}
\lambda \, \xi_A = \eta_A - \nabla_{AB} \xi^B \, .
\end{align}
\item by Proposition \ref{prop-int-cond-spinor}, the tracefree Ricci tensor satisfies
\begin{align}\label{eq-not-alg-sp}
\phi \, \xi_A & = \Phi _{ABCD} \xi^B \xi^C \xi^D 
\end{align}
for some function $\phi$.
\end{itemize}
Take the covariant derivative of \eqref{eq-not-alg-sp} and use the Leibnitz rule to get
\begin{align}\label{eq-diff-not-alg-sp}
\xi_E \nabla \ind{_A^E} \phi + \phi \nabla \ind{_A^E} \xi_E & = 3 \, \Phi _{BCDE} \left( \nabla \ind{_A^E} \xi^B \right) \xi^C \xi^D \, ,
\end{align}
where we have made use of the curvature assumption \eqref{eq-diff-Ricci-cond}.

We shall now suppose that $\phi$ does not vanish, and divide \eqref{eq-diff-not-alg-sp} through by $\phi$. Then
using \eqref{eq-strongly_foliating}, \eqref{eq-int-cond-spinor}, \eqref{eq-conseq-st-fol} and \eqref{eq-conseq-st-fol2} yields
\begin{align}\label{eq-potential-eq}
\xi^B \nabla_{AB} \ln \phi = 6 \eta_A - 4 \nabla_{AB} \xi^B =: \alpha_A \, .
\end{align}
The consistency condition for \eqref{eq-potential-eq} to be locally integrable can be obtained by applying $\xi^B \nabla_{AB}$ to \eqref{eq-potential-eq} and commuting the derivatives: we find
\begin{align}\label{eq-potential-cond}
\eta^A \alpha_A & = \xi^B \nabla \ind{_B^A} \alpha_A \, .
\end{align}
We proceed by checking that \eqref{eq-potential-cond} is indeed satisfied. Plugging the definition of $\alpha_A$ in the RHS of \eqref{eq-potential-eq} into \eqref{eq-potential-cond} yields
\begin{align}\label{eq-potential-cond2}
- 6 \, \xi^A \nabla_{AB} \eta^B + 4 \, \xi^A \nabla_{AB} \nabla \ind{^B_C} \xi^C & = - 4 \, \eta^A \nabla_{AB} \xi^B \, .
\end{align}
By commuting the covariant derivatives, the second term on the LHS of \eqref{eq-potential-cond2} becomes
\begin{align*}
\xi^A \nabla_{AB} \nabla \ind{^B_C} \xi^C & = - \xi^A \nabla_{CB} \nabla \ind{^B_A} \xi^C  + 2\, \xi^B \square_{BC} \xi^C \\
& = - \nabla_{CB} \left(  \xi^A \nabla \ind{^B_A} \xi^C  \right) + \left(   \nabla_{CB} \xi^A \right) \left( \nabla \ind{^B_A} \xi^C  \right) \, ,
\end{align*}
where we have made use of the fact that $\xi^B \square_{BC} \xi^C = 0$ in the first line, and the Leibnitz rule in the second line. The last term in the second line vanishes by symmetry consideration. Hence, using the definition of $\eta^A$ in \eqref{eq-conseq-st-fol}, we are left with $\xi^A \nabla_{AB} \nabla \ind{^B_C} \xi^C = - \nabla_{CB} \left(  \eta^B \xi^C  \right)$, which on substitution into \eqref{eq-potential-cond2} leads to $\xi^B \nabla_{BA} \eta^A = 0$. But now observe that
\begin{align*}
0 & = \xi_A \xi^B \nabla_{BC} \eta^C = \xi^B \nabla_{BC} \left( \eta^C \xi_A \right) = \xi^B \nabla_{BC} \left( \xi^D \nabla \ind{_D^C} \xi_A \right) = \xi^B \xi^D \nabla_{BC} \nabla \ind{_D^C} \xi_A = \Phi \ind{_{ABCD}} \xi^B \xi^C \xi^D \, ,
\end{align*}
which shows that $\Phi_{ABCD}$ is algebraically special in contradiction to our assumption that $\phi$ is non-vanishing. Hence the result.
\end{proof}

\bibliography{biblio}
\bibliographystyle{ieeetr}

\end{document}